\documentclass[11pt,reqno]{amsart}
\usepackage[utf8]{inputenc}
\usepackage{lmodern} 
\usepackage{amsmath,amssymb,amsfonts,amsthm,latexsym}
\usepackage{graphicx,multirow,enumerate}
\usepackage{graphicx}
\usepackage{tikz}
\usetikzlibrary{positioning, arrows.meta}
\usepackage{booktabs} 
\usepackage{float}
\usepackage [numbers,sort&compress]{natbib}
\usepackage{caption}
\usepackage{array}      
\usepackage{cellspace}  
\usepackage{mathrsfs}   
\usepackage[all]{xy}    
\usepackage{color,xcolor} 
\usepackage{cancel} 
\usepackage{url}        
\usepackage{longtable}
\usepackage{hyperref}
\theoremstyle{plain}
\newtheorem{thm}{Theorem}[section]
\newtheorem{lem}[thm]{Lemma}
\newtheorem{prop}[thm]{Proposition}
\newtheorem{cor}[thm]{Corollary}
\theoremstyle{definition}
\newtheorem{ex}[thm]{Example}
\newtheorem{defn}[thm]{Definition}
\newtheorem{rem}[thm]{Remark}

\definecolor{lightyellow}{rgb}{1, 1, 0.8} 
\usepackage{float}
\usepackage{placeins}
\usepackage{graphicx}

\begin{document}
\title[]{ $A$-Generalized Hessian pre-Lie algebras and $A$-Generalized Yang--Baxter Equations}
\author{Yining Sun}
	\address{Y. Sun: School of
		Mathematics and Statistics, Northeast Normal
		University, Changchun 130024, China}
	\email{yiningsun@nenu.edu.cn}
	
	\author{Zeyu Hao}
	\address{Z. Hao: School of
		Mathematics and Statistics, Northeast Normal
		University, Changchun 130024, China}
	\email{haozeyu@nenu.edu.cn}
	
	\author{Ziyi Zhang}
	\address{Z. Zhang: School of
		Mathematics and Statistics, Northeast Normal
		University, Changchun 130024, China}
	\email{zhangziyi@nenu.edu.cn}
	
	\author{Liangyun Chen$^{*}$}
	\address{L. Chen: School of
		Mathematics and Statistics, Northeast Normal
		University, Changchun 130024, China}
	\email{chenly640@nenu.edu.cn}
\begin{abstract}
Inspired by the problem of constructing ($\omega$-)pre-Lie algebra structures on the dual space of a pre-Lie algebra, we introduce the \(A\)-generalized Yang--Baxter equation as a generalization of the Yang--Baxter equation of pre-Lie algebras. We study its symmetric solutions through
\(A\)-generalized Hessian pre-Lie algebras and split these solutions
into two types. We further consider factorizable solutions of this equation and establish a one-to-one correspondence between them and generalized quadratic Rota--Baxter pre-Lie algebras of nonzero weight. By studying the structure of these algebras, we find all factorizable solutions. Finally, we study the structure of \(A\)-generalized Hessian pre-Lie algebras. In particular, we obtain a structural description via central and double extensions and classify low-dimensional non-trivial \(A\)-generalized Hessian pre-Lie algebras.

		\textit{Key words: Yang-Baxter equation, Hessian pre-Lie algebra, double extension, $\omega$-pre-Lie algebra} 
        
       \textit{} 
		
	   \end{abstract}
	
	\thanks{*Corresponding author.}
	
	\thanks{This work is supported by NNSF of China (No. 12271085)}

	\maketitle

    \section{Introduction}
In \cite{Tah1979}, the classical Yang--Baxter equation was introduced
in connection with inverse scattering theory. Subsequently, it has played
an important role in many areas, including integrable systems, quantum
groups, and quantum field theory \cite{Etingof1998}. Since Baxter's work \cite{Baxter1960} and Rota's subsequent development \cite{Rota1969}, Rota--Baxter operators have been widely studied and have found important applications in mathematical physics \cite{EbrahimiFardGuo2007}, number theory \cite{EbrahimiFardGuo2008}, and noncommutative geometry \cite{ConnesKreimer1998}. In recent years, Yang--Baxter equations, Rota--Baxter operators, and their various generalizations have been studied extensively. The Yang--Baxter equation of pre-Lie algebras appears naturally in coboundary pre-Lie bialgebras and serves as the pre-Lie analogue of the classical Yang--Baxter equation \cite{Bai2008}. 

\(\omega\)-Lie algebras were introduced by Nurowski as deformations of Lie
algebras by means of a two-form \cite{Nurowski2007}, and their structure theory was developed by Zusmanovich \cite{Zusmanovich2010}. Subsequent work on \(\omega\)-Lie (super)algebras has investigated their low-dimensional classifications \cite{ChenLiuZhang2014,ChenZhang2017,ChenNi2023,ChenL2021,ChenL2018},
derivations \cite{ChenZhangZhangZhuang2018,Oubba2024,ChenRenShanZhang2025GD,Oubba2025,J. Zhou250522}, representations \cite{ChenL2018,R. Zhang2021}, and Rota--Baxter operators \cite{ChenRenShanZhang2026RB}. The corresponding notion of an \(\omega\)-pre-Lie algebra was introduced by Chen \cite{ChenWu2023}. Its commutator naturally defines an \(\omega\)-Lie algebra. Recent developments include low-dimensional classifications \cite{ChenNiYu2026FourDim,ChenWu2023}and structural results \cite{ChenNiYu2025Structure}.

How to equip the dual space of a given algebra with the same type of
algebraic structure is a classical question. It is well known that
skew-symmetric solutions of the classical Yang--Baxter equation of a Lie
algebra induce Lie algebra structures on its dual space. Similarly,
symmetric solutions of the Yang--Baxter equation of a pre-Lie algebra induce pre-Lie algebra structures on its dual space. In fact, (skew-)symmetric solutions of analogues of the classical Yang--Baxter equation often give the corresponding algebraic structures on dual spaces, for instance in the case of Leibniz algebras \cite{Barreiro2016,Tang2022},3-Lie algebras \cite{BaiGuoSheng2019} and conformal algebras \cite{HongBai2020} and so on. 

We observe that the construction of pre-Lie algebra structures on dual spaces is not restricted to solutions of the classical Yang--Baxter
equation. Based on this observation, we introduce the \(A\)-generalized
Yang--Baxter equation. More precisely, for a fixed element \(u\in \operatorname{Ann}_{(\mathrm R)}(A)\), we define the \(u\)-generalized Yang--Baxter equation and prove that its solutions induce ($\omega$-)pre-Lie algebra structures on the dual spaces. 

One of the central aims of this paper is to construct solutions
of the \(A\)-generalized Yang--Baxter equation and explain their algebraic and geometric meanings. For an arbitrary pre-Lie algebra, symmetric
solutions are characterized by \(A\)-generalized relative Rota--Baxter
operators associated with the coregular representation. This gives a generalized version of Bai's result \cite{Bai2008}. We also introduce \(A\)-generalized Hessian pre-Lie algebras and prove that they give exactly the nondegenerate symmetric solutions. This yields a division of symmetric solutions into two classes: type 1 solutions correspond to \(A\)-generalized Hessian pre-Lie subalgebras, while type 2 solutions are described by ordinary Hessian pre-Lie algebras together with an annihilator element.
 
We also study non-symmetric solutions, especially factorizable solutions.
Such solutions are connected with factorizable pre-Lie bialgebras
and Rota--Baxter operators of nonzero weight \cite{WangBaiLiuSheng2024}. In recent years, factorizable solutions have been widely studied for
several algebraic structures, including Lie algebras
\cite{Goncharov2017,Goncharov2019,Lang2023}, Leibniz algebras
\cite{LiuGuilai2025}, antisymmetric infinitesimal algebras
\cite{ShengWang2023factorizable}, and related algebraic structures. We construct and interpret factorizable solutions of the \(A\)-generalized Yang--Baxter equation in terms of \(A\)-generalized quadratic Rota--Baxter pre-Lie algebras of nonzero weight. Moreover, we establish a one-to-one correspondence between \(A\)-generalized quadratic Rota--Baxter pre-Lie algebras of nonzero weight and \(A\)-generalized Rota--Baxter symplectic Lie algebras of nonzero weight. By studying the structure of the latter, we show that such non-trivial algebraic structures exist only on two-dimensional abelian Lie algebras. Consequently, all factorizable solutions are obtained.

Finally, we study \(A\)-generalized Hessian pre-Lie algebras in detail. We observe that these algebras have a natural geometric background
\cite{Osipov2023}. Using central extension and double extension techniques, we give a complete structural description of them and show that they are precisely certain extensions of classical Hessian pre-Lie algebras. Moreover,
we classify low-dimensional \(A\)-generalized Hessian left-symmetric
algebras.

To clarify the organization of the paper, we include the following diagrams,
which summarize the main components of our work and their interrelations.

\begin{figure}[H]
    \centering
    \begin{tikzpicture}[
        node distance=1.5cm and 2.5cm,
        box/.style={
            draw=black,       
            rectangle,        
            thick,            
            minimum width=4cm, 
            minimum height=1cm, 
            align=center,     
            font=\normalsize  
        },
        arrow/.style={
            ->,               
            >=Stealth,        
            thick             
        },
        darrow/.style={
            <->,             
            >=Stealth,
            thick
        }
    ]

    \node[box] (n3) {\(A\)-generalized relative\\
Rota--Baxter operators \\associated with\\ the coregular representation};
    
    \node[box, below=of n3] (n5) {Symmetric solutions \\ of the $A$-generalized \\ Yang-Baxter equation};

    \node[box, below=of n5] (n10) {Hessian pre-Lie algebra pairs};

    \node[box, left=of n3] (n2) {\(A\)-generalized Hessian \\ pre-Lie algebras};
    
    \node[box, below=of n2] (n1) {Central extension and \\double extension of \\Hessian pre-Lie algebras};

    \draw[darrow] (n1) -- node[right] {Thm \ref{thm:nonisotropic-u-construction},\ref{thm:isotropic-u-construction}} (n2);

    \draw[darrow] (n1) -- node[left] {Prop \ref{thm:nonisotropic-u-characterization},\ref{thm:isotropic-u-characterization}} (n2);

    \draw[arrow] (n2) -- node[above] {Prop \ref{prop:generalized-Hessian-O-operator}} (n3);
    
    \draw[darrow] (n3) -- node[right] {Thm \ref{thm:generalized-O-operator-S-equation}} (n5);

     \draw[darrow] (n5) -- node[right] {Thm \ref{prop:type2-LSA}} (n10);

     \draw[darrow] (n5) -- node[left] {Type \(2\)} (n10);

    \draw[darrow] (n5) -- node[below=0.2] {Thm \ref{prop:type1-LSA}} (n2);
    
    \draw[darrow] (n5) -- node[above=0.2] {Type \(1\)} (n2);

    \end{tikzpicture}
\end{figure}

\begin{figure}[H]
    \centering
    \resizebox{0.98\textwidth}{!}{%
    \begin{tikzpicture}[
        node distance=1cm and 1.7cm,
        box/.style={
            draw=black,
            rectangle,
            thick,
            minimum width=3cm,
            minimum height=1cm,
            align=center,
            font=\normalsize
        },
        arrow/.style={
            ->,
            >=Stealth,
            thick
        },
        darrow/.style={
            <->,
            >=Stealth,
            thick
        }
    ]

    \node[box] (n3) {Factorizable \\solutions of the\\ \(A\)-generalized \\ Yang--Baxter equation};

    \node[box, left=of n3] (n2) {\(A\)-generalized \\quadratic
    Rota--Baxter\\ pre-Lie algebras of\\ nonzero weight};

    \node[box, left=of n2] (n4) {\(A\)-generalized Rota--Baxter\\
    symplectic Lie algebras\\
    of nonzero weight\\
    (exist only in the \(2\)-dimensional\\
    abelian case)
    (Thm~\ref{two-dimensionalabelian})};

    \draw[darrow] (n2) -- node[below] {Thm~\ref{thm:generalized-quadratic-RB-to-u-factorizable}} (n3);

    \draw[darrow] (n2) -- node[above] {Thm~\ref{thm:u-factorizable-to-generalized-quadratic-RB}} (n3);

    \draw[darrow] (n2) -- node[below] {Prop~\ref{thm:u-QRB-LSA-symplectic-Lie}} (n4);

    \end{tikzpicture}%
    }
\end{figure}

\FloatBarrier

The paper is organized as follows. In Section~\ref{22}, we recall how solutions of the classical Yang--Baxter equation give pre-Lie algebra
structures on the dual space of a pre-Lie algebra. We then introduce
the \(A\)-generalized Yang--Baxter equation and prove that its solutions can
similarly give \((\omega\)-)pre-Lie algebra structures on the dual space. In Section~\ref{33}, we study solutions of the \(A\)-generalized Yang--Baxter equation. We first characterize symmetric solutions on arbitrary
pre-Lie algebras in terms of \(A\)-generalized relative Rota--Baxter operators associated with the coregular representation. We then introduce \(A\)-generalized Hessian pre-Lie algebras and prove their equivalence with nondegenerate symmetric solutions. On this basis, symmetric solutions are divided into two types: type 1 solutions correspond to \(A\)-generalized Hessian pre-Lie algebras, whereas type 2 solutions are characterized by Hessian pre-Lie algebra pairs. In addition, we study non-trivial factorizable solutions and characterize them in terms of a concrete finite class of \(A\)-generalized quadratic Rota--Baxter pre-Lie algebras of nonzero weight. This gives all factorizable solutions. In Section~\ref{44}, we study \(A\)-generalized Hessian pre-Lie algebras in detail. We prove the structure theorem for these algebras and classify low-dimensional non-trivial generalized Hessian pre-Lie algebras over \(\mathbb C\).

Throughout this paper, all vector spaces are assumed to be finite-dimensional
over an algebraically closed field \(F\) of characteristic zero. For any vector space \(V\), the natural pairing between \(V^*\) and \(V\) is denoted by \(\langle f,x\rangle=f(x)\) for all \(f\in V^*\) and \(x\in V\).

\section{($\omega$-)pre-Lie algebra structures on the dual spaces of pre-Lie algebras}\label{22}
In this section, we recall how solutions of the classical Yang--Baxter
equation of a pre-Lie algebra \(A\) induce pre-Lie algebra structures on the dual space \(A^*\). This motivates the introduction of the \(A\)-generalized Yang--Baxter equation below. We then show that solutions of this generalized equation give rise to multiplicative (\(\omega\)-)pre-Lie algebra structures on \(A^*\). 

\begin{defn}\cite{Vinberg1963}
A vector space \(A\) endowed with a bilinear product \(\cdot\) is called a
pre-Lie algebra if
\begin{equation*}
        (x\cdot y)\cdot z-x\cdot (y\cdot z)
        =
        (y\cdot x)\cdot z-y\cdot (x\cdot z)
\end{equation*}
for all \(x,y,z\in A\). Equivalently, the associator \((x,y,z):=(x\cdot y)\cdot z-x\cdot (y\cdot z)\) is symmetric in the first two variables.
\end{defn}

\begin{defn}\cite{ChenWu2023}
Let \(A\) be a vector space endowed with a bilinear product \(\cdot\), and let \(\omega\) be a bilinear form on \(A\). The triple \((A,\cdot,\omega)\) is called an \(\omega\)-pre-Lie algebra if
\begin{equation*}
        (x\cdot y)\cdot z-x\cdot (y\cdot z)
        -(y\cdot x)\cdot z+y\cdot (x\cdot z)
        = \omega(x,y)z
\end{equation*}
for all \(x,y,z\in A\).
\end{defn}

\begin{rem}
Let \((A,\cdot,\omega)\) be an \(\omega\)-pre-Lie algebra and define
\([x,y]=x\cdot y-y\cdot x\) for all \(x,y\in A\). Then \((A,[\cdot,\cdot],\omega)\) is an \(\omega\)-Lie algebra \cite{Nurowski2007}, called the sub-adjacent \(\omega\)-Lie algebra of \((A,\cdot,\omega)\). It is denoted by \(A_{\mathrm c}\). 
\end{rem}

\begin{defn}
Let \((A,\cdot,\omega)\) be an \(\omega\)-pre-Lie algebra. It is called
multiplicative if there exists \(\theta\in A^*\) such that
\begin{equation*}
        \omega(x,y)=\theta(x\cdot y)-\theta(y\cdot x)
\end{equation*}
for all \(x,y\in A\). In this case, we called \((A,\cdot,\theta)\) is a multiplicative
\(\omega\)-pre-Lie algebra.
\end{defn}

\begin{rem}
Let \((A,\cdot,\theta)\) be a multiplicative \(\omega\)-left-symmetric
algebra, the sub-adjacent \(\omega\)-Lie algebra \(A_{\mathrm c}=(A,[\cdot,\cdot],\theta)\)
is a multiplicative \(\omega\)-Lie algebra \cite{Zusmanovich2010}.
\end{rem}

\begin{defn}\cite{Bai2008}
Let \((A,\cdot)\) be a pre-Lie algebra. For \(r=\sum_i x_i\otimes y_i\in A\otimes A\), define \([[r,r]]\in A^{\otimes 3}\) by
\begin{equation*}
\begin{aligned}
[[r,r]] = \sum_{i,j}\Big(
& x_i\cdot x_j\otimes y_j\otimes y_i
 - y_j\otimes x_i\cdot x_j\otimes y_i
 + x_j\otimes [x_i,y_j]\otimes y_i  \\
& - [x_i,y_j]\otimes x_j\otimes y_i
 - x_i\otimes x_j\otimes [y_i,y_j]
\Big),
\end{aligned}
\end{equation*}
where the brackets are taken in \(A_{\mathrm c}\). The equation \([[r,r]]=0\) is called the classical Yang--Baxter equation of the pre-Lie algebra \(A\).
\end{defn}

\begin{rem}
The classical Yang--Baxter equation of a pre-Lie algebra \(A\) is equivalent to the equation 
\begin{equation*}
[[r,r]] = r_{13} \cdot r_{12} - r_{23} \cdot r_{21} + [r_{23}, r_{12}] - [r_{13}, r_{21}] - [r_{13}, r_{23}] = 0
\end{equation*}
in the universal enveloping algebra $U(A_c)$, where
\begin{align*}
    r_{12} &= \sum_i x_i \otimes y_i \otimes 1, \quad r_{21} = \sum_i y_i \otimes x_i \otimes 1, \\
    r_{13} &= \sum_i x_i \otimes 1 \otimes y_i, \quad r_{23} = \sum_i 1 \otimes x_i \otimes y_i.
\end{align*}
\end{rem}

\begin{thm}\label{2.6}\cite{Bai2008}
Let \((A,\cdot)\) be a pre-Lie algebra and let \(r=\sum_i x_i\otimes y_i\in A\otimes A\). Define \(\alpha_r:A\to A\otimes A\) by
\begin{equation*}
        \alpha_r(x)
        =
        (L_x\otimes \operatorname{id}
        +\operatorname{id}\otimes \operatorname{ad}_x)r
        \quad \forall x\in A,
\end{equation*}
where \(L_x\) denotes left multiplication and \(\operatorname{ad}_x\) is taken in \(A_{\mathrm c}\). Define a product \(\circ\) on \(A^*\) by
\begin{equation*}
        \langle a\circ b,x\rangle
        =
        \langle a\otimes b,\alpha_r(x)\rangle
        \quad \forall a,b\in A^*,\ x\in A .
\end{equation*}
Then \((A^*,\circ)\) is a pre-Lie algebra if and only if
\begin{equation*}
\begin{aligned}
        Q(x)[[r,r]] + \sum_i \bigl(P(x\cdot x_i)-P(x)P(x_i)\bigr)
        (r-\sigma(r))\otimes y_i = 0
\end{aligned}
\end{equation*}
for all \(x\in A\), where \(\sigma:A\otimes A\to A\otimes A\) is the flip map,
\(Q(x)=L_x\otimes \operatorname{id}\otimes \operatorname{id}
+\operatorname{id}\otimes L_x\otimes \operatorname{id}
+\operatorname{id}\otimes \operatorname{id}\otimes \operatorname{ad}_x\), and
\(P(x)=L_x\otimes \operatorname{id}+\operatorname{id}\otimes L_x\).
\end{thm}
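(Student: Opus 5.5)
The plan is to dualize: $(A^*,\circ)$ is pre-Lie exactly when the associator $(a,b,c)_\circ-(b,a,c)_\circ$ vanishes for all $a,b,c\in A^*$. Since $\circ$ is the transpose of $\alpha_r$, this is equivalent to the vanishing of a cubic expression in $r$, evaluated at every $x\in A$, in $A^{\otimes 3}$. Pairing with $a\otimes b\otimes c$ gives
\begin{equation*}
\langle (a\circ b)\circ c,x\rangle=\langle a\otimes b\otimes c,(\alpha_r\otimes\operatorname{id})\alpha_r(x)\rangle,
\qquad
\langle a\circ(b\circ c),x\rangle=\langle a\otimes b\otimes c,(\operatorname{id}\otimes\alpha_r)\alpha_r(x)\rangle .
\end{equation*}
So the pre-Lie identity for $\circ$ is
\begin{equation*}
(\operatorname{id}-\sigma_{12})\bigl[(\alpha_r\otimes\operatorname{id})\alpha_r(x)-(\operatorname{id}\otimes\alpha_r)\alpha_r(x)\bigr]=0\quad\forall x\in A,
\end{equation*}
where $\sigma_{12}$ flips the first two tensor factors. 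The whole proof then consists of expanding this expression with $r=\sum_i x_i\otimes y_i$ and regrouping it into the two stated pieces.

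Next, I write each term explicitly. With $\alpha_r(x)=\sum_i x\cdot x_i\otimes y_i+x_i\otimes[x,y_i]$, the composite $(\alpha_r\otimes\operatorname{id})\alpha_r(x)$ is a double sum over $i,j$ of the terms
\begin{equation*}
(x\cdot x_i)\cdot x_j\otimes y_j\otimes y_i,\quad x_j\otimes[x\cdot x_i,y_j]\otimes y_i,\quad x_i\cdot x_j\otimes y_j\otimes[x,y_i],\quad x_j\otimes[x_i,y_j]\otimes[x,y_i].
\end{equation*}
The composite $(\operatorname{id}\otimes\alpha_r)\alpha_r(x)$ is computed similarly.

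In each resulting term I then apply the pre-Lie identity. It lets me rewrite expressions such as $(x\cdot x_i)\cdot x_j-x\cdot(x_i\cdot x_j)$ symmetrically in $x,x_i$, and it gives the Jacobi-type relation $L_{[x,y]}=[L_x,L_y]$, since $L$ is a representation of $A_{\mathrm c}$. The goal is to match terms against $Q(x)[[r,r]]$. Because $Q(x)$ acts as $L_x$ on the first two factors and $\operatorname{ad}_x$ on the third, expanding $Q(x)$ applied to each of the five terms of $[[r,r]]$ and using these relations should reproduce most of the terms of the antisymmetrized associator. This is exactly Bai's computation, so I can follow the derivation in \cite{Bai2008} and organize the terms by which tensor factor carries $x$.

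The main obstacle is the residual terms. These are the terms that do not assemble into $Q(x)[[r,r]]$, and they occur because $r$ is not assumed symmetric. I would isolate them by splitting $r$ into its symmetric and skew parts. For symmetric $r$ the residual should vanish identically, and what survives must be linear in $r-\sigma(r)$ in the first two factors, with $y_i$ in the third. Its coefficient should be the "defect of $P$ being multiplicative", namely $P(x\cdot x_i)-P(x)P(x_i)$. The identity $L_{x\cdot x_i}-L_xL_{x_i}=L_{x_i\cdot x}-L_{x_i}L_x$ is what makes the coefficient take exactly this form. I expect the delicate part to be bookkeeping the $\sigma_{12}$-antisymmetrization correctly so that the cross terms cancel. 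Both directions of the equivalence then follow immediately, because the expansion is an identity in $A^{\otimes3}$ and nondegeneracy of the pairing gives the converse.
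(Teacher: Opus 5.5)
Your plan is correct and follows the paper's route. The paper cites this result from Bai rather than proving it, but in Lemma~\ref{lem:u-Jacobi-alpha} and Theorem~\ref{2.12} it does exactly what you propose: it reduces the pre-Lie identity on $A^*$ to $J^0_{\alpha_r}(x)=(\operatorname{id}-\sigma_{12})\bigl[(\alpha_r\otimes\operatorname{id})\alpha_r(x)-(\operatorname{id}\otimes\alpha_r)\alpha_r(x)\bigr]=0$ and then invokes the identity $J^0_{\alpha_r}(x)=Q(x)[[r,r]]+\sum_i\bigl((P(x\cdot x_i)-P(x)P(x_i))(r-\sigma(r))\bigr)\otimes y_i$.

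When you carry out the expansion, you do not need to split $r$ into symmetric and skew parts. The terms with $\operatorname{ad}_x$ in the third factor match those of $Q(x)[[r,r]]$ directly, using the Jacobi identity of $A_{\mathrm c}$ for the $[[x,y_i],y_j]$ terms. All remaining terms match $Q(x)[[r,r]]$ plus the $P$-term using only the left-symmetry of the associator, $(x,y_j,x_i)=(y_j,x,x_i)$.
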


\begin{rem}
If \(r=\sigma(r)\), then \([[r,r]]=0\) is equivalent to
\begin{equation*}
        -r_{12}\cdot r_{13}
        +r_{12}\cdot r_{23}
        +[r_{13},r_{23}]
        =0 .
\end{equation*}
This equation is called the classical \(S\)-equation of the pre-Lie algebra \(A\) \cite{Bai2008}.
\end{rem}

Symmetric solutions of the classical \(S\)-equation provide an immediate class for which the condition in Theorem~\ref{2.6} is automatically satisfied.

\begin{cor}\label{cor:symmetric-S-dual-LSA}
Let \((A,\cdot)\) be a pre-Lie algebra and let \(r=\sum_i x_i\otimes y_i\in A\otimes A\) be a symmetric solution of the classical \(S\)-equation. Let \(\alpha_r:A\to A\otimes A\) and the product \(\circ\) on \(A^*\) be defined as in Theorem~\ref{2.6}. Then \((A^*,\circ)\) is a pre-Lie algebra.
\end{cor}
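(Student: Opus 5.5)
This corollary follows directly from Theorem~\ref{2.6}, applied to the given symmetric solution \(r\). By that theorem, \((A^*,\circ)\) is a pre-Lie algebra if and only if, for every \(x\in A\),
\begin{equation*}
Q(x)[[r,r]] + \sum_i \bigl(P(x\cdot x_i)-P(x)P(x_i)\bigr)(r-\sigma(r))\otimes y_i = 0 .
\end{equation*}
The plan is to show that each of the two summands vanishes separately.

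For the second summand, symmetry of \(r\) means \(r=\sigma(r)\), so \(r-\sigma(r)=0\). Every term \(\bigl(P(x\cdot x_i)-P(x)P(x_i)\bigr)(r-\sigma(r))\otimes y_i\) is obtained by applying a linear operator to \(0\) and then tensoring with \(y_i\). Hence each term is zero, and so is the sum, no matter what the operators \(P(\cdot)\) are.

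For the first summand, we use the remark preceding the corollary: when \(r=\sigma(r)\), the equation \([[r,r]]=0\) is equivalent to the classical \(S\)-equation \(-r_{12}\cdot r_{13}+r_{12}\cdot r_{23}+[r_{13},r_{23}]=0\). Since \(r\) is assumed to solve the \(S\)-equation, we get \([[r,r]]=0\) in \(A^{\otimes 3}\). The operator \(Q(x)=L_x\otimes\operatorname{id}\otimes\operatorname{id}+\operatorname{id}\otimes L_x\otimes\operatorname{id}+\operatorname{id}\otimes\operatorname{id}\otimes\operatorname{ad}_x\) is linear on \(A^{\otimes 3}\), so \(Q(x)[[r,r]]=0\) for all \(x\).

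Both summands vanish for every \(x\in A\), so the criterion of Theorem~\ref{2.6} holds and \((A^*,\circ)\) is a pre-Lie algebra. The only non-formal ingredient is the equivalence, for symmetric \(r\), between \([[r,r]]=0\) and the \(S\)-equation. We take this from the remark, i.e.\ from \cite{Bai2008}. If it had to be rechecked, one would rewrite each term of \([[r,r]]\) using \(\sum_i x_i\otimes y_i=\sum_i y_i\otimes x_i\). This identifies \(r_{21}\) with \(r_{12}\) and allows the five terms to be regrouped into the three terms of the \(S\)-equation.
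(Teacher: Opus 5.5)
Your proposal is correct and is essentially the paper's own argument: the paper gives no separate proof, only the remark that for a symmetric solution of the classical $S$-equation the criterion of Theorem~\ref{2.6} holds automatically, since $r-\sigma(r)=0$ kills the second summand and the preceding remark gives $[[r,r]]=0$. Your optional recheck also works: for symmetric $r$, $[[r,r]]$ is exactly the negative of $-r_{12}\cdot r_{13}+r_{12}\cdot r_{23}+[r_{13},r_{23}]$.
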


Let \((A,\cdot)\) be a pre-Lie algebra and let \(r=\sum_i x_i\otimes y_i\in A\otimes A\). Define linear maps \(r_+,r_-:A^*\to A\) by
\[
        r_+(\xi)=\sum_i\langle \xi,x_i\rangle y_i,
        \qquad
        r_-(\xi)=\sum_i\langle \xi,y_i\rangle x_i .
\]
for all \(\xi\in A^*\). Then, for all \(\xi,\eta\in A^*\),
\[
\begin{aligned}
        \bigl\langle \xi,(r_+-r_-)(\eta)\bigr\rangle
        &=
        \sum_i \bigl(\langle \eta,x_i\rangle\langle \xi,y_i\rangle
        - \langle \eta,y_i\rangle\langle \xi,x_i\rangle
        \bigr)  \\
        &= \langle \eta\otimes \xi, r-\sigma(r)\rangle .
\end{aligned}
\]
Consequently, \(r\) is symmetric if and only if \(r_{+}-r_{-}=0\). The extremal case of non-symmetry occurs when \(r_{+}-r_{-}\) is a linear isomorphism.
\begin{defn}\cite{WangBaiLiuSheng2024}
Let \((A,\cdot)\) be a pre-Lie algebra. An element \(s\in A\otimes A\)
is called \((L,\operatorname{ad})\)-invariant if
\begin{equation*}
        (L_x\otimes \operatorname{id}
        +\operatorname{id}\otimes \operatorname{ad}_x)s=0
\end{equation*}
for all \(x\in A\). A solution \(r\in A\otimes A\) of the classical
Yang--Baxter equation of \(A\) is called factorizable if \(r-\sigma(r)\) is
\((L,\operatorname{ad})\)-invariant and \(r_+-r_-\) is a linear isomorphism.
\end{defn}

Factorizable solutions provide another natural class for which the condition in Theorem~\ref{2.6} is automatically satisfied. The key point is the following
lemma.

\begin{lem}\cite{WangBaiLiuSheng2024}\label{lem:invariant-skew-part}
Let \((A,\cdot)\) be a pre-Lie algebra and let \(r\in A\otimes A\).
If \(r-\sigma(r)\) is \((L,\operatorname{ad})\)-invariant, then
\begin{equation*}
        \bigl(P(x\cdot y)-P(x)P(y)\bigr)(r-\sigma(r))=0
\end{equation*}
for all \(x,y\in A\), where \(P\) is defined in Theorem~\ref{2.6}.
\end{lem}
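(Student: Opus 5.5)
The plan is to rewrite every operator acting on the first tensor factor as an operator acting on the second factor, using the invariance hypothesis, and then to check that the resulting single operator vanishes identically by the pre-Lie identity. Write \(s=r-\sigma(r)\). The hypothesis says that for every \(x\in A\),
\[
(L_x\otimes\operatorname{id})s=-(\operatorname{id}\otimes\operatorname{ad}_x)s .
\]
Since \(L_x\otimes\operatorname{id}\) commutes with \(\operatorname{id}\otimes\varphi\) for any \(\varphi\in\operatorname{End}(A)\), this lets us move operators across factors. For instance,
\[
(L_xL_y\otimes\operatorname{id})s=(L_x\otimes\operatorname{id})\bigl(-(\operatorname{id}\otimes\operatorname{ad}_y)s\bigr)=-(\operatorname{id}\otimes\operatorname{ad}_y)(L_x\otimes\operatorname{id})s=(\operatorname{id}\otimes\operatorname{ad}_y\operatorname{ad}_x)s .
\]
The symmetry \(\sigma(s)=-s\) is not needed for this argument.

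First expand the operator in the statement:
\[
P(x\cdot y)-P(x)P(y)=(L_{x\cdot y}-L_xL_y)\otimes\operatorname{id}+\operatorname{id}\otimes(L_{x\cdot y}-L_xL_y)-L_x\otimes L_y-L_y\otimes L_x .
\]
Next, transport each term to the second factor with the rule above.
\begin{itemize}
\item The first term gives \(-(\operatorname{id}\otimes(\operatorname{ad}_{x\cdot y}+\operatorname{ad}_y\operatorname{ad}_x))s\).
\item For the cross terms, write \(L_x\otimes L_y=(\operatorname{id}\otimes L_y)(L_x\otimes\operatorname{id})\), which acts on \(s\) as \(-\operatorname{id}\otimes L_y\operatorname{ad}_x\).
\item Similarly, \(L_y\otimes L_x\) acts on \(s\) as \(-\operatorname{id}\otimes L_x\operatorname{ad}_y\).
\end{itemize}
Altogether, \(\bigl(P(x\cdot y)-P(x)P(y)\bigr)s=(\operatorname{id}\otimes T)s\), where
\[
T=-\operatorname{ad}_{x\cdot y}-\operatorname{ad}_y\operatorname{ad}_x+L_{x\cdot y}-L_xL_y+L_x\operatorname{ad}_y+L_y\operatorname{ad}_x .
\]

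The main step is to show \(T=0\), which I expect to be a short direct computation. Evaluate \(T(z)\) for \(z\in A\) and group the terms in pairs:
\begin{align*}
-[x\cdot y,z]+(x\cdot y)\cdot z&=z\cdot(x\cdot y),\\
-[y,[x,z]]+y\cdot[x,z]&=[x,z]\cdot y=(x\cdot z)\cdot y-(z\cdot x)\cdot y,\\
-x\cdot(y\cdot z)+x\cdot[y,z]&=-x\cdot(z\cdot y).
\end{align*}
Summing these gives
\[
T(z)=(x\cdot z)\cdot y-x\cdot(z\cdot y)-(z\cdot x)\cdot y+z\cdot(x\cdot y)=(x,z,y)-(z,x,y),
\]
which vanishes because the associator is symmetric in its first two arguments. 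Hence \((P(x\cdot y)-P(x)P(y))(r-\sigma(r))=0\).

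The only real care needed is bookkeeping: keep the order of composition correct when commuting operators acting on different factors, and keep the sign conventions consistent with \(\operatorname{ad}_x=L_x-R_x\) in \(A_{\mathrm c}\).
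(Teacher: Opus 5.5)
Your proof is correct: the expansion of $P(x)P(y)$ with both cross terms, the rule $L_xL_y\otimes\operatorname{id}\mapsto\operatorname{id}\otimes\operatorname{ad}_y\operatorname{ad}_x$, the signs on $L_x\otimes L_y$ and $L_y\otimes L_x$, and the reduction $T(z)=(x,z,y)-(z,x,y)$ all check out. There is no in-paper proof to compare against. The paper states this lemma with a citation to \cite{WangBaiLiuSheng2024} and gives no argument. It then uses the lemma only to obtain Corollaries~\ref{cor:factorizable-dual-LSA} and~\ref{cor:u-factorizable-dual-omega-LSA} from Theorems~\ref{2.6} and~\ref{2.12}. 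Your direct computation is therefore a self-contained verification of something the paper takes on trust.

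Two features of your argument are worth keeping.

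\begin{enumerate}
\item You are right that only the $(L,\operatorname{ad})$-invariance of $s$ is used, not $\sigma(s)=-s$. So the identity holds for every invariant tensor in $A\otimes A$.
\item The computation has a more conceptual form. Invariance says $P(z)s=(\operatorname{id}\otimes R_z)s$ for every $z\in A$. Regrouping your operator gives $T=R_{x\cdot y}-R_yR_x-[L_x,R_y]$. This vanishes precisely because $(L,R,A)$ is the regular representation of $A$. In other words, the lemma is the representation axiom for $R$, transported to the second tensor factor.
\end{enumerate}
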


Therefore, we get the following corollary.

\begin{cor}\label{cor:factorizable-dual-LSA}
Let \((A,\cdot)\) be a pre-Lie algebra and let \(r=\sum_i x_i\otimes y_i\in A\otimes A\) be a factorizable solution of the classical Yang--Baxter equation of \(A\). Let \(\alpha_r:A\to A\otimes A\) and the product \(\circ\) on \(A^*\) be defined as in Theorem~\ref{2.6}. Then \((A^*,\circ)\) is a pre-Lie algebra.
\end{cor}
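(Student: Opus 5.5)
The plan is to apply Theorem~\ref{2.6} directly. We show that both summands in its criterion vanish for every \(x\in A\), so \((A^*,\circ)\) is a pre-Lie algebra.

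\textbf{First summand.} Since \(r\) is a factorizable solution of the classical Yang--Baxter equation, in particular \([[r,r]]=0\). Hence \(Q(x)[[r,r]]=0\) for all \(x\in A\).

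\textbf{Second summand.} Write \(\sum_i \bigl(P(x\cdot x_i)-P(x)P(x_i)\bigr)(r-\sigma(r))\otimes y_i\). For each fixed \(i\), the tensor factor \(y_i\) in the third slot is untouched by the operator \(P(x\cdot x_i)-P(x)P(x_i)\), which acts only on the first two slots. So the sum is
\[
\sum_i \Bigl(\bigl(P(x\cdot x_i)-P(x)P(x_i)\bigr)(r-\sigma(r))\Bigr)\otimes y_i .
\]
By factorizability, \(r-\sigma(r)\) is \((L,\operatorname{ad})\)-invariant. Lemma~\ref{lem:invariant-skew-part}, applied with \(y=x_i\), then gives \(\bigl(P(x\cdot x_i)-P(x)P(x_i)\bigr)(r-\sigma(r))=0\) for each \(i\). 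So every term vanishes.

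The criterion of Theorem~\ref{2.6} therefore holds identically in \(x\), and \((A^*,\circ)\) is a pre-Lie algebra.

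\textbf{Remarks on the argument.} The only point needing care is how the expression in Theorem~\ref{2.6} is parsed, namely that \(P(\cdot)\) acts on \(r-\sigma(r)\in A\otimes A\) before the tensor with \(y_i\) is formed. That is the reading in Bai's statement, so no obstacle is expected. The hypothesis that \(r_+-r_-\) is a linear isomorphism is not used; only the Yang--Baxter equation and the invariance of the skew part are needed.
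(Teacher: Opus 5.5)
Your proposal is correct and is essentially the paper's argument: the corollary is stated right after Lemma~\ref{lem:invariant-skew-part} as an immediate consequence of Theorem~\ref{2.6}. There $[[r,r]]=0$ kills the $Q(x)$-term, and the lemma applied with $y=x_i$ kills each summand of the second term. Your remark that the invertibility of $r_+-r_-$ is not actually used is also accurate.
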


\begin{defn}
For a pre-Lie algebra \((A,\cdot)\), define its right annihilator by
\begin{equation*}
        \operatorname{Ann}_{\mathrm R}(A)
        = \{a\in A\mid x\cdot a=0\ \forall x\in A\}
\end{equation*}
and its annihilator by
\begin{equation*}
        \operatorname{Ann}(A)
        =
        \{a\in A\mid a\cdot x=x\cdot a=0\ \forall x\in A\}.
\end{equation*}
\end{defn}

We next introduce the \(A\)-generalized Yang--Baxter equation of pre-Lie algebras, which extends the classical case.

\begin{defn}
Let \((A,\cdot)\) be a pre-Lie algebra, let \(u\in \operatorname{Ann}_{\mathrm R}(A)\), and let \(r=\sum_i x_i\otimes y_i\in A\otimes A\). Define \([[r,r]]_u\in A^{\otimes 3}\) by
\begin{equation}\label{eq:AYBE}
\begin{aligned}
[[r,r]]_u
=
\sum_{i,j}\Big(
& x_i\cdot x_j\otimes y_j\otimes y_i
 - y_j\otimes x_i\cdot x_j\otimes y_i
 + x_j\otimes [x_i,y_j]\otimes y_i  \\
& - [x_i,y_j]\otimes x_j\otimes y_i
 - x_i\otimes x_j\otimes [y_i,y_j]
\Big) + \sum_i \bigl( u\otimes x_i\otimes y_i
- x_i\otimes u\otimes y_i \bigr),
\end{aligned}
\end{equation}
where the brackets are taken in \(A_{\mathrm c}\). The equation \([[r,r]]_u=0\) is called the \emph{\(A\)-generalized Yang--Baxter equation} of the pre-Lie algebra \(A\). When \(u\in\operatorname{Ann}_{\mathrm R}(A)\) is fixed, we also call it the \emph{\(u\)-generalized Yang--Baxter equation}. In particular, if \(u\in\operatorname{Ann}(A)\), we use the same terminology.
\end{defn}

\begin{rem}
For \(u=0\), \eqref{eq:AYBE} reduces to the classical Yang--Baxter equation of the pre-Lie algebra \(A\). In this paper, we are mainly interested in the nontrivial case \(u\neq 0\). With the tensor notation introduced above, \([[r,r]]_u\) can be written as
\begin{equation*}
\begin{aligned}
        [[r,r]]_u
        = &\, r_{13}\cdot r_{12}
        -r_{23}\cdot r_{21}
        +[r_{23},r_{12}]
        -[r_{13},r_{21}]
        -[r_{13},r_{23}] + (u\otimes 1\otimes 1)r_{23}
        -(1\otimes u\otimes 1)r_{13}.
\end{aligned}
\end{equation*}
\end{rem}

For a linear map \(\partial:A\to A\otimes A\) and an element \(u\in A\), define \(J_\partial^u:A\to A^{\otimes 3}\) as follows: for all \(x\in A\),
\begin{equation*}
\begin{aligned}
J_\partial^u(x)
=
&(\partial\otimes \operatorname{id})\partial(x)
-(\operatorname{id}\otimes \partial)\partial(x)
+u\otimes \partial(x)                                      \\
&-(\sigma\otimes \operatorname{id})
\bigl((\partial\otimes \operatorname{id})\partial(x)
-(\operatorname{id}\otimes \partial)\partial(x)
+u\otimes \partial(x) \bigr).
\end{aligned}
\end{equation*}

\begin{lem}\label{lem:u-Jacobi-alpha}
Let \((A,\cdot)\) be a pre-Lie algebra, let \(u\in \operatorname{Ann}_{\mathrm R}(A)\), and let \(r=\sum_i x_i\otimes y_i\in A\otimes A\). For the map \(\alpha_r:A\to A\otimes A\) defined in Theorem~\ref{2.6}, we have
\begin{equation*}
\begin{aligned}
J_{\alpha_r}^u(x)
= Q(x)[[r,r]]_u
+ \sum_i \bigl( (P(x\cdot x_i)-P(x)P(x_i))(r-\sigma(r)) \bigr)\otimes y_i
\end{aligned}
\end{equation*}
for all \(x\in A\), where \(P\) and \(Q\) are defined in Theorem~\ref{2.6}.
\end{lem}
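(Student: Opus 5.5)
The plan is to reduce the statement to Bai's computation behind Theorem~\ref{2.6}. By linearity, $J_{\alpha_r}^u$ splits as $J_{\alpha_r}^0$ plus the $u$-dependent part, and $[[r,r]]_u$ splits as $[[r,r]]$ plus a $u$-dependent part. Bai's proof of Theorem~\ref{2.6} in \cite{Bai2008} rests on the identity
\[
J_{\alpha_r}^0(x)=Q(x)[[r,r]]+\sum_i \bigl((P(x\cdot x_i)-P(x)P(x_i))(r-\sigma(r))\bigr)\otimes y_i .
\]
Here $J^0$ is exactly the pre-Lie associator-difference of the dual product, read through $\alpha_r$. We take this identity as known; it is the content of that theorem. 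So the lemma reduces to the identity
\[
u\otimes\alpha_r(x)-(\sigma\otimes\operatorname{id})\bigl(u\otimes\alpha_r(x)\bigr)
=Q(x)\sum_i\bigl(u\otimes x_i\otimes y_i-x_i\otimes u\otimes y_i\bigr).
\]

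To prove this identity, first compute the left side. Since $\alpha_r(x)=\sum_i(x\cdot x_i\otimes y_i+x_i\otimes[x,y_i])$, the left side equals
\[
\sum_i\bigl(u\otimes x\cdot x_i\otimes y_i+u\otimes x_i\otimes[x,y_i]-x\cdot x_i\otimes u\otimes y_i-x_i\otimes u\otimes[x,y_i]\bigr).
\]

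Next apply $Q(x)=L_x\otimes\operatorname{id}\otimes\operatorname{id}+\operatorname{id}\otimes L_x\otimes\operatorname{id}+\operatorname{id}\otimes\operatorname{id}\otimes\operatorname{ad}_x$ to $\sum_i u\otimes x_i\otimes y_i$. This gives
\[
\sum_i\bigl(x\cdot u\otimes x_i\otimes y_i+u\otimes x\cdot x_i\otimes y_i+u\otimes x_i\otimes[x,y_i]\bigr).
\]
Applying $Q(x)$ to $\sum_i x_i\otimes u\otimes y_i$ gives
\[
\sum_i\bigl(x\cdot x_i\otimes u\otimes y_i+x_i\otimes x\cdot u\otimes y_i+x_i\otimes u\otimes[x,y_i]\bigr).
\]

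The hypothesis $u\in\operatorname{Ann}_{\mathrm R}(A)$ means $x\cdot u=0$ for all $x$, so the terms containing $x\cdot u$ vanish. The difference of the two expressions then matches the left side term by term.

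The only genuine obstacle is the $u=0$ identity, a long but standard expansion. If one does not want to cite it from Bai's proof, the plan is to expand $(\alpha_r\otimes\operatorname{id})\alpha_r(x)-(\operatorname{id}\otimes\alpha_r)\alpha_r(x)$ minus its $\sigma_{12}$-flip into sums over $i,j$. One then uses the pre-Lie identity in the first factor and the Jacobi identity of $A_{\mathrm c}$ in the third to regroup the terms. The terms that are not symmetric in $r$ collect into the $P(x\cdot x_i)-P(x)P(x_i)$ correction. The $u$-part needs only the computation above.
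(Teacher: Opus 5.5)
Your proposal is correct and follows essentially the same route as the paper. Like the paper, you split off $T_u=\sum_i\bigl(u\otimes x_i\otimes y_i-x_i\otimes u\otimes y_i\bigr)$, check that $u\otimes\alpha_r(x)-(\sigma\otimes\operatorname{id})\bigl(u\otimes\alpha_r(x)\bigr)=Q(x)T_u$ using $x\cdot u=0$, and invoke Bai's $u=0$ identity for $J_{\alpha_r}^0$, which the paper also cites without re-deriving.
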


\begin{proof}
Set
\begin{equation*}
        T_u = \sum_i \bigl(u\otimes x_i\otimes y_i - x_i\otimes u\otimes y_i\bigr),
\end{equation*}
so that \([[r,r]]_u=[[r,r]]+T_u\). By the definition of \(J_{\alpha_r}^u\), we have
\begin{equation*}
        J_{\alpha_r}^u(x)
        = J_{\alpha_r}^0(x)
        + u\otimes \alpha_r(x)
        -(\sigma\otimes \operatorname{id})
        \bigl(u\otimes \alpha_r(x)\bigr)
\end{equation*}
for all \(x\in A\). Since $\alpha_r(x) = \sum_i \bigl(
        (x\cdot x_i)\otimes y_i
        + x_i\otimes [x,y_i]\bigr),$
we get
\begin{equation*}
\begin{aligned}
&u\otimes \alpha_r(x)
-
(\sigma\otimes \operatorname{id})
\bigl(u\otimes \alpha_r(x)\bigr)                         \\
={}&
\sum_i
\Bigl(u\otimes (x\cdot x_i)\otimes y_i
+ u\otimes x_i\otimes [x,y_i]
- (x\cdot x_i)\otimes u\otimes y_i
- x_i\otimes u\otimes [x,y_i]
\Bigr).
\end{aligned}
\end{equation*}
On the other hand, \(u\in \operatorname{Ann}_{\mathrm R}(A)\) implies
\(x\cdot u=0\) for all \(x\in A\). Hence applying \(Q(x)\) to \(T_u\) gives
the same expression. Thus
\begin{equation*}
        J_{\alpha_r}^u(x)=J_{\alpha_r}^0(x)+Q(x)T_u .
\end{equation*}
Using the classical identity corresponding to the case \(u=0\),
\begin{equation*}
        J_{\alpha_r}^0(x)
        = Q(x)[[r,r]] +   \sum_i\bigl( (P(x\cdot x_i)-P(x)P(x_i))(r-\sigma(r))
        \bigr)\otimes y_i ,
\end{equation*}
and using \([[r,r]]_u=[[r,r]]+T_u\), we obtain the desired formula.
\end{proof}

\begin{thm}\label{2.12}
Let \((A,\cdot)\) be a pre-Lie algebra, let
\(u\in \operatorname{Ann}_{\mathrm R}(A)\), and let \(r=\sum_i x_i\otimes y_i\in A\otimes A\). Let \(\alpha_r:A\to A\otimes A\) be the map defined in Theorem~\ref{2.6}. Define a product \(\circ\) on \(A^*\) by
\begin{equation*}
        \langle a\circ b,x\rangle
        = \langle a\otimes b,\alpha_r(x)\rangle + \langle a,u\rangle\langle b,x\rangle
\end{equation*}
for all \(a,b\in A^*\) and \(x\in A\), and define a linear functional \(\theta^*:A^*\to F\) by
\begin{equation*}
        \theta^*(a)=\langle a,u\rangle
\end{equation*}
for all \(a\in A^*\). Then \((A^*,\circ,\theta^*)\) is a multiplicative \(\omega\)-pre-Lie algebra if and only if
\begin{equation*}
\begin{aligned}
        Q(x)[[r,r]]_u
        + \sum_i \bigl(P(x\cdot x_i)-P(x)P(x_i)\bigr)(r-\sigma(r))\otimes y_i
        = 0
\end{aligned}
\end{equation*}
for all \(x\in A\), where \(P\) and \(Q\) are defined in Theorem~\ref{2.6}.
\end{thm}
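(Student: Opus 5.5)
The plan is to dualize: the multiplicative $\omega$-pre-Lie identity on $(A^*,\circ,\theta^*)$ will be shown equivalent to $J_{\alpha_r}^u(x)=0$ for all $x\in A$, and Lemma~\ref{lem:u-Jacobi-alpha} then identifies $J_{\alpha_r}^u(x)$ with the displayed expression. The product $\circ$ is the transpose of the coproduct $\partial:=\alpha_r+u\otimes\operatorname{id}$, i.e.\ $\partial(x)=\alpha_r(x)+u\otimes x$. We have $\theta^*(a\circ b)-\theta^*(b\circ a)=\langle a\otimes b-b\otimes a,\partial(u)\rangle$. The required $\omega$ is therefore $\omega(a,b)=\langle a\otimes b,\partial(u)-\sigma\partial(u)\rangle$.

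First, I will compute the associator side. For $a,b,c\in A^*$ and $x\in A$, the quantity $\langle (a\circ b)\circ c-a\circ(b\circ c),x\rangle$ equals $\langle a\otimes b\otimes c,(\partial\otimes\operatorname{id})\partial(x)-(\operatorname{id}\otimes\partial)\partial(x)\rangle$. The $\omega$-pre-Lie condition is then, for all $x$,
\[
(1-\sigma\otimes\operatorname{id})\bigl((\partial\otimes\operatorname{id})\partial(x)-(\operatorname{id}\otimes\partial)\partial(x)\bigr)=\bigl(\partial(u)-\sigma\partial(u)\bigr)\otimes x .
\]

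Second, I will expand $\partial=\alpha_r+u\otimes\operatorname{id}$ and compare with $J^u_{\alpha_r}$. The bookkeeping runs as follows.
\begin{itemize}
\item Since $u\in\operatorname{Ann}_{\mathrm R}(A)$, we get $\alpha_r(u)=\sum_i\bigl(u\cdot x_i\otimes y_i+x_i\otimes[u,y_i]\bigr)=\sum_i\bigl(u\cdot x_i\otimes y_i+x_i\otimes u\cdot y_i\bigr)=P(u)r$.
\item Hence $\partial(u)=P(u)r+u\otimes u$, and the term $u\otimes u$ is symmetric.
\item The cross terms of $(\partial\otimes\operatorname{id})\partial-(\operatorname{id}\otimes\partial)\partial$ are $(\alpha_r\otimes\operatorname{id})(u\otimes x)=\alpha_r(u)\otimes x$ and $(u\otimes\operatorname{id}\otimes\operatorname{id})\partial(x)$ from the first composite, minus $u\otimes\alpha_r(x)$, $u\otimes u\otimes x$ and $(\operatorname{id}\otimes u\otimes\operatorname{id})$-type terms from the second.
\item I expect the $u\otimes u\otimes x$ terms to cancel under antisymmetrization.
\item The $\alpha_r(u)\otimes x$ term should cancel against the right-hand side $(P(u)r-\sigma P(u)r)\otimes x$.
\item What remains should be exactly the $u\otimes\alpha_r(x)$ correction appearing in $J^u_{\alpha_r}$.
\end{itemize}
The precise sign matching here is the main obstacle. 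If the leftover does not match, I would adjust by noting that terms of the form $v\otimes w\otimes x$ and $w\otimes v\otimes x$ disappear under $1-\sigma\otimes\operatorname{id}$ when combined symmetrically. I would also use $L_x u=0$ to kill terms like $x\cdot u$.

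Finally, once the condition is shown to be equivalent to $J^u_{\alpha_r}(x)=0$ for all $x$, Lemma~\ref{lem:u-Jacobi-alpha} gives the stated equivalence. Multiplicativity holds by construction, since $\omega$ was defined from $\theta^*$.
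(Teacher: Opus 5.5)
Your proposal is correct and follows essentially the same route as the paper: dualize the multiplicative $\omega$-pre-Lie identity to $J^u_{\alpha_r}(x)=0$ for all $x$, then invoke Lemma~\ref{lem:u-Jacobi-alpha}. The sign check you flag goes through exactly as your bullets predict, so no adjustment is needed. With $\partial=\alpha_r+u\otimes\operatorname{id}$ and $\alpha_r(x)=\sum x_{(1)}\otimes x_{(2)}$, the two $u\otimes\partial(x)$ contributions cancel identically, even before antisymmetrizing. The term $(1-\sigma\otimes\operatorname{id})(\alpha_r(u)\otimes x)$ equals the $\omega$-term $(\partial(u)-\sigma\partial(u))\otimes x$. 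Finally, $(1-\sigma\otimes\operatorname{id})$ turns the leftover $-\sum x_{(1)}\otimes u\otimes x_{(2)}$ into $(1-\sigma\otimes\operatorname{id})(u\otimes\alpha_r(x))$.
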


\begin{proof}
By the definition of a multiplicative \(\omega\)-pre-Lie algebra, it suffices to verify the identity
\begin{equation*}
\begin{aligned}
&(a\circ b)\circ c-a\circ(b\circ c)
-(b\circ a)\circ c+b\circ(a\circ c)-\bigl(\theta^*(a\circ b)-\theta^*(b\circ a)\bigr)c=0
\end{aligned}
\end{equation*}
for all \(a,b,c\in A^*\). Pairing the left-hand side with \(x\in A\), a direct expansion from the definition of \(\circ\) gives
\begin{equation*}
\begin{aligned}
&\Big\langle
(a\circ b)\circ c-a\circ(b\circ c)
-(b\circ a)\circ c+b\circ(a\circ c) -\bigl(\theta^*(a\circ b)-\theta^*(b\circ a)\bigr)c,\ x
\Big\rangle                                      \\
={}& \big\langle a\otimes b\otimes c,\,
J_{\alpha_r}^u(x) \big\rangle .
\end{aligned}
\end{equation*}
Indeed, the terms independent of \(u\) give the classical expression corresponding to \(J_{\alpha_r}^0(x)\), while the additional term
\(\langle a,u\rangle b\) in the product contributes exactly
\begin{equation*}
        u\otimes \alpha_r(x)
        - (\sigma\otimes \operatorname{id})
        \bigl(u\otimes \alpha_r(x)\bigr).
\end{equation*}
Thus \((A^*,\circ,\theta^*)\) is a multiplicative \(\omega\)-pre-Lie algebra if and only if \(J_{\alpha_r}^u(x)=0\) for every \(x\in A\). By Lemma~\ref{lem:u-Jacobi-alpha}, this is equivalent to the stated tensor identity.
\end{proof}

\begin{rem}
If \(r=\sigma(r)\), then the \emph{\(u\)-generalized Yang--Baxter equation} \([[r,r]]_u=0\) is equivalently written as
\begin{equation*}
        -r_{12}\cdot r_{13}
        +r_{12}\cdot r_{23}
        +[r_{13},r_{23}]
        -(u\otimes 1\otimes 1)r_{23}
        +(1\otimes u\otimes 1)r_{13}
        = 0.
\end{equation*}
We call this equation the \emph{\(u\)-generalized \(S\)-equation} of the
pre-Lie algebra \(A\).
\end{rem}

Symmetric solutions of the \(u\)-generalized \(S\)-equation provide an
immediate class for which the condition in Theorem~\ref{2.12} is automatically satisfied.

\begin{cor}\label{cor:u-S-dual-omega-LSA}
Let \((A,\cdot)\) be a pre-Lie algebra, let \(u\in \operatorname{Ann}_{\mathrm R}(A)\), and let \(r=\sum_i x_i\otimes y_i\in A\otimes A\) be a symmetric solution of the \(u\)-generalized \(S\)-equation. Let the product \(\circ\) on \(A^*\) and the linear functional \(\theta^*:A^*\to F\) be defined as in Theorem~\ref{2.12}. Then \((A^*,\circ,\theta^*)\) is a multiplicative \(\omega\)-pre-Lie algebra.
\end{cor}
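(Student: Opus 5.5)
The plan is to deduce the corollary directly from Theorem~\ref{2.12}. That theorem says \((A^*,\circ,\theta^*)\) is a multiplicative \(\omega\)-pre-Lie algebra exactly when
\begin{equation*}
        Q(x)[[r,r]]_u
        + \sum_i \bigl(P(x\cdot x_i)-P(x)P(x_i)\bigr)(r-\sigma(r))\otimes y_i
        = 0
\end{equation*}
for all \(x\in A\). So it suffices to show that both summands vanish whenever \(r\) is symmetric and solves the \(u\)-generalized \(S\)-equation.

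\textbf{First summand.} By the remark preceding the corollary, for \(r=\sigma(r)\) the \(u\)-generalized \(S\)-equation is an equivalent rewriting of \([[r,r]]_u=0\). Hence \([[r,r]]_u=0\), and since \(Q(x)\) is linear, \(Q(x)[[r,r]]_u=0\) for every \(x\).

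\textbf{Second summand.} Since \(r=\sigma(r)\), we have \(r-\sigma(r)=0\). Each operator \(P(x\cdot x_i)-P(x)P(x_i)\) is linear on \(A\otimes A\), so every term of the sum is zero. Note that Lemma~\ref{lem:invariant-skew-part} is not needed here.

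Combining the two, the condition of Theorem~\ref{2.12} holds for all \(x\in A\), and the conclusion follows. The only point requiring care is the equivalence, for symmetric \(r\), between \([[r,r]]_u=0\) and the \(u\)-generalized \(S\)-equation; I take this as stated in the preceding remark.

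If one wanted to check that remark, I would do it by reindexing with \(x_i\leftrightarrow y_i\), which symmetry permits. Under this reindexing the classical part of \([[r,r]]_u\) becomes, up to an overall sign, \(-r_{12}\cdot r_{13}+r_{12}\cdot r_{23}+[r_{13},r_{23}]\), as in Bai's classical \(S\)-equation. The \(u\)-terms \(\sum_i (u\otimes x_i\otimes y_i-x_i\otimes u\otimes y_i)\) become \((u\otimes1\otimes1)r_{23}-(1\otimes u\otimes1)r_{13}\), which picks up the same overall sign. This sign bookkeeping is the only mildly delicate step, and it is not required for the corollary itself.
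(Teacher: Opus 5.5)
Your proposal is correct and matches the paper's reasoning. The paper also treats this corollary as an immediate consequence of Theorem~\ref{2.12}: for symmetric \(r\) the \(u\)-generalized \(S\)-equation is equivalent to \([[r,r]]_u=0\), and \(r-\sigma(r)=0\) kills the second summand, so Lemma~\ref{lem:invariant-skew-part} is indeed not needed. Your side check of the sign bookkeeping is also consistent with the remark preceding the corollary: both the classical part and the \(u\)-terms pick up the same overall factor \(-1\).
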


A solution \(r\) of the \(u\)-generalized Yang--Baxter equation is called
factorizable if \(r_+-r_-:A^*\to A\) is a linear isomorphism and \(r-\sigma(r)\) is \((L,\operatorname{ad})\)-invariant. By Lemma~\ref{lem:invariant-skew-part}, factorizable solutions provide another
class for which the condition in Theorem~\ref{2.12} is automatically satisfied.

\begin{cor}\label{cor:u-factorizable-dual-omega-LSA}
Let \((A,\cdot)\) be a pre-Lie algebra, let \(u\in \operatorname{Ann}_{\mathrm R}(A)\), and let \(r=\sum_i x_i\otimes y_i\in A\otimes A\) be a factorizable solution of the \(u\)-generalized Yang--Baxter equation. Let the product \(\circ\) on \(A^*\) and the linear functional \(\theta^*:A^*\to F\) be defined as in Theorem~\ref{2.12}. Then \((A^*,\circ,\theta^*)\) is a multiplicative \(\omega\)-pre-Lie algebra.
\end{cor}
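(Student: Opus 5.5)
The plan is to reduce everything to Theorem~\ref{2.12}. By that theorem, \((A^*,\circ,\theta^*)\) is a multiplicative \(\omega\)-pre-Lie algebra if and only if
\begin{equation*}
        Q(x)[[r,r]]_u
        + \sum_i \bigl(P(x\cdot x_i)-P(x)P(x_i)\bigr)(r-\sigma(r))\otimes y_i
        = 0
\end{equation*}
holds for all \(x\in A\). So it suffices to show that each of the two summands vanishes separately when \(r\) is a factorizable solution.

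\textbf{First summand.} Since \(r\) is a solution of the \(u\)-generalized Yang--Baxter equation, we have \([[r,r]]_u=0\). Because \(Q(x)\) is linear on \(A^{\otimes 3}\), it follows that \(Q(x)[[r,r]]_u=0\) for every \(x\in A\).

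\textbf{Second summand.} By the definition of a factorizable solution, \(r-\sigma(r)\) is \((L,\operatorname{ad})\)-invariant. Lemma~\ref{lem:invariant-skew-part} is stated for arbitrary pair \(x,y\in A\) and does not involve \(u\). Applying it with \(y=x_i\) gives
\begin{equation*}
        \bigl(P(x\cdot x_i)-P(x)P(x_i)\bigr)(r-\sigma(r))=0
\end{equation*}
for each \(i\). Tensoring with \(y_i\) and summing over \(i\) therefore gives zero.

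Combining the two steps, the condition of Theorem~\ref{2.12} holds for all \(x\in A\), which proves the corollary. The hypothesis that \(r_+-r_-\) is a linear isomorphism is never used; it is part of the definition but plays no role in this conclusion.

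The only point needing care is confirming that Lemma~\ref{lem:invariant-skew-part} applies verbatim. The operator \(P\) is the same one as in Theorem~\ref{2.6}, and the lemma concerns only the pre-Lie structure of \(A\) and the invariance of \(r-\sigma(r)\), independently of \(u\). So no genuine obstacle is expected.
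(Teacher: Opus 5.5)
Your proposal is correct and follows the same route as the paper: the condition of Theorem~\ref{2.12} holds because $[[r,r]]_u=0$ kills the $Q(x)$-term, and Lemma~\ref{lem:invariant-skew-part}, applied to the $(L,\operatorname{ad})$-invariant tensor $r-\sigma(r)$ with $y=x_i$, kills the second term. Your remark that the invertibility of $r_+-r_-$ plays no role in the argument is also accurate.
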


When \(u\in\operatorname{Ann}(A)\), the \(\omega\)-term in Theorem~\ref{2.12} vanishes. Hence the multiplicative \(\omega\)-pre-Lie condition reduces to the ordinary left-symmetric identity.

\begin{thm}\label{2.18}
Let \((A,\cdot)\) be a pre-Lie algebra, let \(u\in \operatorname{Ann}(A)\), and let \(r=\sum_i x_i\otimes y_i\in A\otimes A\). Let the product \(\circ\) on \(A^*\) be defined as in Theorem~\ref{2.12}. Then \((A^*,\circ)\) is a
pre-Lie algebra if and only if
\begin{equation*}
\begin{aligned}
        Q(x)[[r,r]]_u +\sum_i \bigl(P(x\cdot x_i)-P(x)P(x_i)\bigr)(r-\sigma(r))\otimes y_i
        = 0
\end{aligned}
\end{equation*}
for all \(x\in A\), where \(P\) and \(Q\) are defined in Theorem~\ref{2.6}.
\end{thm}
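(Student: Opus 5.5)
The plan is to reduce Theorem~\ref{2.18} to Theorem~\ref{2.12}. The key point is that for \(u\in\operatorname{Ann}(A)\subseteq\operatorname{Ann}_{\mathrm R}(A)\) the two-form \(\omega\) of the multiplicative structure \((A^*,\circ,\theta^*)\) vanishes identically. Since \(u\in\operatorname{Ann}_{\mathrm R}(A)\), Theorem~\ref{2.12} applies to the same product \(\circ\). It then suffices to show that \(\omega^*(a,b):=\theta^*(a\circ b)-\theta^*(b\circ a)=0\) for all \(a,b\in A^*\). Once that holds, the defining identity of a multiplicative \(\omega\)-pre-Lie algebra for \((A^*,\circ,\theta^*)\) is literally the left-symmetric identity for \((A^*,\circ)\). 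The equivalence in Theorem~\ref{2.12} then gives the stated criterion verbatim.

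To compute \(\omega^*\), I use the definition of \(\circ\):
\[
\theta^*(a\circ b)=\langle a\circ b,u\rangle=\langle a\otimes b,\alpha_r(u)\rangle+\langle a,u\rangle\langle b,u\rangle .
\]
The second term is symmetric in \(a,b\), so it cancels in \(\omega^*\). For the first term, I expand \(\alpha_r(u)=\sum_i\bigl((u\cdot x_i)\otimes y_i+x_i\otimes[u,y_i]\bigr)\). Because \(u\in\operatorname{Ann}(A)\), we have \(u\cdot x_i=0\), and \([u,y_i]=u\cdot y_i-y_i\cdot u=0\). Hence \(\alpha_r(u)=0\), and therefore \(\omega^*\equiv 0\).

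This is also exactly where the full annihilator is needed rather than only the right annihilator. With \(u\in\operatorname{Ann}_{\mathrm R}(A)\) alone, one has \(\alpha_r(u)=\sum_i\bigl((u\cdot x_i)\otimes y_i+x_i\otimes u\cdot y_i\bigr)\), which need not vanish or be symmetric. Beyond confirming this, there is no real obstacle: the main computation is already contained in Lemma~\ref{lem:u-Jacobi-alpha} and Theorem~\ref{2.12}. Alternatively, one can argue directly by pairing with \(x\): the left-symmetric associator expression equals \(\langle a\otimes b\otimes c,J^u_{\alpha_r}(x)\rangle\) plus the \(\omega^*\)-term, which is zero, and then Lemma~\ref{lem:u-Jacobi-alpha} finishes the proof.
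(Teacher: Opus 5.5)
Your proposal is correct and follows essentially the same route as the paper: both apply Theorem~\ref{2.12} (valid since \(\operatorname{Ann}(A)\subseteq\operatorname{Ann}_{\mathrm R}(A)\)) and observe that \(u\in\operatorname{Ann}(A)\) forces \(\alpha_r(u)=0\). Hence \(\theta^*(a\circ b)=\langle a,u\rangle\langle b,u\rangle\) is symmetric in \(a,b\), the \(\omega\)-term vanishes, and the multiplicative \(\omega\)-pre-Lie identity becomes exactly the ordinary pre-Lie identity for \((A^*,\circ)\).
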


\begin{proof}
Let \(\theta^*\) be the linear functional defined in Theorem~\ref{2.12}. Since \(u\in\operatorname{Ann}(A)\), we have \(L_u=0\) and \(\operatorname{ad}_u=0\). Hence \(\alpha_r(u)=0\). Therefore
\begin{equation*}
        \theta^*(a\circ b)
        = \langle a\circ b,u\rangle
        = \langle a,u\rangle\langle b,u\rangle
        = \theta^*(b\circ a)
\end{equation*}
for all \(a,b\in A^*\). Thus the induced \(\omega\)-term vanishes, and the
multiplicative \(\omega\)-pre-Lie identity in Theorem~\ref{2.12}
reduces exactly to the ordinary pre-Lie identity. The result follows
from Theorem~\ref{2.12}.
\end{proof}

Symmetric solutions of the \(u\)-generalized \(S\)-equation provide an
immediate class for which the condition in Theorem~\ref{2.18} is automatically satisfied.

\begin{cor}\label{cor:annihilator-u-S-dual-LSA}
Let \((A,\cdot)\) be a pre-Lie algebra, let \(u\in \operatorname{Ann}(A)\), and let \(r=\sum_i x_i\otimes y_i\in A\otimes A\) be a symmetric solution of the \(u\)-generalized \(S\)-equation. Let the product \(\circ\) on \(A^*\)
be defined as in Theorem~\ref{2.12}. Then \((A^*,\circ)\) is a pre-Lie algebra.
\end{cor}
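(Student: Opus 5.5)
The corollary follows directly from Theorem~\ref{2.18}, so the plan is simply to check that its hypotheses hold and that both summands in its criterion vanish.

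First, I check the setting. Since \(u\in\operatorname{Ann}(A)\), we also have \(u\in\operatorname{Ann}_{\mathrm R}(A)\). Hence the \(u\)-generalized \(S\)-equation and the product \(\circ\) of Theorem~\ref{2.12} both make sense, and Theorem~\ref{2.18} applies. It reduces the claim to verifying, for all \(x\in A\),
\begin{equation*}
Q(x)[[r,r]]_u+\sum_i\bigl(P(x\cdot x_i)-P(x)P(x_i)\bigr)(r-\sigma(r))\otimes y_i=0 .
\end{equation*}

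Next, I show that each summand vanishes separately.

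\emph{The second summand.} Since \(r\) is symmetric, \(r-\sigma(r)=0\), so this term vanishes identically.

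\emph{The first summand.} By the remark preceding Corollary~\ref{cor:u-S-dual-omega-LSA}, for symmetric \(r\) the \(u\)-generalized \(S\)-equation is just a rewriting of \([[r,r]]_u=0\). Thus the hypothesis that \(r\) solves the \(u\)-generalized \(S\)-equation gives \([[r,r]]_u=0\), and therefore \(Q(x)[[r,r]]_u=0\).

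Both terms vanish, so Theorem~\ref{2.18} shows that \((A^*,\circ)\) is a pre-Lie algebra.

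The only step that needs any care is the equivalence, for symmetric \(r\), between \([[r,r]]_u=0\) and the \(S\)-equation form. The paper states this in a remark without proof, and I will take it as given. If I wanted to verify it, I would substitute \(y_i\leftrightarrow x_i\), which is legitimate because \(r\) is symmetric, into the tensor form of \([[r,r]]_u\). Then \(r_{21}=r_{12}\), the \(u\)-terms \((u\otimes1\otimes1)r_{23}-(1\otimes u\otimes1)r_{13}\) appear with the opposite sign convention, and the sign changes match. I would check that bookkeeping, and nothing deeper is required.

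Alternatively, one could bypass Theorem~\ref{2.18} and use Corollary~\ref{cor:u-S-dual-omega-LSA} directly. That corollary gives a multiplicative \(\omega\)-pre-Lie algebra with \(\theta^*\). One then notes that \(\alpha_r(u)=0\) because \(L_u=\operatorname{ad}_u=0\). This gives \(\theta^*(a\circ b)=\langle a,u\rangle\langle b,u\rangle=\theta^*(b\circ a)\), so \(\omega=0\), and \((A^*,\circ)\) is pre-Lie.
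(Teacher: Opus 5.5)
Your proposal is correct and coincides with the paper's argument. The paper writes out no separate proof; it only observes that for a symmetric solution both terms in the criterion of Theorem~\ref{2.18} vanish, since \(r-\sigma(r)=0\) and \([[r,r]]_u=0\). For the sign bookkeeping you deferred, one indeed has, for symmetric \(r\),
\[
[[r,r]]_u=-\bigl(-r_{12}\cdot r_{13}+r_{12}\cdot r_{23}+[r_{13},r_{23}]-(u\otimes 1\otimes 1)r_{23}+(1\otimes u\otimes 1)r_{13}\bigr),
\]
and your alternative route through Corollary~\ref{cor:u-S-dual-omega-LSA} and \(\alpha_r(u)=0\) simply replays the paper's proof of Theorem~\ref{2.18}.
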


By Lemma~\ref{lem:invariant-skew-part}, factorizable solutions of the
\(u\)-generalized Yang--Baxter equation provide another class for which
the condition in Theorem~\ref{2.18} is automatically satisfied.

\begin{cor}\label{cor:annihilator-u-factorizable-dual-LSA}
Let \((A,\cdot)\) be a pre-Lie algebra, let \(u\in \operatorname{Ann}(A)\), and let \(r=\sum_i x_i\otimes y_i\in A\otimes A\) be a factorizable solution of the \(u\)-generalized Yang--Baxter equation. Let the product \(\circ\) on
\(A^*\) be defined as in Theorem~\ref{2.12}. Then \((A^*,\circ)\) is a
pre-Lie algebra.
\end{cor}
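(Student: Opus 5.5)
The claim follows from Theorem~\ref{2.18} once we show that the second summand of the tensor condition vanishes for a factorizable solution. The first summand vanishes because \(r\) solves the \(u\)-generalized Yang--Baxter equation.

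First, by hypothesis \([[r,r]]_u=0\). Since each \(Q(x)\) is linear, this gives \(Q(x)[[r,r]]_u=0\) for every \(x\in A\).

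Second, factorizability includes the requirement that \(r-\sigma(r)\) be \((L,\operatorname{ad})\)-invariant. Lemma~\ref{lem:invariant-skew-part} then yields
\[
\bigl(P(x\cdot y)-P(x)P(y)\bigr)(r-\sigma(r))=0
\]
for all \(x,y\in A\). Taking \(y=x_i\) and tensoring with \(y_i\), each term of
\[
\sum_i\bigl(P(x\cdot x_i)-P(x)P(x_i)\bigr)(r-\sigma(r))\otimes y_i
\]
vanishes.

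Third, with both summands zero, the identity in Theorem~\ref{2.18} holds for all \(x\in A\). Since \(u\in\operatorname{Ann}(A)\), that theorem shows \((A^*,\circ)\) is a pre-Lie algebra, where \(\circ\) is the product of Theorem~\ref{2.12}.

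There is essentially no obstacle here. The only points to check are that Theorem~\ref{2.18} is applicable, which needs \(u\in\operatorname{Ann}(A)\) and not merely \(u\in\operatorname{Ann}_{\mathrm R}(A)\), and that Lemma~\ref{lem:invariant-skew-part} requires only the invariance hypothesis. The condition that \(r_+-r_-\) be an isomorphism is not used in this argument.
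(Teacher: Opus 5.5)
Your proposal is correct and follows the paper's own argument. The hypothesis \([[r,r]]_u=0\) kills the \(Q(x)\)-term, Lemma~\ref{lem:invariant-skew-part} kills the second summand via the \((L,\operatorname{ad})\)-invariance of \(r-\sigma(r)\), and Theorem~\ref{2.18}, which requires \(u\in\operatorname{Ann}(A)\), then gives the conclusion. You are also right that the invertibility of \(r_+-r_-\) plays no role.
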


\section{Solutions of the $A$-generalized Yang-Baxter equations}\label{33}

In this section, we study symmetric and factorizable solutions of the $A$-generalized Yang--Baxter equation. We start with the operator form of symmetric solutions on arbitrary pre-Lie algebras.

\subsection{Symmetric Solutions on Arbitrary pre-Lie algebras}

For arbitrary pre-Lie algebras, symmetric solutions of the classical
\(S\)-equation can be characterized by relative Rota--Baxter operators
associated with the coregular representation.

\begin{defn}\cite{Bai2008}
A representation of a pre-Lie algebra \((A,\cdot)\) is a triple \((\rho,\mu,V)\), where \(V\) is a vector space, \(\rho:A\to\mathfrak{gl}(V)\) is a representation of the sub-adjacent Lie algebra \(A_{\mathrm c}\), and \(\mu:A\to\mathfrak{gl}(V)\) is a linear map satisfying
\begin{equation*}
    \rho(x)\mu(y)-\mu(y)\rho(x)
    =
    \mu(x\cdot y)-\mu(y)\mu(x)
\end{equation*}
for all \(x,y\in A\).
\end{defn}

\begin{rem}
Let \((A,\cdot)\) be a pre-Lie algebra. The regular representation of \(A\) is \((L,R,A)\), where \(L,R:A\to\mathfrak{gl}(A)\) are the left and right multiplication operators, namely \(L_x y=x\cdot y\) and \(R_x y=y\cdot x\) for all \(x,y\in A\).
\end{rem}

\begin{defn}\cite{Bai2008}
Let \((\rho,\mu,V)\) be a representation of a pre-Lie algebra \((A,\cdot)\). The dual representation on \(V^*\) is \((\rho^*-\mu^*,-\mu^*,V^*)\), where \(\rho^*,\mu^*:A\to\mathfrak{gl}(V^*)\) are defined by
\begin{equation*}
\begin{aligned}
        \langle \rho^*(x)\xi,v\rangle
        &= -\langle \xi,\rho(x)v\rangle, \\
        \langle \mu^*(x)\xi,v\rangle
        &= -\langle \xi,\mu(x)v\rangle
\end{aligned}
\end{equation*}
for all \(x\in A\), \(\xi\in V^*\), and \(v\in V\).
\end{defn}

\begin{rem}
Let \((A,\cdot)\) be a pre-Lie algebra. The coregular representation of \(A\) is \((\operatorname{ad}^*,-R^*,A^*)\), where \(\operatorname{ad}_x^*=L_x^*-R_x^*\) is the coadjoint representation of \(A_{\mathrm c}\).
\end{rem}

\begin{thm}\cite{Liu}
Let \((A,\cdot)\) be a pre-Lie algebra, and let \(T:A^*\to A\) be a linear map satisfying \(\langle \xi,T(\eta)\rangle=\langle \eta,T(\xi)\rangle\)
for all \(\xi,\eta\in A^*\). Let \(\{e_i\}\) be a basis of \(A\), and let
\(\{e^i\}\) be its dual basis. Then
\begin{equation*}
        r=\sum_i T(e^i)\otimes e_i
\end{equation*}
is a symmetric solution of the classical \(S\)-equation if and only if \(T\)
is a relative Rota--Baxter operator associated to the coregular representation, namely
\begin{equation*}
        T(\xi)\cdot T(\eta)
        = T\bigl( \operatorname{ad}_{T(\xi)}^*\eta
        - R_{T(\eta)}^*\xi
        \bigr)
\end{equation*}
for all \(\xi,\eta\in A^*\).
\end{thm}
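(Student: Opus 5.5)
The strategy is to translate the symmetric $S$-equation into an identity of linear maps by pairing it with covectors, and then to read off the operator identity. Write $r=\sum_i T(e^i)\otimes e_i$. Because $T$ is symmetric with respect to the pairing, $r=\sigma(r)$, and $r_+=r_-=T$. The last point follows from
\[
\langle \xi\otimes\eta,r\rangle=\sum_i\langle\xi,T(e^i)\rangle\langle\eta,e_i\rangle=\langle\eta,T(\xi)\rangle
\]
for all $\xi,\eta\in A^*$. Expanding in the basis then gives the key rewriting device
\[
\sum_i \langle\xi,x_i\rangle y_i=\sum_i\langle\xi,y_i\rangle x_i=T(\xi).
\]
Since $r$ is symmetric, it suffices to use the form of the classical $S$-equation given in the remark after Theorem~\ref{2.6}:
\[
S(r):=-r_{12}\cdot r_{13}+r_{12}\cdot r_{23}+[r_{13},r_{23}]=0 .
\]
This holds if and only if $\langle \xi\otimes\eta\otimes\zeta, S(r)\rangle=0$ for all $\xi,\eta,\zeta\in A^*$.

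Next I compute the three terms of this pairing separately, writing $r=\sum_i x_i\otimes y_i=\sum_j x_j\otimes y_j$. In $\sum_{i,j}\langle\xi\otimes\eta\otimes\zeta, x_ix_j\otimes y_i\otimes y_j\rangle$, summing over $i$ with $\eta$ and over $j$ with $\zeta$ turns $x_i$ into $T(\eta)$ and $x_j$ into $T(\zeta)$. This term therefore equals $\langle\xi,T(\eta)\cdot T(\zeta)\rangle$. The second term $r_{12}\cdot r_{23}=\sum x_i\otimes y_i\cdot x_j\otimes y_j$ becomes $\langle\eta, T(\xi)\cdot T(\zeta)\rangle$. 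The third term $[r_{13},r_{23}]=\sum x_i\otimes x_j\otimes[y_i,y_j]$ becomes $\langle\zeta,[T(\xi),T(\eta)]\rangle$. Up to the sign conventions, this gives
\[
\langle \xi\otimes\eta\otimes\zeta,S(r)\rangle=-\langle\xi,T(\eta)T(\zeta)\rangle+\langle\eta,T(\xi)T(\zeta)\rangle+\langle\zeta,[T(\xi),T(\eta)]\rangle .
\]

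The third step is to isolate one slot and rewrite everything as a pairing with $\eta$ (or with $\xi$). I will move the operators off using the definitions of $\operatorname{ad}^*$ and $R^*$, together with the symmetry of $T$:
\[
\langle\xi,T(\eta)\cdot T(\zeta)\rangle=\langle\xi,R_{T(\zeta)}T(\eta)\rangle=-\langle R^*_{T(\zeta)}\xi,T(\eta)\rangle=-\langle\eta,T(R^*_{T(\zeta)}\xi)\rangle,
\]
\[
\langle\zeta,[T(\xi),T(\eta)]\rangle=\langle\zeta,\operatorname{ad}_{T(\xi)}T(\eta)\rangle=-\langle\eta,T(\operatorname{ad}^*_{T(\xi)}\zeta)\rangle .
\]
Substituting these, the pairing becomes
\[
\Bigl\langle\eta,\;T(\xi)\cdot T(\zeta)-T\bigl(\operatorname{ad}^*_{T(\xi)}\zeta-R^*_{T(\zeta)}\xi\bigr)\Bigr\rangle .
\]
This expression vanishes for all $\eta$ exactly when the relative Rota--Baxter identity holds for the pair $(\xi,\zeta)$. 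Since $\xi,\eta,\zeta$ are arbitrary, both directions of the equivalence follow.

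The main obstacle is sign and index bookkeeping. First, I must correctly expand the tensor products $r_{12}\cdot r_{13}$ and $r_{12}\cdot r_{23}$, keeping track of which leg is multiplied. Second, I must check that the resulting signs match the stated remark rather than its transpose. If the signs come out differently, I will use the symmetry of $r$ to swap the legs. For example, I can exchange the roles of $\xi$ and $\eta$ using the $\sigma$-symmetry of $r$ in the first two legs, and then compare against the expression produced by the identity applied to the pair $(\xi,\zeta)$.
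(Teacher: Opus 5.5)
Your argument is correct and is essentially the paper's approach: the paper cites this result from \cite{Liu} without proof, but its proof of the $u$-generalized version (Theorem~\ref{thm:generalized-O-operator-S-equation}) is the same computation, pairing with $\xi\otimes\eta\otimes\zeta$ and transferring operators via $\operatorname{ad}^*$, $R^*$ and the symmetry of $T$, specialized to $u=0$. The only difference is that the paper isolates the first slot $\xi$, which gives the identity for the pair $(\eta,\zeta)$, whereas you isolate $\eta$; your signs already come out right, so the hedging in your last paragraph is unnecessary.
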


We generalize the above result to obtain symmetric solutions of the \(u\)-generalized \(S\)-equation.

\begin{thm}\label{thm:generalized-O-operator-S-equation}
Let \((A,\cdot)\) be a pre-Lie algebra, let \(0\neq u\in\operatorname{Ann}(A)\), and let \(T:A^*\to A\) be a linear map satisfying \(\langle \xi,T(\eta)\rangle=\langle \eta,T(\xi)\rangle\) for all \(\xi,\eta\in A^*\). Let \(\{e_i\}\) be a basis of \(A\), and let \(\{e^i\}\) be the dual basis of \(A^*\). Then
\begin{equation*}
        r=\sum_i T(e^i)\otimes e_i
\end{equation*}
is a symmetric solution of the \(u\)-generalized \(S\)-equation if and
only if
\begin{equation}\label{eq:generalized-O-operator-dual}
\begin{aligned}
T(\xi)\cdot T(\eta)
={}&
T\bigl( \operatorname{ad}_{T(\xi)}^*\eta - R_{T(\eta)}^*\xi \bigr) + \langle \xi,u\rangle T(\eta) - \langle \xi,T(\eta)\rangle u
\end{aligned}
\end{equation}
for all \(\xi,\eta\in A^*\).
\end{thm}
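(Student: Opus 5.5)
The plan is to reduce to the classical result of \cite{Liu} by isolating the contribution of the new $u$-terms. Write $[[r,r]]_u=[[r,r]]+T_u$ with $T_u=\sum_i\bigl(u\otimes x_i\otimes y_i-x_i\otimes u\otimes y_i\bigr)$, as in the proof of Lemma~\ref{lem:u-Jacobi-alpha}. Here $r=\sum_i T(e^i)\otimes e_i$, so $x_i=T(e^i)$ and $y_i=e_i$; symmetry of $T$ gives $r=\sigma(r)$. Then $[[r,r]]_u=0$ holds if and only if
\begin{equation*}
\langle \xi\otimes\eta\otimes\zeta,[[r,r]]_u\rangle=0
\end{equation*}
for all $\xi,\eta,\zeta\in A^*$. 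I would evaluate both pieces against such a triple and rewrite each as $\langle\zeta,\ \cdot\ \rangle$ applied to an element of $A$ depending on $\xi,\eta$. The task is then to identify that element with the difference of the two sides of \eqref{eq:generalized-O-operator-dual}, up to sign and the order of $\xi,\eta$.

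For the classical part, I would reuse the computation underlying the theorem of \cite{Liu}. For symmetric $r$, that proof shows
\begin{equation*}
\langle \xi\otimes\eta\otimes\zeta,[[r,r]]\rangle=\pm\Bigl\langle \zeta,\ T(\xi)\cdot T(\eta)-T\bigl(\operatorname{ad}^*_{T(\xi)}\eta-R^*_{T(\eta)}\xi\bigr)\Bigr\rangle,
\end{equation*}
possibly with $\xi,\eta$ interchanged according to the slot conventions. I take this as the content of the cited theorem, since it is an identity and not merely an equivalence. The first thing to do is fix the sign and slot convention precisely, using the $S$-equation form $-r_{12}\cdot r_{13}+r_{12}\cdot r_{23}+[r_{13},r_{23}]$.

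The $u$-part is a direct pairing computation using $\sum_i\langle\eta,T(e^i)\rangle e_i=T(\eta)$ and the symmetry of $T$. One gets
\begin{align*}
\Bigl\langle \xi\otimes\eta\otimes\zeta,\sum_i u\otimes T(e^i)\otimes e_i\Bigr\rangle&=\langle\xi,u\rangle\langle\zeta,T(\eta)\rangle,\\
\Bigl\langle \xi\otimes\eta\otimes\zeta,\sum_i T(e^i)\otimes u\otimes e_i\Bigr\rangle&=\langle\eta,u\rangle\langle\zeta,T(\xi)\rangle.
\end{align*}
Hence $\langle\xi\otimes\eta\otimes\zeta,T_u\rangle=\langle\zeta,\ \langle\xi,u\rangle T(\eta)-\langle\eta,u\rangle T(\xi)\rangle$. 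Adding this to the classical part with the correct sign gives one condition, for all $\xi,\eta$:
\begin{equation*}
T(\xi)\cdot T(\eta)-T\bigl(\operatorname{ad}^*_{T(\xi)}\eta-R^*_{T(\eta)}\xi\bigr)=\langle\xi,u\rangle T(\eta)-\langle\eta,u\rangle T(\xi)
\end{equation*}
(possibly with $\xi,\eta$ interchanged).

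The main obstacle is that this does not literally match \eqref{eq:generalized-O-operator-dual}. There the last term is $-\langle\xi,T(\eta)\rangle u$, not $-\langle\eta,u\rangle T(\xi)$, so a symmetrization argument is needed. I would first try exploiting that the left side, $T(\xi)\cdot T(\eta)-T(\cdots)$, has a known symmetry property. Pairing with a fourth functional and using the symmetry of $T$ together with $u\in\operatorname{Ann}(A)$ (so $L_u=R_u=0$, and terms like $\langle R^*_{T(\eta)}\xi,u\rangle$ vanish) should let me show the two conditions are equivalent. If that fails, the fallback is to use the freedom in choosing which tensor slot carries $\zeta$. I would pair $[[r,r]]_u$ with $\zeta$ in the first or second slot instead of the third. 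By symmetry of $r$, the $u$-terms then produce $\langle\xi,u\rangle T(\eta)-\langle\xi,T(\eta)\rangle u$, while the classical part still yields the relative Rota--Baxter expression. I expect this slot bookkeeping, and checking that every slot choice gives an equivalent condition, to be the genuinely delicate step.
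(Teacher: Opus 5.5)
Your fallback is correct and is exactly the paper's argument, but your main line rests on a false identity, and the repair you intend to try first cannot work. With the free functional in the third slot, the classical part is \emph{not} the relative Rota--Baxter expression up to sign or order of $\xi,\eta$. Pairing directly, using $\sum_i\langle\eta,e_i\rangle T(e^i)=T(\eta)$ and the symmetry of $T$, gives
\[
\langle \xi\otimes\eta\otimes\zeta,[[r,r]]\rangle=\Bigl\langle \zeta,\ T\bigl(L^*_{T(\xi)}\eta-L^*_{T(\eta)}\xi\bigr)-[T(\xi),T(\eta)]\Bigr\rangle .
\]
This is antisymmetric in $(\xi,\eta)$, whereas $T(\xi)\cdot T(\eta)-T(\operatorname{ad}^*_{T(\xi)}\eta-R^*_{T(\eta)}\xi)$ is not. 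Hence the ``one condition'' you get by adding the slot-3 $u$-part to the relative Rota--Baxter expression is not equivalent to the theorem, and no symmetrization argument can show that it is.

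The paper's own example is a witness. Take $x\cdot y=y\cdot x=-u$ and $r=x\otimes y+y\otimes x$; this is a genuine solution, with $T(x^*)=y$, $T(y^*)=x$, $T(u^*)=0$. At $(\xi,\eta)=(x^*,y^*)$ the left side of your condition is $y\cdot x=-u$, since the $T(\cdots)$ term vanishes. Your right side vanishes in any sign or order, because $\langle x^*,u\rangle=\langle y^*,u\rangle=0$. By contrast, the theorem's correction term $-\langle x^*,T(y^*)\rangle u$ equals $-u$, as it must.

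The correct slot-3 condition is $[T(\xi),T(\eta)]=T(L^*_{T(\xi)}\eta-L^*_{T(\eta)}\xi)+\langle\xi,u\rangle T(\eta)-\langle\eta,u\rangle T(\xi)$. It is equivalent to \eqref{eq:generalized-O-operator-dual} only after quantifying over all $\xi,\eta$ (both say the same trilinear form vanishes), not pointwise. Also, \cite{Liu} is stated only as an equivalence, so it does not supply a pointwise identity in whatever slot you choose.

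So make the fallback the proof. The only requirement is to use one and the same slot for the classical part and the $u$-part; there is no further delicate bookkeeping. Put the free functional in the first slot, as the paper does. Pair the $S$-form $-r_{12}\cdot r_{13}+r_{12}\cdot r_{23}+[r_{13},r_{23}]-(u\otimes1\otimes1)r_{23}+(1\otimes u\otimes1)r_{13}$ (which equals $-[[r,r]]_u$ for symmetric $r$) with $\xi\otimes\eta\otimes\zeta$. The three classical terms give:
\begin{itemize}
\item $-\langle\xi,T(\eta)\cdot T(\zeta)\rangle$;
\item $\langle\eta,T(\xi)\cdot T(\zeta)\rangle=-\langle\xi,T(R^*_{T(\zeta)}\eta)\rangle$;
\item $\langle\zeta,[T(\xi),T(\eta)]\rangle=\langle\xi,T(\operatorname{ad}^*_{T(\eta)}\zeta)\rangle$.
\end{itemize}
The $u$-terms give $-\langle\xi,u\rangle\langle\zeta,T(\eta)\rangle+\langle\eta,u\rangle\langle\xi,T(\zeta)\rangle$. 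The total is
\[
-\bigl\langle\xi,\,T(\eta)\cdot T(\zeta)-T(\operatorname{ad}^*_{T(\eta)}\zeta-R^*_{T(\zeta)}\eta)+\langle\zeta,T(\eta)\rangle u-\langle\eta,u\rangle T(\zeta)\bigr\rangle .
\]
Nondegeneracy of the pairing, the renaming $(\eta,\zeta)\mapsto(\xi,\eta)$, and the symmetry of $T$ then give exactly \eqref{eq:generalized-O-operator-dual}. Your computation of the $u$-terms for this slot choice is right, and the second slot gives the same element with the opposite sign.
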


\begin{proof}
The symmetry of \(T\) is equivalent to the symmetry of $r=\sum_iT(e^i)\otimes e_i$. Pairing
\begin{equation*}
        -r_{12}\cdot r_{13}
        +r_{12}\cdot r_{23}
        +[r_{13},r_{23}]
        -(u\otimes1\otimes1)r_{23}
        +(1\otimes u\otimes1)r_{13}
\end{equation*}
with \(\xi\otimes\eta\otimes\zeta\), we obtain
\begin{equation*}
\begin{aligned}
-&\Big\langle \xi,\,
T(\eta)\cdot T(\zeta)
- T\bigl(\operatorname{ad}_{T(\eta)}^*\zeta
- R_{T(\zeta)}^*\eta\bigr)
+\langle \zeta,T(\eta)\rangle u
-\langle \eta,u\rangle T(\zeta)
\Big\rangle .
\end{aligned}
\end{equation*}
Since the pairing between \(A^*\) and \(A\) is nondegenerate, the \(u\)-generalized \(S\)-equation is equivalent to
\begin{equation*}
\begin{aligned}
T(\eta)\cdot T(\zeta)
={}& T\bigl(\operatorname{ad}_{T(\eta)}^*\zeta
- R_{T(\zeta)}^*\eta
\bigr) + \langle \eta,u\rangle T(\zeta)
- \langle \zeta,T(\eta)\rangle u .
\end{aligned}
\end{equation*}
Renaming \(\eta,\zeta\) as \(\xi,\eta\), and using \(\langle \eta,T(\xi)\rangle=\langle \xi,T(\eta)\rangle\) gives \eqref{eq:generalized-O-operator-dual}.
\end{proof}

\begin{defn}
Let \((A,\cdot)\) be a pre-Lie algebra. A linear map \(T:A^*\to A\) satisfying \eqref{eq:generalized-O-operator-dual} for all \(\xi,\eta\in A^*\) is called a \emph{\(u\)-generalized relative Rota--Baxter operators associated with the coregular representation}.
\end{defn}

\subsection{\(u\)-Generalized Hessian pre-Lie algebra and Symmetric Solutions}

In the same way that Hessian pre-Lie algebras give rise to relative Rota--Baxter operators, \(u\)-generalized Hessian pre-Lie algebras give rise to \(u\)-generalized relative Rota--Baxter operators.

\begin{defn}\label{def:generalized-Hessian-LSA}
Let \((A,\cdot)\) be a pre-Lie algebra and let \(u\in\operatorname{Ann}(A)\). A symmetric bilinear form \(\gamma\) on \(A\) is called a \emph{\(u\)-generalized \(2\)-cocycle} if
\begin{equation*}
\begin{aligned}
&\gamma(x\cdot y,z)-\gamma(x,y\cdot z)
-\gamma(y\cdot x,z)+\gamma(y,x\cdot z) \\
&\quad
-\gamma(x,u)\gamma(y,z)+\gamma(y,u)\gamma(x,z)=0
\end{aligned}
\end{equation*}
for all \(x,y,z\in A\). A \emph{\(u\)-generalized Hessian pre-Lie algebra} is a quadruple \((A,\cdot,\gamma,u)\), where \((A,\cdot)\) is a pre-Lie algebra and \(\gamma\) is a nondegenerate \emph{\(u\)-generalized \(2\)-cocycle}.
\end{defn}

\begin{rem}
The notion of a \emph{\(u\)-generalize Hessian pre-Lie algebra} has the following geometric interpretation. Let \((A,\cdot)\) be realized as the algebra of left-invariant vector fields on the simply connected Lie group with Lie algebra \(A_{\mathrm c}\), whose left-invariant flat torsion-free connection is determined by \(\nabla_x y=x\cdot y\). A nondegenerate symmetric bilinear form \(\gamma\) corresponds to a left-invariant pseudo-Riemannian metric \(g\). For left-invariant vector fields, we have
\begin{equation*}
        (\nabla_x\gamma)(y,z)
        =
        -\gamma(x\cdot y,z)-\gamma(y,x\cdot z).
\end{equation*}
Define \(\theta\in A^*\) by \(\theta(x)=\gamma(x,u)\). Then the \(u\)-generalized \(2\)-cocycle condition is equivalent to the total symmetry of
\begin{equation*}
        \nabla g+\theta\otimes g .
\end{equation*}
Indeed, the difference between the values of this tensor on \((x,y,z)\) and
\((y,x,z)\) is the negative of the left-hand side of the \(u\)-generalized \(2\)-cocycle identity. Moreover, taking \(z=u\) in that identity and using \(u\in\operatorname{Ann}(A)\), we get
\begin{equation*}
        \gamma(x\cdot y,u)-\gamma(y\cdot x,u)=0,
\end{equation*}
equivalently, \(\theta([x,y])=0\). Hence \(\theta\) is closed as a left-invariant \(1\)-form. If \(\vartheta=-\theta\), then
\begin{equation*}
        \nabla g-\vartheta\otimes g
\end{equation*}
is totally symmetric and \(\vartheta\) is closed. Thus, in the positive
definite case, \(u\)-generalized Hessian pre-Lie algebras correspond to the infinitesimal form of locally conformally Hessian manifolds in the sense of Osipov~\cite{Osipov2023}.
\end{rem}

\begin{prop}\label{prop:generalized-Hessian-O-operator}
Let \((A,\cdot,\gamma,u)\) be a \(u\)-generalized Hessian pre-Lie algebra. Define \(\gamma^\sharp:A\to A^*\) by
\[
\langle \gamma^\sharp(x),y\rangle=\gamma(x,y)
\]
for all \(x,y\in A\). Then \(T_\gamma=(\gamma^\sharp)^{-1}:A^*\to A\) is a
\(u\)-generalized relative Rota--Baxter operators associated with the coregular representation. 
\end{prop}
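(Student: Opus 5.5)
The plan is to substitute $\xi=\gamma^\sharp(x)$ and $\eta=\gamma^\sharp(y)$, so that $T_\gamma(\xi)=x$ and $T_\gamma(\eta)=y$. Since $\gamma^\sharp$ is a linear isomorphism, it suffices to verify \eqref{eq:generalized-O-operator-dual} for all such pairs. First, $T_\gamma$ is symmetric: $\langle \xi,T_\gamma(\eta)\rangle=\gamma(x,y)=\gamma(y,x)=\langle \eta,T_\gamma(\xi)\rangle$, by the symmetry of $\gamma$. Because $T_\gamma$ is invertible with inverse $\gamma^\sharp$, the identity to prove becomes
\begin{equation*}
\gamma^\sharp(x\cdot y)=\operatorname{ad}_x^*\gamma^\sharp(y)-R_y^*\gamma^\sharp(x)+\gamma(x,u)\gamma^\sharp(y)-\gamma(x,y)\gamma^\sharp(u).
\end{equation*}
Here we used $\langle \xi,u\rangle=\gamma(x,u)$ and $\langle \xi,T_\gamma(\eta)\rangle=\gamma(x,y)$.

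Next, pair both sides with an arbitrary $z\in A$. The left side gives $\gamma(x\cdot y,z)$. Using $\operatorname{ad}_x^*=L_x^*-R_x^*$ and the sign convention $\langle \rho^*(x)\xi,v\rangle=-\langle\xi,\rho(x)v\rangle$, the right side gives
\begin{equation*}
-\gamma(y,x\cdot z)+\gamma(y,z\cdot x)+\gamma(x,z\cdot y)+\gamma(x,u)\gamma(y,z)-\gamma(x,y)\gamma(u,z).
\end{equation*}
So we must show that the expression
\begin{equation*}
E(x,y,z)=\gamma(x\cdot y,z)+\gamma(y,x\cdot z)-\gamma(y,z\cdot x)-\gamma(x,z\cdot y)-\gamma(x,u)\gamma(y,z)+\gamma(x,y)\gamma(u,z)
\end{equation*}
vanishes identically.

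The key step is to obtain this from the $u$-generalized $2$-cocycle identity of Definition~\ref{def:generalized-Hessian-LSA}, which is the identity $C(x,y,z)=0$, where
\begin{equation*}
C(x,y,z)=\gamma(x\cdot y,z)-\gamma(x,y\cdot z)-\gamma(y\cdot x,z)+\gamma(y,x\cdot z)-\gamma(x,u)\gamma(y,z)+\gamma(y,u)\gamma(x,z).
\end{equation*}
Using the symmetry of $\gamma$, rewrite $C(z,x,y)=0$ as
\begin{equation*}
\gamma(z\cdot x,y)-\gamma(z,x\cdot y)-\gamma(x\cdot z,y)+\gamma(x,z\cdot y)-\gamma(z,u)\gamma(x,y)+\gamma(x,u)\gamma(z,y)=0.
\end{equation*}
Multiplying by $-1$ gives exactly $E(x,y,z)$. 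Term by term, $-\gamma(z\cdot x,y)=-\gamma(y,z\cdot x)$, $\gamma(z,x\cdot y)=\gamma(x\cdot y,z)$, $\gamma(x\cdot z,y)=\gamma(y,x\cdot z)$, $-\gamma(x,z\cdot y)$ matches directly, $\gamma(z,u)\gamma(x,y)=\gamma(x,y)\gamma(u,z)$, and $-\gamma(x,u)\gamma(z,y)=-\gamma(x,u)\gamma(y,z)$. Hence $E=-C(z,x,y)=0$. Nondegeneracy of the pairing then gives the desired operator identity.

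Conceptually there is no real obstacle. The main risk is sign bookkeeping in the dual-representation conventions ($\operatorname{ad}^*$, $R^*$) and choosing the correct permutation of the cocycle identity. If the cyclic choice $C(z,x,y)$ failed because of a sign convention, the fallback is to try the other permutations, or combinations of them, until the terms match.
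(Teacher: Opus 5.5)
Your proof is correct and takes essentially the paper's route: set $\xi=\gamma^\sharp(x)$, $\eta=\gamma^\sharp(y)$, pair with an arbitrary $z$, and apply the $u$-generalized $2$-cocycle identity to a permuted triple. The paper uses the triple $(x,z,y)$ where you use $(z,x,y)$; since the cocycle expression is antisymmetric in its first two arguments, these give the same identity up to sign, so your fallback of trying other permutations is not needed.
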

\begin{proof}
The nondegeneracy of \(\gamma\) implies that \(\gamma^\sharp\) is an
isomorphism. Let \(a=T_\gamma(\xi)\) and \(b=T_\gamma(\eta)\). Then
\(\langle \xi,y\rangle=\gamma(a,y)\) and \(\langle \eta,y\rangle=\gamma(b,y)\) for all \(y\in A\). Since \(\gamma\) is symmetric, we have
\begin{equation*}
        \langle \xi,T_\gamma(\eta)\rangle
        =
        \gamma(a,b)
        =
        \gamma(b,a)
        =
        \langle \eta,T_\gamma(\xi)\rangle .
\end{equation*}
For any \(z\in A\), using the definitions of \(\operatorname{ad}^*\) and
\(R^*\), we get
\begin{equation*}
\begin{aligned}
&\gamma\Big(
T_\gamma\bigl(\operatorname{ad}_a^*\eta-R_b^*\xi\bigr)
+\langle \xi,u\rangle b
-\langle \xi,b\rangle u,\ z
\Big)  \\
={}& -\gamma(b,[a,z]) +\gamma(a,z\cdot b) +\gamma(a,u)\gamma(b,z)
-\gamma(a,b)\gamma(u,z).
\end{aligned}
\end{equation*}
By the \(u\)-generalized \(2\)-cocycle identity applied to \((a,z,b)\), the right-hand side is equal to \(\gamma(a\cdot b,z)\). Hence, by the nondegeneracy of \(\gamma\),
\begin{equation*}
\begin{aligned}
T_\gamma(\xi)\cdot T_\gamma(\eta)
={}&
T_\gamma\bigl(
\operatorname{ad}_{T_\gamma(\xi)}^*\eta
-
R_{T_\gamma(\eta)}^*\xi
\bigr)
+
\langle \xi,u\rangle T_\gamma(\eta)
-
\langle \xi,T_\gamma(\eta)\rangle u .
\end{aligned}
\end{equation*}
Thus \(T_\gamma\) satisfies \eqref{eq:generalized-O-operator-dual}.
\end{proof}

\begin{cor}\label{cor:GHLSA-symmetric-solution}
Let \((A,\cdot,\gamma,u)\) be a \(u\)-generalized Hessian pre-Lie algebra with \(u\neq 0\). Let \(\{e_i\}\) be a basis of \(A\), and let \(\{e^i\}\) be its dual basis. Then
\[
r_\gamma=\sum_i T_\gamma(e^i)\otimes e_i
\]
is a symmetric solution of the \(u\)-generalized \(S\)-equation.
\end{cor}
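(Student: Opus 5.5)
This corollary follows by combining Proposition~\ref{prop:generalized-Hessian-O-operator} with Theorem~\ref{thm:generalized-O-operator-S-equation}. We check the hypotheses of the theorem in order.

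First, the definition of a \(u\)-generalized Hessian pre-Lie algebra (Definition~\ref{def:generalized-Hessian-LSA}) requires \(u\in\operatorname{Ann}(A)\). Together with the assumption \(u\neq 0\), this puts \(u\) in exactly the setting of Theorem~\ref{thm:generalized-O-operator-S-equation}.

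Second, the proof of Proposition~\ref{prop:generalized-Hessian-O-operator} already shows that \(T_\gamma=(\gamma^\sharp)^{-1}\) is symmetric, i.e.\ \(\langle \xi,T_\gamma(\eta)\rangle=\gamma(T_\gamma\xi,T_\gamma\eta)=\langle \eta,T_\gamma(\xi)\rangle\). This uses only the symmetry of \(\gamma\).

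Third, the same proposition states that \(T_\gamma\) satisfies \eqref{eq:generalized-O-operator-dual}. Hence the ``if'' direction of Theorem~\ref{thm:generalized-O-operator-S-equation} applies with \(T=T_\gamma\). It gives that \(r_\gamma=\sum_i T_\gamma(e^i)\otimes e_i\) is a symmetric solution of the \(u\)-generalized \(S\)-equation.

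For completeness, the symmetry of \(r_\gamma\) can also be seen directly. Write \(T_\gamma(e^i)=\sum_j \langle e^j,T_\gamma(e^i)\rangle e_j\). Then \(r_\gamma=\sum_{i,j}\langle e^j,T_\gamma(e^i)\rangle\, e_j\otimes e_i\), and its coefficient matrix is symmetric by the symmetry of \(T_\gamma\).

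There is no substantive obstacle. The only point requiring care is that the sign and ordering conventions in \eqref{eq:generalized-O-operator-dual} match between the proposition and the theorem. Both are stated with the same formula, so the identification is immediate. The basis-independence of \(r_\gamma\) is standard, since \(r_\gamma\) is the tensor corresponding to \(T_\gamma\) under \(\operatorname{Hom}(A^*,A)\cong A\otimes A\).
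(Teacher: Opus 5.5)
Your proposal is correct and matches the paper's intended argument. The paper states this corollary without a separate proof, as the immediate combination of Proposition~\ref{prop:generalized-Hessian-O-operator} (whose proof shows that $T_\gamma$ is symmetric and satisfies \eqref{eq:generalized-O-operator-dual}) with the ``if'' direction of Theorem~\ref{thm:generalized-O-operator-S-equation}, and the hypothesis $0\neq u\in\operatorname{Ann}(A)$ comes from the definition of a $u$-generalized Hessian pre-Lie algebra exactly as you note.
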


Conversely, a symmetric solution of the \(u\)-generalized \(S\)-equation also gives rise to a \(u\)-generalized Hessian pre-Lie algebra.

\begin{prop}\label{thm:nondegenerate-S-solution-generalized-Hessian}
Let \((A,\cdot)\) be a pre-Lie algebra, let \(0\neq u\in\operatorname{Ann}(A)\), and let \(r=\sum_i x_i\otimes y_i\in A\otimes A\) be a symmetric solution of the \(u\)-generalized \(S\)-equation. Define \(r^\sharp:A^*\to A\) by
\begin{equation*}
        \langle \xi,r^\sharp(\eta)\rangle
        =
        \langle \xi\otimes\eta,r\rangle
\end{equation*}
for all \(\xi,\eta\in A^*\). If \(r^\sharp\) is a linear isomorphism, then
\begin{equation*}
        \gamma(x,y)
        =
        \bigl\langle (r^\sharp)^{-1}(x),y\bigr\rangle
\end{equation*}
for all \(x,y\in A\) defines a \(u\)-generalized Hessian pre-Lie algebra \((A,\cdot,\gamma,u)\).
\end{prop}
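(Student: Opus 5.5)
The plan is to run the argument of Proposition~\ref{prop:generalized-Hessian-O-operator} backwards, using Theorem~\ref{thm:generalized-O-operator-S-equation} as the bridge between the tensor equation and the operator identity.

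First, set \(T=r^\sharp\) and check that \(T\) is symmetric, i.e.\ \(\langle \xi,T(\eta)\rangle=\langle \eta,T(\xi)\rangle\). This follows from \(r=\sigma(r)\), since \(\langle \xi\otimes\eta,r\rangle=\langle \eta\otimes\xi,\sigma(r)\rangle\).

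Next, I will express \(r\) in the form used in Theorem~\ref{thm:generalized-O-operator-S-equation}. Expanding in a basis gives \(r=\sum_i T(e^i)\otimes e_i\), up to the convention for which slot \(r^\sharp\) reads. By symmetry of \(r\) the two conventions agree, so this presentation holds.

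Theorem~\ref{thm:generalized-O-operator-S-equation} then applies, because \(0\neq u\in\operatorname{Ann}(A)\). It tells us that \(T\) satisfies \eqref{eq:generalized-O-operator-dual}.

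Now define \(\gamma(x,y)=\langle T^{-1}(x),y\rangle\). This form is nondegenerate because \(T\) is invertible. It is symmetric: writing \(x=T\xi\) and \(y=T\eta\), we get \(\gamma(x,y)=\langle \xi,T\eta\rangle\), which is symmetric by the first step. Note also that \(T^{-1}=\gamma^\sharp\), so \(T=T_\gamma\) in the notation of Proposition~\ref{prop:generalized-Hessian-O-operator}.

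The main step is recovering the \(u\)-generalized \(2\)-cocycle identity. Take arbitrary \(a,b\in A\) and set \(\xi=\gamma^\sharp(a)\), \(\eta=\gamma^\sharp(b)\), so that \(a=T\xi\) and \(b=T\eta\). Pair both sides of \eqref{eq:generalized-O-operator-dual} against \(z\) through \(\gamma\), i.e.\ apply \(\gamma(\,\cdot\,,z)\).

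The left side gives \(\gamma(a\cdot b,z)\). For the right side, the computation in the proof of Proposition~\ref{prop:generalized-Hessian-O-operator} is purely formal: it uses only the definitions of \(\operatorname{ad}^*\) and \(R^*\) and the symmetry of \(\gamma\). So it yields
\[
-\gamma(b,[a,z])+\gamma(a,z\cdot b)+\gamma(a,u)\gamma(b,z)-\gamma(a,b)\gamma(u,z).
\]
Since the operator identity holds, this expression equals \(\gamma(a\cdot b,z)\) for all \(a,b,z\).

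The remaining obstacle is bookkeeping: confirming that this equality is exactly the cocycle identity applied to the triple \((a,z,b)\). Expanding \([a,z]=a\cdot z-z\cdot a\), the equality reads
\[
\gamma(a\cdot b,z)+\gamma(b,a\cdot z)-\gamma(b,z\cdot a)-\gamma(a,z\cdot b)-\gamma(a,u)\gamma(b,z)+\gamma(a,b)\gamma(u,z)=0.
\]
Using the symmetry of \(\gamma\), I will rewrite this as
\[
\gamma(a\cdot z,b)-\gamma(a,z\cdot b)-\gamma(z\cdot a,b)+\gamma(z,a\cdot b)-\gamma(a,u)\gamma(z,b)+\gamma(z,u)\gamma(a,b)=0,
\]
which is precisely the defining identity with \(x=a\), \(y=z\), \(z=b\). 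Since \(a,b,z\) range over all of \(A\), \(\gamma\) is a \(u\)-generalized \(2\)-cocycle, and hence \((A,\cdot,\gamma,u)\) is a \(u\)-generalized Hessian pre-Lie algebra.

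The sign and slot tracking in this last rearrangement is the step I will check most carefully, term by term.
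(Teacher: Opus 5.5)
Your proposal is correct and follows the paper's proof. The common route is: invoke Theorem~\ref{thm:generalized-O-operator-S-equation} to get the operator identity for \(T=r^\sharp\), pair it with \(z\) via \(\gamma\) to obtain \(\gamma(a\cdot b,z)=-\gamma(b,[a,z])+\gamma(a,z\cdot b)+\gamma(a,u)\gamma(b,z)-\gamma(a,b)\gamma(u,z)\), and use surjectivity of \(T\). The only difference is in the last step: the paper antisymmetrizes this identity in \(a,b\) to reach the cocycle identity on \((a,b,c)\), whereas you read it directly as the cocycle identity on \((a,z,b)\), which reverses the step in Proposition~\ref{prop:generalized-Hessian-O-operator}; your term-by-term rearrangement checks out.
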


\begin{proof}
The nondegeneracy of \(\gamma\) follows from the invertibility of
\(r^\sharp\). Since \(r\) is symmetric, \(r^\sharp\) is symmetric with respect to the natural pairing. Hence, for \(x=r^\sharp(\xi)\) and
\(y=r^\sharp(\eta)\), we have
\begin{equation*}
        \gamma(x,y)
        =
        \langle \xi,r^\sharp(\eta)\rangle
        =
        \langle \eta,r^\sharp(\xi)\rangle
        =
        \gamma(y,x).
\end{equation*}
Thus \(\gamma\) is symmetric.

It remains to prove that \(\gamma\) is a \(u\)-generalized \(2\)-cocycle. Set \(T=r^\sharp\). Then \(r=\sum_iT(e^i)\otimes e_i\), and by Theorem~\ref{thm:generalized-O-operator-S-equation}, \(T\) satisfies
\begin{equation*}
\begin{aligned}
T(\xi)\cdot T(\eta)
={}&
T\bigl(
\operatorname{ad}_{T(\xi)}^*\eta
-
R_{T(\eta)}^*\xi
\bigr)
+
\langle \xi,u\rangle T(\eta)
-
\langle \xi,T(\eta)\rangle u
\end{aligned}
\end{equation*}
for all \(\xi,\eta\in A^*\). Let \(a=T(\xi)\), \(b=T(\eta)\), and
\(c=T(\zeta)\). Using the definitions of \(\operatorname{ad}^*\) and \(R^*\),
we get
\begin{equation*}
\begin{aligned}
\gamma(a\cdot b,c)
={}&
-\gamma(b,[a,c])
+\gamma(a,c\cdot b)
+\gamma(a,u)\gamma(b,c)
-\gamma(a,b)\gamma(u,c).
\end{aligned}
\end{equation*}
By interchanging \(a\) and \(b\) in this identity and subtracting, we get
\begin{equation*}
\begin{aligned}
&\gamma(a\cdot b,c)-\gamma(a,b\cdot c)
-\gamma(b\cdot a,c)+\gamma(b,a\cdot c) \\
={}&
\gamma(a,u)\gamma(b,c)-\gamma(b,u)\gamma(a,c).
\end{aligned}
\end{equation*}
Therefore
\begin{equation*}
\begin{aligned}
&\gamma(a\cdot b,c)-\gamma(a,b\cdot c)
-\gamma(b\cdot a,c)+\gamma(b,a\cdot c) \\
&\quad
-\gamma(a,u)\gamma(b,c)
+\gamma(b,u)\gamma(a,c)=0 .
\end{aligned}
\end{equation*}
Since \(T=r^\sharp\) is surjective, this identity holds for all \(a,b,c\in A\). Hence \(\gamma\) is a nondegenerate \(u\)-generalized \(2\)-cocycle, and \((A,\cdot,\gamma,u)\) is a \(u\)-generalized Hessian pre-Lie algebra.
\end{proof}

\subsection{Two Types of Symmetric Solutions}

Let \((A,\cdot)\) be a pre-Lie algebra, let \(0\neq u\in \operatorname{Ann}(A)\), and let
\[
        r=\sum_i x_i\otimes y_i\in S^2(A)
\]
be a symmetric solution of the \(u\)-generalized \(S\)-equation. Equivalently, this equation can be written in the following expanded tensor form:
\begin{equation}\label{eq:u-S-expanded-type}
\begin{aligned}
0
=
\sum_{i,j}\Big(
&-x_i\cdot x_j\otimes y_i\otimes y_j
+x_i\otimes y_i\cdot x_j\otimes y_j
+x_i\otimes x_j\otimes [y_i,y_j]
\Big)  \\
&-\sum_i u\otimes x_i\otimes y_i
+\sum_i x_i\otimes u\otimes y_i ,
\end{aligned}
\end{equation}
where the bracket is taken in \(A_{\mathrm c}\).

Define the linear map \(r^\sharp:A^*\to A\) from Proposition~\ref{thm:nondegenerate-S-solution-generalized-Hessian} by
\begin{equation}\label{e4}
        r^\sharp(\xi)
        =
        \sum_i\langle \xi,y_i\rangle x_i
        =
        \sum_i\langle \xi,x_i\rangle y_i
\end{equation}
for all \(\xi\in A^*\), and let \(H=\operatorname{Im}(r^\sharp)\). Then
\(H\) is a subspace of \(A\) and \(r\in H\otimes H\). Indeed, let \(\dim H=k\), and let \(e_1,\ldots,e_k\) be a basis of \(H\) extended to a
basis \(e_1,\ldots,e_n\) of \(A\). Then \(r\) can be written in the form
\begin{equation*}
        r=\sum_{p,q=1}^k \alpha_{pq}e_p\otimes e_q
        =
        \sum_{q=1}^k a_q\otimes e_q,
        \qquad a_q=r^\sharp(e^q)\in H .
\end{equation*}
And \(H=\operatorname{span}\{a_1,\ldots,a_k\}\).

Let \(e_1^*,\ldots,e_n^*\) be the dual basis of A. Applying \(\operatorname{id}\otimes e^p\otimes e^q\), \(1\leq p,q\leq k\), to \eqref{eq:u-S-expanded-type}, we obtain
\[
r^\sharp(e_p^*)\cdot r^\sharp(e_q^*)\in H+\langle u\rangle
\]
Therefore
\[
        H\cdot H\subseteq H+\langle u\rangle .
\]
Since \(u\in\operatorname{Ann}(A)\), it follows that
\[
        H_1:=H+\langle u\rangle
\]
is a pre-Lie subalgebra of \(A\).
Thus two cases naturally arise:
\begin{enumerate}
    \item If \(u\in H\), then \(H\) itself is a pre-Lie subalgebra of
    \(A\). We call such symmetric solutions \emph{solutions of type \(1\)}.

    \item If \(u\notin H\), then \(H_1\neq H\), and \(H_1=H\oplus \langle u\rangle\) is a
    pre-Lie subalgebra of \(A\). We call such symmetric solutions
    \emph{solutions of type \(2\)}.
\end{enumerate}

\begin{ex}
Let \(A=\operatorname{span}\{x,y,u\}\) be the \(3\)-dimensional pre-Lie algebra with nonzero products \(x\cdot y=y\cdot x=-u\). Then \(u\in\operatorname{Ann}(A)\), and \(r=x\otimes y+y\otimes x\) is a solution of the \(u\)-generalized \(S\)-equation. Indeed, the classical \(S\)-equation part is cancelled exactly by the \(u\)-correction terms.

Moreover, the map \(r^\sharp:A^*\to A\) associated with \(r\) satisfies
\begin{equation*}
        \operatorname{Im}(r^\sharp)=\operatorname{span}\{x,y\}.
\end{equation*}
Since \(x\cdot y=-u\notin \operatorname{span}\{x,y\}\), this subspace is not a pre-Lie subalgebra of \(A\). Hence \(r\) is a symmetric solution of
type \(2\).
\end{ex}

\begin{thm}\label{prop:type1-LSA}
Let \((A,\cdot)\) be a pre-Lie algebra and let \(0\neq u\in\operatorname{Ann}(A)\). Then symmetric solutions of the \(u\)-generalized \(S\)-equation of type \(1\) are in one-to-one correspondence with \(u\)-generalized Hessian pre-Lie subalgebras \((H,\cdot,\gamma,u)\) of \(A\).
\end{thm}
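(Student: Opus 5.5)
The plan is to define maps in both directions and check that they are mutually inverse, reducing each direction to Proposition~\ref{prop:generalized-Hessian-O-operator} and Proposition~\ref{thm:nondegenerate-S-solution-generalized-Hessian} applied to the subalgebra \(H\) instead of \(A\).

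\emph{From type 1 solutions to subalgebras.} Let \(r\) be a symmetric solution of type \(1\) and set \(H=\operatorname{Im}(r^\sharp)\). Then \(u\in H\), \(H\) is a pre-Lie subalgebra, and \(u\in\operatorname{Ann}(H)\) because \(u\in\operatorname{Ann}(A)\). Since \(r\in H\otimes H\) is symmetric, restricting the dual basis shows that \(r\) induces a map \(r_H^\sharp:H^*\to H\). This map is surjective onto \(H\), because every \(\xi\in A^*\) restricts to some \(\xi|_H\in H^*\) with \(r^\sharp(\xi)=r_H^\sharp(\xi|_H)\). So \(r_H^\sharp\) is a linear isomorphism.

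Next I check that \(r\), viewed in \(H^{\otimes 3}\), solves the \(u\)-generalized \(S\)-equation of the pre-Lie algebra \(H\). Every term of \eqref{eq:u-S-expanded-type} lies in \(H^{\otimes3}\): products of elements of \(H\) stay in \(H\), and \(u\in H\). The inclusion \(H^{\otimes 3}\subseteq A^{\otimes 3}\) is injective, so the equation in \(A\) is the same as the equation in \(H\). Proposition~\ref{thm:nondegenerate-S-solution-generalized-Hessian} applied to \(H\) then gives the \(u\)-generalized Hessian pre-Lie subalgebra \((H,\cdot,\gamma,u)\), where \(\gamma(x,y)=\langle (r_H^\sharp)^{-1}(x),y\rangle\).

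\emph{From subalgebras to type 1 solutions.} Let \((H,\cdot,\gamma,u)\) be a \(u\)-generalized Hessian pre-Lie subalgebra with \(u\in H\). Corollary~\ref{cor:GHLSA-symmetric-solution} applied to \(H\) gives a symmetric solution \(r_\gamma\in H\otimes H\) of the \(u\)-generalized \(S\)-equation of \(H\). Viewed inside \(A^{\otimes3}\), it is a solution for \(A\), since the equation only involves products within \(H\) and the element \(u\). We have \(\operatorname{Im}(r_\gamma^\sharp)=\operatorname{Im}(T_\gamma)\), once we identify \(H^*\) with \(A^*/H^\perp\) and note that \(r^\sharp\) factors through restriction. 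Since \(T_\gamma\) is bijective onto \(H\), this image is \(H\). It contains \(u\), so the solution is of type \(1\).

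\emph{Mutual inverses.} Starting from \(r\), we get \(\gamma^\sharp=(r_H^\sharp)^{-1}\), and then \(T_\gamma=r_H^\sharp\), which recovers \(r\). Conversely, starting from \(\gamma\), the solution \(r_\gamma\) has image \(H\), and inverting \(r_{\gamma,H}^\sharp=T_\gamma\) returns \(\gamma\).

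The main obstacle is the bookkeeping between \(A^*\) and \(H^*\). One must check that \(r^\sharp\) factors through restriction \(A^*\to H^*\), so that the image is exactly \(H\) and the two constructions are inverse. One must also note that \(u\in H\) is exactly what keeps the correction terms \(u\otimes x_i\otimes y_i\) and \(x_i\otimes u\otimes y_i\) inside \(H^{\otimes3}\). For type \(2\) solutions this last point fails, and that is why type \(2\) needs a separate treatment.
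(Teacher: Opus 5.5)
Your proposal is correct and follows essentially the same route as the paper: set $H=\operatorname{Im}(r^\sharp)$, use $u\in H$ to see that the $u$-generalized $S$-equation on $A$ coincides with the one on $H$, and then apply Proposition~\ref{thm:nondegenerate-S-solution-generalized-Hessian} in one direction and Proposition~\ref{prop:generalized-Hessian-O-operator} together with Theorem~\ref{thm:generalized-O-operator-S-equation} (that is, Corollary~\ref{cor:GHLSA-symmetric-solution}) in the other. The differences are cosmetic: you get the isomorphism $H^*\to H$ from surjectivity plus a dimension count rather than from $\ker r^\sharp=H^\circ$, and you spell out the mutual-inverse check that the paper leaves implicit.
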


\begin{proof}
Let \(r=\sum_i x_i\otimes y_i\in S^2(A)\) be a symmetric solution of the
\(u\)-generalized \(S\)-equation of type \(1\). Define \(r^\sharp:A^*\to A\) by \eqref{e4} and set \(H=\operatorname{Im}(r^\sharp)\). Since \(r\) is of type \(1\), we have \(u\in H\). Hence \(H\) is a pre-Lie subalgebra of \(A\). Moreover, \(r\in S^2(H)\).

Let $H^\circ=\{\xi\in A^*\mid \xi(H)=0\}.$ Since \(r\) is symmetric, \(\ker r^\sharp=H^\circ\). Indeed, for any \(\xi,\alpha\in A^*\), we have
\begin{equation*}
        \langle \alpha,r^\sharp(\xi)\rangle
        =
        \langle \alpha\otimes\xi,r\rangle
        =
        \langle \xi\otimes\alpha,r\rangle
        =
        \langle \xi,r^\sharp(\alpha)\rangle .
\end{equation*}
Thus \(\xi\) annihilates \(\operatorname{Im}(r^\sharp)\) if and only if
\(r^\sharp(\xi)=0\). Therefore \(r^\sharp\) induces an isomorphism
\begin{equation*}
        \overline{r^\sharp}:H^*\cong A^*/H^\circ\to H.
\end{equation*}
Now regard \(r\) as an element of \(S^2(H)\). Since \(\overline{r^\sharp}\) is an isomorphism, \(r\) is nondegenerate on \(H\). As \(H\) is a pre-Lie subalgebra of \(A\) and \(u\in H\), the \(u\)-generalized \(S\)-equation for \(r\) in \(A\) is exactly the same equation in \(H\). 

Define
\begin{equation*}
        \gamma(a,b)
        =
        \left\langle (\overline{r^\sharp})^{-1}(a),b\right\rangle
\end{equation*}
for all \(a,b\in H\). By Proposition~\ref{thm:nondegenerate-S-solution-generalized-Hessian}, applied to the pre-Lie algebra \(H\), the quadruple \((H,\cdot,\gamma,u)\) is a \(u\)-generalized Hessian pre-Lie algebra. Thus every type \(1\) solution gives a \emph{\(u\)-generalized Hessian pre-Lie subalgebra} of \(A\).

Conversely, let \((H,\cdot,\gamma,u)\) be a \emph{\(u\)-generalized Hessian
pre-Lie subalgebra} of \(A\). By Proposition~\ref{prop:generalized-Hessian-O-operator}, \(T_\gamma=(\gamma^\sharp)^{-1}:H^*\to H\) is a symmetric \emph{\(u\)-generalized relative Rota--Baxter operator associated with the coregular representation} of \(H\). Hence, by Theorem~\ref{thm:generalized-O-operator-S-equation}, for any basis \(\{e_i\}\) of \(H\) with dual basis \(\{e_i^*\}\),
\begin{equation*}
        r_\gamma=\sum_i T_\gamma(e_i^*)\otimes e_i
        \in S^2(H)
\end{equation*}
is a symmetric solution of the \(u\)-generalized \(S\)-equation on
\(H\). Since \(H\) is a pre-Lie subalgebra of \(A\) and \(u\in H\), the
same tensor is also a solution on \(A\). Moreover, the associated map \(r_\gamma^\sharp:A^*\to A\) is given by
\begin{equation*}
        r_\gamma^\sharp(\xi)=T_\gamma(\xi|_H)
\end{equation*}
for all \(\xi\in A^*\). Since the restriction map \(A^*\to H^*\) is surjective, we get
\begin{equation*}
        \operatorname{Im}(r_\gamma^\sharp)=H .
\end{equation*}
Since \(u\in H\), the resulting solution is of type \(1\). This proves the correspondence.
\end{proof}

\begin{thm}\label{prop:type2-LSA}
Let \((A,\cdot)\) be a pre-Lie algebra and let \(0\neq u\in \operatorname{Ann}(A)\). Then symmetric solutions of the \(u\)-generalized \(S\)-equation of type \(2\) are in one-to-one correspondence with pairs \((H,\gamma)\), where \(H\subseteq A\) is a subspace and \(\gamma\in S^2(H^*)\) is nondegenerate, satisfying the following conditions:
\begin{enumerate}
    \item \(u\notin H\);
    \item \(x*y:=x\cdot y+\gamma(x,y)u\in H\) for all \(x,y\in H\);
    \item \((H,*,\gamma)\) is a Hessian pre-Lie algebra.
\end{enumerate}
\end{thm}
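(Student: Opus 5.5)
Given a symmetric solution $r$ of type $2$, put $H=\operatorname{Im}(r^\sharp)$. Exactly as in the proof of Theorem~\ref{prop:type1-LSA}, symmetry gives $\ker r^\sharp=H^\circ$ and an isomorphism $\overline{r^\sharp}:H^*\to H$. We then define a nondegenerate $\gamma\in S^2(H^*)$ by $\gamma(a,b)=\langle(\overline{r^\sharp})^{-1}(a),b\rangle$. Condition (1) holds by the definition of type $2$.

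To get (2) and (3), I would use Theorem~\ref{thm:generalized-O-operator-S-equation} with $T=r^\sharp$. Let $a=T(\xi)$, $b=T(\eta)$ and $c=T(\zeta)$, with $a,b,c$ ranging over $H$. I would \emph{not} pair with $\gamma$ directly, since $a\cdot b$ need not lie in $H$. Instead I pair the identity \eqref{eq:generalized-O-operator-dual} with $\zeta$:
\begin{equation*}
\langle\zeta,a\cdot b\rangle=-\gamma(b,[a,c])+\gamma(a,c\cdot b)+\langle\xi,u\rangle\gamma(b,c)-\gamma(a,b)\langle\zeta,u\rangle .
\end{equation*}
The key choice is to take $\xi,\zeta\in A^*$ with $\langle\xi,u\rangle=\langle\zeta,u\rangle=0$. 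This is possible because $u\notin H$: we may choose $H^*$-representatives vanishing on $u$, using $A^*=(H\oplus\langle u\rangle)^\circ$-compatible complements. With this choice, $\gamma(x,y)=\langle\xi,y\rangle$ for $x=T(\xi)$, consistently on $H\oplus\langle u\rangle$.

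Projecting the identity along the decomposition $H_1=H\oplus\langle u\rangle$, which contains $H\cdot H$, write $a\cdot b=h(a,b)+\lambda(a,b)u$. The $u$-component should come out as $\lambda=-\gamma(a,b)$. This is seen by pairing with some $\zeta$ vanishing on $H$ and with $\langle\zeta,u\rangle=1$: then $T(\zeta)=0$, and the displayed identity gives $\langle\zeta,a\cdot b\rangle=-\gamma(a,b)$. Hence $a*b=a\cdot b+\gamma(a,b)u=h(a,b)\in H$, which is (2). The $H$-component then reads $\gamma(a*b,c)=-\gamma(b,[a,c])+\gamma(a,c\cdot b)$. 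Because $c\cdot b$ and $[a,c]$ differ from $c*b$ and $[a,c]_*$ only by multiples of $u$, this becomes $\gamma(a*b,c)=-\gamma(b,[a,c]_*)+\gamma(a,c*b)$ on $H$. Antisymmetrizing in $a,b$, exactly as in Proposition~\ref{thm:nondegenerate-S-solution-generalized-Hessian}, gives the ordinary Hessian $2$-cocycle condition for $(H,*,\gamma)$.

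It remains to show that $*$ is pre-Lie on $H$. The associator of $*$ differs from that of $\cdot$ only by terms $\gamma(\cdot,\cdot)u$ multiplied by elements of $A$, and these vanish since $u\in\operatorname{Ann}(A)$, together with the term $\gamma(a*b,c)u-\gamma(a,b*c)u$. After antisymmetrization in $a,b$, the latter vanishes exactly by the Hessian cocycle condition. So the left-symmetric identity of $\cdot$ transfers to $*$, which gives (3).

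For the converse, let $(H,\gamma)$ satisfy (1)--(3). Apply Proposition~\ref{prop:generalized-Hessian-O-operator} with $u=0$ to the Hessian algebra $(H,*,\gamma)$, and take $r=\sum T_\gamma(e_i^*)\otimes e_i\in S^2(H)\subset S^2(A)$. Reversing the computation above, with pairings against $\zeta$ split into the $H^*$-part and the $u$-coefficient, shows that \eqref{eq:generalized-O-operator-dual} holds for $r^\sharp(\xi)=T_\gamma(\xi|_H)$. Then $\operatorname{Im}r^\sharp=H\not\ni u$, so $r$ is of type $2$. The two constructions are mutually inverse, since $r$ and $\gamma$ determine each other via $\overline{r^\sharp}$.

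The main obstacle is the bookkeeping of the $u$-direction: $A^*$ pairs with $u$, but $\gamma$ lives only on $H$. The proof hinges on choosing the splitting $A=H\oplus\langle u\rangle\oplus W$ and dual representatives consistently, so that the $u$-coefficient of $a\cdot b$ is exactly $-\gamma(a,b)$. I would fix this splitting first and verify that identity before anything else.
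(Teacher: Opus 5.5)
Your proof is correct, but for the Hessian condition and for the converse it takes a different route from the paper's.

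The paper gets (2) from the same operator identity, but reads it off directly. In $a\cdot b=r^\sharp(\operatorname{ad}_a^*\eta-R_b^*\xi)+\langle\xi,u\rangle b-\gamma(a,b)u$ the first two terms already lie in $H$, so it needs no projection onto $H\oplus\langle u\rangle$ and no pairing with $\zeta\in H^\circ$, $\langle\zeta,u\rangle=1$. It gets the pre-Lie property of $*$ from the quotient $(H\oplus\langle u\rangle)/\langle u\rangle$, which does not need the cocycle. For the Hessian condition it switches to the tensor picture. Substituting $a\cdot b=a*b-\gamma(a,b)u$ into the $u$-generalized $S$-equation cancels the $u$-terms, so $r$ solves the ordinary $S$-equation of $(H,*)$. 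The classical correspondence between nondegenerate symmetric solutions and Hessian structures then finishes the forward direction. The converse is the same tensor cancellation run backwards.

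You instead stay at the operator level throughout. You derive the cocycle by the antisymmetrization of Proposition~\ref{thm:nondegenerate-S-solution-generalized-Hessian}, transfer the left-symmetric identity from $\cdot$ to $*$ using that cocycle, and for the converse verify \eqref{eq:generalized-O-operator-dual} for $T_\gamma(\xi|_H)$ by hand. That check does work out: $T(\operatorname{ad}_a^*\eta-R_b^*\xi)=a*b-\langle\xi,u\rangle b$. Your route actually carries out the $u$-term cancellation that the paper only sketches via its ``inverse-tensor identities''. The paper's route is shorter because it relies on the classical $u=0$ correspondence.

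Finally, the normalization $\langle\xi,u\rangle=\langle\zeta,u\rangle=0$ that you flag as the main obstacle is harmless but unnecessary. For arbitrary representatives, the $u$-part $-\gamma(c,b)\langle\xi,u\rangle$ of $\langle\xi,c\cdot b\rangle$ cancels the term $\langle\xi,u\rangle\gamma(b,c)$. Likewise, the term $-\gamma(a,b)\langle\zeta,u\rangle$ occurs on both sides and drops out.
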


\begin{proof}
Let \(r=\sum_i x_i\otimes y_i\in S^2(A)\) be a type \(2\) symmetric solution,
define \(r^\sharp:A^*\to A\) by \eqref{e4} and set \(H=\operatorname{Im}(r^\sharp)\). As in the proof of Theorem~\ref{prop:type1-LSA}, \(r^\sharp\) induces an isomorphism \(\overline{r^\sharp}:H^*\cong A^*/H^\circ\to H\), where \(H^\circ=\{\xi\in A^*\mid \xi(H)=0\}\). Define
\begin{equation}\label{eq:type2-omega-LSA}
        \gamma(a,b)=\bigl\langle (\overline{r^\sharp})^{-1}(a),b\bigr\rangle
\end{equation}
for all \(a,b\in H\). Then \(\gamma\) is symmetric and nondegenerate.

Let \(f,g\in A^*\), and put \(a=r^\sharp(f)\), \(b=r^\sharp(g)\). By the \(u\)-generalized relative Rota--Baxter operators identity associated with the
\(u\)-generalized \(S\)-equation, we have
\begin{equation*}
\begin{aligned}
a\cdot b
={}&
r^\sharp\bigl(
\operatorname{ad}_a^*g
-
R_b^*f
\bigr)
+
\langle f,u\rangle b
-
\langle f,b\rangle u .
\end{aligned}
\end{equation*}
Since \(\langle f,b\rangle=\gamma(a,b)\), it follows that \(a\cdot b+\gamma(a,b)u\in H\). Thus the product \(a*b:=a\cdot b+\gamma(a,b)u\) is well defined on \(H\). Since \(u\in\operatorname{Ann}(A)\), the space \(H\oplus\langle u\rangle\) is a pre-Lie subalgebra of \(A\), and the original product on this subalgebra is given by
\begin{equation*}
        a\cdot b=a*b-\gamma(a,b)u,
        \qquad
        u\cdot (H\oplus\langle u\rangle)
        =
        (H\oplus\langle u\rangle)\cdot u=0 .
\end{equation*}
In particular, \((H,*)\) is the quotient pre-Lie algebra \((H\oplus\langle u\rangle)/\langle u\rangle\).

Substituting \(a\cdot b=a*b-\gamma(a,b)u\) into the \(u\)-generalized \(S\)-equation, the \(u\)-terms cancel by the inverse-tensor identities for \(r\) and \(\gamma\). Hence \(r\) viewed as an element of \(S^2(H)\), satisfies the ordinary \(S\)-equation in the pre-Lie algebra \((H,*)\). Since \(r\) is nondegenerate on \(H\), the standard correspondence between nondegenerate symmetric solutions of the classical \(S\)-equation and Hessian pre-Lie algebras implies that \((H,*,\gamma)\) is a Hessian pre-Lie algebra.

Conversely, suppose that a pair \((H,\gamma)\) satisfies the three stated
conditions. Let \(e_1,\ldots,e_k\) be a basis of \(H\), and let \(C=(c_{ij})\) be the inverse of the matrix \((\gamma(e_i,e_j))\). Define
\begin{equation*}
        r=\sum_{i,j=1}^k c_{ij}e_i\otimes e_j\in S^2(H)\subseteq S^2(A).
\end{equation*}
Thus \(r\) is the inverse tensor of \(\gamma\). In particular, for every
\(a\in H\),
\begin{equation}\label{eq:type2-inverse-identity-LSA}
        \sum_{i,j=1}^k c_{ij}\gamma(e_j,a)e_i=a .
\end{equation}
Since \((H,*,\gamma)\) is a Hessian pre-Lie algebra, the tensor \(r\) is a symmetric solution of the classical \(S\)-equation of \((H,*)\). Now use $x\cdot y=x*y-\gamma(x,y)u$ for all \(x,y\in H\). Because \(\gamma\) is symmetric, the sub-adjacent Lie brackets determined by \(\cdot\) and by \(*\) agree on \(H\). Hence the classical \(S\)-equation part with respect to \(*\) vanishes, and the additional terms produced by \(-\gamma(x,y)u\) are exactly cancelled by the correction terms
\begin{equation*}
        -(u\otimes1\otimes1)r_{23}
        +(1\otimes u\otimes1)r_{13}
\end{equation*}
in the \(u\)-generalized \(S\)-equation. Indeed, \eqref{eq:type2-inverse-identity-LSA} implies the tensor identities
\begin{equation*}
        \sum_{p,q}\gamma(x_p,x_q)y_p\otimes y_q=r,
        \qquad
        \sum_{p,q}\gamma(y_p,x_q)x_p\otimes y_q=r,
\end{equation*}
where \(r=\sum_p x_p\otimes y_p\). Thus
\begin{equation*}
        -r_{12}\cdot r_{13}
        +r_{12}\cdot r_{23}
        +[r_{13},r_{23}]
        -(u\otimes1\otimes1)r_{23}
        +(1\otimes u\otimes1)r_{13}
        =
        0 .
\end{equation*}
Since \(u\notin H\), the solution is of type \(2\). 
\end{proof}

\begin{prop}\label{prop:shift-type2-LSA}
Let \((A,\cdot)\) be a pre-Lie algebra, let \(0\neq u\in\operatorname{Ann}(A)\), and let \(r\in S^2(A)\) be a symmetric solution of the classical \(S\)-equation. Let \((H,\cdot,\gamma)\) be the Hessian pre-Lie subalgebra of \(A\) corresponding to \(r\), and assume that \(u\notin H\). Let \(e_1,\ldots,e_k\) be a basis of \(H\), with \(e_i\cdot e_j=\sum_{s=1}^k\kappa_{ij}^s e_s\). Suppose that the linear system
\begin{equation}\label{eq:shift-type2-LSA-system}
        \sum_{s=1}^k\kappa_{ij}^s\lambda_s=\gamma(e_i,e_j)
\end{equation}
has a solution \((\lambda_1,\ldots,\lambda_k)\in F^k\) for all
\(1\le i,j\le k\). If $r=\sum_{i,j=1}^k\alpha_{ij}e_i\otimes e_j,$ then
\begin{equation*}
        r_\lambda
        =
        \sum_{i,j=1}^k
        \alpha_{ij}(e_i+\lambda_i u)\otimes(e_j+\lambda_j u)
\end{equation*}
is a symmetric solution of the \(u\)-generalized \(S\)-equation of type \(2\).
\end{prop}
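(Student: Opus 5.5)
We plan to reduce the statement to Theorem~\ref{prop:type2-LSA} by exhibiting the required pair \((H',\gamma')\) and checking its three conditions. Put \(\phi:H\to A\), \(\phi(x)=x+\lambda(x)u\), where \(\lambda\in H^*\) is defined by \(\lambda(e_s)=\lambda_s\). By linearity, the system \eqref{eq:shift-type2-LSA-system} says exactly
\begin{equation*}
\lambda(x\cdot y)=\gamma(x,y)\qquad \text{for all } x,y\in H .
\end{equation*}
Set \(H'=\phi(H)\), which is the span of the vectors \(e_i+\lambda_i u\). Then \(r_\lambda=(\phi\otimes\phi)(r)\), and \(r_\lambda\) is symmetric because \(r\) is.

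First we check condition (1). The map \(\phi\) is injective, and \(u\notin H'\): if \(x+\lambda(x)u=cu\) with \(x\in H\), then \(x\in H\cap\langle u\rangle=0\), so \(c=0\). Next we define \(\gamma'\) on \(H'\) by \(\gamma'(\phi x,\phi y)=\gamma(x,y)\). It is symmetric and nondegenerate, and since \(r\) is the inverse tensor of \(\gamma\) on \(H\), the tensor \(r_\lambda\) is the inverse tensor of \(\gamma'\) on \(H'\). This matches the correspondence used in the converse half of the proof of Theorem~\ref{prop:type2-LSA}.

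Now we check condition (2). Since \(u\in\operatorname{Ann}(A)\) and \(H\) is a subalgebra, we have
\begin{equation*}
\phi(x)\cdot\phi(y)=x\cdot y.
\end{equation*}
It follows that
\begin{equation*}
\phi(x)*\phi(y)=x\cdot y+\gamma(x,y)u=x\cdot y+\lambda(x\cdot y)u=\phi(x\cdot y)\in H'.
\end{equation*}
So \(*\) preserves \(H'\), and \(\phi:(H,\cdot)\to(H',*)\) is an algebra isomorphism with \(\gamma'=\gamma\circ(\phi^{-1}\times\phi^{-1})\).

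For condition (3), we transport structure along \(\phi\). Because \((H,\cdot,\gamma)\) is a Hessian pre-Lie algebra, \((H',*,\gamma')\) is isomorphic to it and is therefore Hessian as well. We only need that the Hessian cocycle identity is invariant under isomorphisms. Theorem~\ref{prop:type2-LSA} then shows that the inverse tensor of \(\gamma'\), which is \(r_\lambda\), is a symmetric solution of the \(u\)-generalized \(S\)-equation of type \(2\).

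We expect the main obstacle to be confirming that the solution attached to \((H',\gamma')\) in Theorem~\ref{prop:type2-LSA} is exactly \(r_\lambda\), rather than some other element. This comes down to the computation that applying \(\phi\otimes\phi\) to the inverse tensor of \(\gamma\) gives the inverse tensor of \(\gamma'\), which holds because \(\phi\) is an isometry \((H,\gamma)\to(H',\gamma')\). As a fallback, one can substitute \(r_\lambda\) directly into \eqref{eq:u-S-expanded-type}. Annihilator terms kill every \(u\)-component inside the products, and the leftover \(u\)-terms combine through \(\lambda(x\cdot y)=\gamma(x,y)\) and the inverse-tensor identities to cancel the correction terms.
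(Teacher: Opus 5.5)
Your proposal is correct and follows the paper's own proof. Both define the shift \(\Phi(e_i)=e_i+\lambda_i u\) and observe that \(\Phi(x)\cdot\Phi(y)=\Phi(x\cdot y)-\gamma(x,y)u\). Both then transport the Hessian structure to \(\Phi(H)\), note that \(u\notin\Phi(H)\), and invoke Theorem~\ref{prop:type2-LSA}. Your explicit check that \((\phi\otimes\phi)(r)\) is the inverse tensor of \(\gamma'\) fills in a step the paper leaves implicit.
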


\begin{proof}
Define \(\Phi:H\to A\) by \(\Phi(e_i)=e_i+\lambda_i u\). Since \(u\in\operatorname{Ann}(A)\), we have
\begin{equation}\label{eq:shift-type2-LSA-product}
\begin{aligned}
\Phi(e_i)\cdot\Phi(e_j)
&=e_i\cdot e_j
=\sum_{s=1}^k\kappa_{ij}^s e_s                                      \\
&=
\sum_{s=1}^k\kappa_{ij}^s\Phi(e_s)
-
\left(\sum_{s=1}^k\kappa_{ij}^s\lambda_s\right)u                     \\
&=
\Phi(e_i\cdot e_j)-\gamma(e_i,e_j)u .
\end{aligned}
\end{equation}
Define a product on \(\Phi(H)\) by \(\Phi(x)*\Phi(y)=\Phi(x\cdot y)\) for all \(x,y\in H\), and define a symmetric bilinear form \(\gamma_\Phi\) on \(\Phi(H)\) by \(\gamma_\Phi(\Phi(a),\Phi(b))=\gamma(a,b)\) for all \(a,b\in H\). Then \(\Phi:(H,\cdot,\gamma)\to(\Phi(H),*,\gamma_\Phi)\) is an isomorphism of Hessian pre-Lie algebras. 

Moreover, \eqref{eq:shift-type2-LSA-product} gives
\begin{equation*}
        \Phi(x)\cdot\Phi(y)
        =
        \Phi(x)*\Phi(y)
        -
        \gamma_\Phi(\Phi(x),\Phi(y))u.
\end{equation*}
Also \(u\notin\Phi(H)\), because \(\Phi(H)\) is the graph of a linear map
\(H\to \langle u\rangle\) inside \(H\oplus\langle u\rangle\), while \(u\notin H\). Therefore Theorem~\ref{prop:type2-LSA} applies, and
\begin{equation*}
        r_\lambda
        =
        \sum_{i,j=1}^k
        \alpha_{ij}\Phi(e_i)\otimes\Phi(e_j)
\end{equation*}
is a symmetric solution of the \(u\)-generalized \(S\)-equation of type \(2\).
\end{proof}

\begin{ex}
Let \(A=\operatorname{span}\{e_1,e_2,e_3,u\}\) be the \(4\)-dimensional pre-Lie algebra with nonzero products
\begin{equation*}
        e_1\cdot e_1=e_1,\qquad
        e_1\cdot e_2=e_2\cdot e_1=e_2,\qquad
        e_1\cdot e_3=e_3\cdot e_1=e_3,\qquad
        e_2\cdot e_2=e_3 .
\end{equation*}
Then \(A\) is an associative left-symmetric algebra, and \(\operatorname{Ann}(A)=\langle u\rangle\). Let 
\begin{equation*}
      H=\operatorname{span}\{e_1,e_2,e_3\}.
\end{equation*}
Then \(H\) is a pre-Lie subalgebra of \(A\). Define a symmetric bilinear form
\(\gamma\) on \(H\) by
\begin{equation*}
        \gamma(e_1,e_3)=1,\qquad
        \gamma(e_2,e_2)=1,
\end{equation*}
and all other values to be zero. Then \((H,\cdot,\gamma)\) is a Hessian pre-Lie subalgebra of \(A\). The inverse tensor of \(\gamma\) is
\begin{equation*}
        r=e_1\otimes e_3+e_2\otimes e_2+e_3\otimes e_1,
\end{equation*}
which is a symmetric solution of the classical \(S\)-equation of \(A\).

With respect to the basis \(e_1,e_2,e_3\) of \(H\), the linear system \(\sum_{s=1}^3\kappa_{ij}^s\lambda_s=\gamma(e_i,e_j)\) has the solution \((\lambda_1,\lambda_2,\lambda_3)=(0,0,1)\). Thus, setting
\begin{equation*}
        \Phi(e_1)=e_1,\qquad
        \Phi(e_2)=e_2,\qquad
        \Phi(e_3)=e_3+u,
\end{equation*}
Proposition~\ref{prop:shift-type2-LSA} gives that
\begin{equation*}
        r_\lambda
        =
        e_1\otimes(e_3+u)
        +e_2\otimes e_2
        +(e_3+u)\otimes e_1
\end{equation*}
is a symmetric solution of the \(u\)-generalized \(S\)-equation of type \(2\) of the pre-Lie algebra \(A\).
\end{ex}

\subsection{Factorizable Solutions}

We study factorizable solutions of the nontrivial \(A\)-generalized Yang--Baxter equation. For \(r=\sum_i x_i\otimes y_i\in A\otimes A\), define linear maps \(r_+,r_-:A^*\to A\) by
\begin{equation*}
        r_+(\xi)=\sum_i\langle \xi,x_i\rangle y_i,
        \qquad
        r_-(\xi)=r^\sharp(\xi)=\sum_i\langle \xi,y_i\rangle x_i
\end{equation*}
for all \(\xi\in A^*\), and set \(I=r_+-r_-\). A key tool is the following lemma.

\begin{lem}\cite{WangBaiLiuSheng2024}\label{lem:factorizable-r-minus}
Let \((A,\cdot)\) be a pre-Lie algebra and let \(r\in A\otimes A\). Assume that \(I=r_+-r_-\) is a linear isomorphism and that \(r-\sigma(r)\) is \((L,\operatorname{ad})\)-invariant. Let \(\diamond\) be the product on \(A^*\) defined by $\langle \xi\diamond\eta,x\rangle = \langle \xi\otimes\eta,\alpha_r(x)\rangle$ for all \(\xi,\eta\in A^*\) and \(x\in A\). Then \(r\) is a factorizable solution of the classical Yang--Baxter equation of \(A\) if and only if
\begin{equation*}
        r_-(\xi\diamond\eta)=r_-(\xi)\cdot r_-(\eta)
\end{equation*}
for all \(\xi,\eta\in A^*\).
\end{lem}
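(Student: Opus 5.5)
Write \(r=\sum_i x_i\otimes y_i\) and \(s=r-\sigma(r)\). We have \(r_+=r_-+I\). For \(\xi,\eta\in A^*\) set \(\xi\otimes\eta\) paired against \(\alpha_r(x)\). The plan has three steps.

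\textbf{Step 1: reduce to the operator identity.} First I express the failure of \(r_-\) to be a homomorphism as a contraction of \([[r,r]]\). Pair \(\xi\otimes\eta\otimes\operatorname{id}\) with \([[r,r]]\) and contract in the first two slots. Expanding the five terms of \([[r,r]]\) with the definition of \(\alpha_r\) and of \(\diamond\) should give an identity of the form
\begin{equation*}
\bigl(\xi\otimes\eta\otimes\operatorname{id}\bigr)[[r,r]]
= r_-(\xi\diamond\eta)-r_-(\xi)\cdot r_-(\eta) + E(\xi,\eta).
\end{equation*}
Here \(E(\xi,\eta)\) collects the terms involving \(s\), that is, the terms where \(r_+\) appears instead of \(r_-\). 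This is the standard computation behind Bai's operator form: \(r_+\) and \(r_-\) enter through \(x_i\cdot x_j\otimes y_j\) versus \(y_j\otimes x_i\cdot x_j\). Up to ordering conventions, I expect the identity to place \(r_-(\xi\diamond\eta)-r_-(\xi)\cdot r_-(\eta)\) in one slot, with the other slots absorbed into \(\xi,\eta\).

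\textbf{Step 2: show \(E=0\).} Next I use \((L,\operatorname{ad})\)-invariance of \(s\). Pairing \((L_x\otimes\operatorname{id}+\operatorname{id}\otimes\operatorname{ad}_x)s=0\) with \(\xi\otimes\eta\) gives a relation between \(I\) and the multiplication, roughly
\begin{equation*}
I(L_x^*\xi)\ \text{versus}\ [x, I(\xi)]-\text{type terms}.
\end{equation*}
More precisely, \(I\) intertwines the coregular-type action with the adjoint-type action. Substituting \(r_+=r_-+I\) into \(E(\xi,\eta)\) and applying this intertwining relation, all terms involving \(I\) should cancel in pairs. This is exactly the mechanism that also underlies Lemma~\ref{lem:invariant-skew-part}. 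So \(E(\xi,\eta)=0\), and the contraction of \([[r,r]]\) equals \(r_-(\xi\diamond\eta)-r_-(\xi)\cdot r_-(\eta)\).

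\textbf{Step 3: conclude.} Since \(\xi,\eta\) range over all of \(A^*\) and the remaining slot is read off in \(A\), nondegeneracy of the pairing gives the equivalence: \([[r,r]]=0\) if and only if \(r_-\) is multiplicative from \((A^*,\diamond)\) to \((A,\cdot)\). The hypotheses that \(I\) is invertible and \(s\) is invariant are exactly the factorizability conditions. Hence this is the statement that \(r\) is factorizable if and only if the identity holds.

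The main obstacle is the bookkeeping in Step 1. One has to choose which pair of slots of \([[r,r]]\) to contract so that the result isolates \(r_-\) rather than \(r_+\). If the first choice yields \(r_+\) instead, I would contract the other pair of slots, or rewrite the identity using \(r_+=r_-+I\) and invoke the invariance relation from Step 2 once more.
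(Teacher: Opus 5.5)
Your Step~1 contracts the wrong pair of slots, and the identity you expect there cannot hold. For every $r$, the tensor $[[r,r]]$ is antisymmetric under exchanging its first two tensor factors. The flip sends the first term to minus the second and the third to minus the fourth. It sends the fifth term to its own negative after relabelling $i\leftrightarrow j$.

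Hence $\langle \xi\otimes\eta\otimes\operatorname{id},[[r,r]]\rangle$ is skew in $(\xi,\eta)$. Expanding directly, with no hypothesis on $r$, one gets
\[
\langle \xi\otimes\eta\otimes\operatorname{id},[[r,r]]\rangle
= r_+(\xi\diamond\eta-\eta\diamond\xi)-[r_+(\xi),r_+(\eta)],
\qquad
\xi\diamond\eta=\operatorname{ad}^*_{r_+(\xi)}\eta-R^*_{r_-(\eta)}\xi .
\]
This is only a commutator-level statement. Substituting $r_+=r_-+I$ and using invariance cannot turn a skew expression into the non-skew $r_-(\xi\diamond\eta)-r_-(\xi)\cdot r_-(\eta)$. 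So the obstruction is symmetry, not ``$r_+$ versus $r_-$'', and your second fallback would not help. Steps~1 and~2 are also stated as expectations rather than computed, so the actual content of the lemma is not supplied.

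The contraction that works leaves the middle slot free. This is the computation the paper relies on in the proof of Lemma~\ref{lem:u-factorizable-r-minus}. With $a=r_-(\xi)$ and $b=r_-(\eta)$, expanding the five terms gives
\[
\langle \xi\otimes\operatorname{id}\otimes\eta,[[r,r]]\rangle
=-r_+(L^*_b\xi)-b\cdot a+[b,r_+(\xi)]+r_-(\operatorname{ad}^*_b\xi)+r_-(\operatorname{ad}^*_{r_+(\xi)}\eta).
\]
Your intertwining relation is the right tool. Pair the invariance of $r-\sigma(r)$ with $\eta\otimes\xi$ and use $\langle \xi,I(\eta)\rangle=\langle \eta\otimes\xi,r-\sigma(r)\rangle$; this gives $I\circ L^*_x=\operatorname{ad}_x\circ I$.

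Now substitute $r_+=r_-+I$:
\begin{itemize}
\item the terms $-I(L^*_b\xi)=-[b,I(\xi)]$ and $[b,I(\xi)]$ cancel;
\item writing $\operatorname{ad}^*=L^*-R^*$ removes $r_-(L^*_b\xi)$;
\item $[b,a]-b\cdot a=-a\cdot b$.
\end{itemize}
What remains is exactly $r_-(\xi\diamond\eta)-r_-(\xi)\cdot r_-(\eta)$, and your Step~3 then finishes the argument. Note that this identity uses only invariance; invertibility of $I$ enters only through the definition of factorizability.
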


The preceding lemma leads to a characterization of factorizable solutions of
the classical Yang--Baxter equation of a pre-Lie algebra in terms of quadratic Rota--Baxter pre-Lie algebras of nonzero weight. We first recall the following definitions.

\begin{defn}\cite{Li}
Let \((A,\cdot)\) be a pre-Lie algebra and let \(\lambda\in F\). A linear map \(\mathcal R:A\to A\) is called a Rota--Baxter operator of weight \(\lambda\) if
\begin{equation*}
        \mathcal R(x)\cdot \mathcal R(y)
        =
        \mathcal R\bigl(
        \mathcal R(x)\cdot y
        +x\cdot \mathcal R(y)
        +\lambda x\cdot y
        \bigr)
\end{equation*}
for all \(x,y\in A\).
\end{defn}

\begin{defn}\cite{WangBaiLiuSheng2024}
A Rota--Baxter pre-Lie algebra of weight \(\lambda\) is a triple \((A,\cdot,\mathcal R)\), where \((A,\cdot)\) is a pre-Lie algebra and
\(\mathcal R:A\to A\) is a Rota--Baxter operator of weight \(\lambda\).
\end{defn}

\begin{defn}\cite{WangBaiLiuSheng2024}
Let \((A,\cdot,\mathcal R)\) be a Rota--Baxter pre-Lie algebra of weight \(\lambda\), and let \(S\) be a nondegenerate skew-symmetric bilinear form on \(A\). The quadruple \((A,\cdot,\mathcal R,S)\) is called a quadratic Rota--Baxter pre-Lie algebra of weight \(\lambda\) if \((A,\cdot,S)\) is a quadratic pre-Lie algebra and
\begin{equation*}
        S(x,\mathcal R(y))
        +
        S(\mathcal R(x),y)
        +
        \lambda S(x,y)
        =
        0
\end{equation*}
for all \(x,y\in A\).
\end{defn}

\begin{thm}\cite{WangBaiLiuSheng2024}
Let \((A,\cdot,\mathcal R,S)\) be a quadratic Rota--Baxter pre-Lie algebra of nonzero weight \(\lambda\). Let \(I_S:A^*\to A\) be the linear isomorphism determined by \(S(I_S(\xi),y)=\langle \xi,y\rangle\) for all \(\xi\in A^*\) and \(y\in A\).
Define
\begin{equation*}
        r_-=\frac{1}{\lambda}\mathcal R\circ I_S:A^*\to A,
\end{equation*}
and let \(r\in A\otimes A\) be the unique tensor satisfying \(\langle \operatorname{id}\otimes \xi,r\rangle=r_-(\xi)\) for all \(\xi\in A^*\). Then
\begin{equation*}
        r_-(\xi\diamond\eta)=r_-(\xi)\cdot r_-(\eta)
\end{equation*}
for all \(\xi,\eta\in A^*\). Consequently, \(r\) is a factorizable solution
of the classical Yang--Baxter equation of \(A\).
\end{thm}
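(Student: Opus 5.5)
The plan is to apply Lemma~\ref{lem:factorizable-r-minus}. For that I must check three things about $r$: that $I=r_+-r_-$ is a linear isomorphism, that $r-\sigma(r)$ is $(L,\operatorname{ad})$-invariant, and that $r_-$ is multiplicative from $(A^*,\diamond)$ to $(A,\cdot)$.

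The first two should come straight from the quadratic structure. From the definition of $r$, $r_+(\xi)$ is obtained by pairing $\xi$ with the first leg of $r$. Using the skew-symmetry of $S$ and the compatibility $S(x,\mathcal R y)+S(\mathcal R x,y)+\lambda S(x,y)=0$, I expect
\[
r_+=-\tfrac1\lambda(\mathcal R+\lambda\,\mathrm{id})\circ I_S .
\]
Then
\[
I=r_+-r_-=-\tfrac1\lambda(2\mathcal R+\lambda\,\mathrm{id})\circ I_S ,
\]
and possibly the sign conventions of the paper make this just $-I_S$ up to normalization. In any case I would verify directly that $r-\sigma(r)$ corresponds to the map $I_S$, up to a nonzero scalar, so that $I$ is an isomorphism. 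The $(L,\operatorname{ad})$-invariance of $r-\sigma(r)$ translates into the statement that $I_S$ intertwines the coregular action with the adjoint-type action. Via $S(I_S\xi,y)=\langle\xi,y\rangle$, this is exactly the invariance condition that defines a quadratic pre-Lie algebra, namely $S(x\cdot y,z)=-S(y,[x,z])$ or its equivalent in the paper's convention.

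The main step is the identity $r_-(\xi\diamond\eta)=r_-(\xi)\cdot r_-(\eta)$. Write $\xi=S(x,\cdot)$ and $\eta=S(y,\cdot)$, so that $I_S\xi=x$ and $I_S\eta=y$. I would first compute $I_S(\xi\diamond\eta)$ in terms of $x$, $y$ and $\mathcal R$, using $\alpha_r(z)=(L_z\otimes\mathrm{id}+\mathrm{id}\otimes\operatorname{ad}_z)r$ together with the invariance of $S$ to move $L_z$ and $\operatorname{ad}_z$ across the pairing. The target formula is
\[
I_S(\xi\diamond\eta)=\tfrac1\lambda\bigl(\mathcal R(x)\cdot y+x\cdot\mathcal R(y)+\lambda\,x\cdot y\bigr).
\]
Applying $\tfrac1\lambda\mathcal R$ to both sides and using the Rota--Baxter identity of weight $\lambda$ then gives $\tfrac1{\lambda^2}\mathcal R(x)\cdot\mathcal R(y)=r_-(\xi)\cdot r_-(\eta)$, as required.

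I expect the obstacle to be this computation of $I_S(\xi\diamond\eta)$. Pairing with $z$ produces terms such as $S(\mathcal R x, z\cdot w)$-type expressions. Converting these back to the form $S(\,\cdot\,,z)$ requires invariance of $S$ for both $L$ and $\operatorname{ad}$, together with the compatibility condition to exchange $\mathcal R$ with $-\mathcal R-\lambda\,\mathrm{id}$ on the other slot. This exchange is what produces the $\lambda\,x\cdot y$ term. Once the identity holds, Lemma~\ref{lem:factorizable-r-minus} yields that $r$ is a factorizable solution of the classical Yang--Baxter equation.
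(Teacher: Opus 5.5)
Your outline matches the paper's argument. The paper only cites this statement, but its proof of Theorem~\ref{thm:generalized-quadratic-RB-to-u-factorizable}, specialized to $u=0$, is exactly your plan:
\begin{itemize}
\item show $r_+-r_-=I_S$;
\item get $(L,\operatorname{ad})$-invariance from the invariance of $S$;
\item compute $I_S(\xi\diamond\eta)=\frac1\lambda(\mathcal R(x)\cdot y+x\cdot\mathcal R(y)+\lambda x\cdot y)$;
\item apply $\frac1\lambda\mathcal R$ and finish with Lemma~\ref{lem:factorizable-r-minus}.
\end{itemize}

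However, your first step is wrong as written, and that step has to deliver the invertibility of $I$. The sign in $r_+=-\frac1\lambda(\mathcal R+\lambda\,\mathrm{id})\circ I_S$ is incorrect. From the definitions,
\[
\langle\eta,r_+(\xi)\rangle=\langle\xi\otimes\eta,r\rangle=\langle\xi,r_-(\eta)\rangle=\tfrac1\lambda S\bigl(x,\mathcal R(y)\bigr).
\]
The compatibility condition and skew-symmetry then give
\[
S\bigl(x,\mathcal R(y)\bigr)=-S\bigl((\mathcal R+\lambda\,\mathrm{id})x,y\bigr)=S\bigl(y,(\mathcal R+\lambda\,\mathrm{id})x\bigr)=\bigl\langle\eta,(\mathcal R+\lambda\,\mathrm{id})x\bigr\rangle .
\]
The point you missed is that $\eta=S(y,\cdot\,)$ puts $y$ in the first slot. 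Hence $r_+=\frac1\lambda(\mathcal R+\lambda\,\mathrm{id})\circ I_S$ and $r_+-r_-=I_S$ exactly, as the paper shows.

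Your formula $I=-\frac1\lambda(2\mathcal R+\lambda\,\mathrm{id})\circ I_S$ is not a multiple of $I_S$. It fails to be invertible whenever $-\lambda/2$ is an eigenvalue of $\mathcal R$. For example, take a pre-Lie algebra with zero product, any nondegenerate skew form $S$, and $\mathcal R=-\frac\lambda2\,\mathrm{id}$. This is a quadratic Rota--Baxter structure of weight $\lambda$, and your $I$ would be $0$. So your hedge ``up to a nonzero scalar'' is contradicted by your own formula, not supported by it.

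The same sign also matters in your main computation. The two summands of $\alpha_r$ pair to $\langle\xi,z\cdot r_-(\eta)\rangle$ and $\langle\eta,[z,r_+(\xi)]\rangle$. The invariance $S(a\cdot b,c)=-S(b,[a,c])$ gives the two identities
\[
S(x,z\cdot w)=S(x\cdot w,z),\qquad S(y,[z,w])=S(w\cdot y,z).
\]
With the correct $r_+$ you then get
\[
\langle\xi\diamond\eta,z\rangle=\tfrac1\lambda S\bigl(x\cdot\mathcal R(y)+(\mathcal R(x)+\lambda x)\cdot y,\;z\bigr),
\]
which is precisely your target formula for $I_S(\xi\diamond\eta)$. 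With your $r_+$, the sign of $\mathcal R(x)\cdot y+\lambda x\cdot y$ would flip, and the Rota--Baxter identity would no longer close the argument. As written, your two steps are therefore mutually inconsistent. Once the sign of $r_+$ is corrected, both $I=I_S$ and the target formula follow, and the rest of your plan goes through.
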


\begin{thm}\cite{WangBaiLiuSheng2024}
Let \((A,\cdot)\) be a pre-Lie algebra, and let \(r\in A\otimes A\) be a factorizable solution of the classical Yang--Baxter equation of \(A\). Let
\(0\neq\lambda\in F\). Set \(I=r_+-r_-:A^*\to A\), and define
\begin{equation*}
        \mathcal R=\lambda r_-\circ I^{-1}:A\to A,
        \qquad
        S_I(x,y)=\langle I^{-1}(x),y\rangle
\end{equation*}
for all \(x,y\in A\). Then \((A,\cdot,\mathcal R,S_I)\) is a quadratic Rota--Baxter pre-Lie algebra of weight \(\lambda\).
\end{thm}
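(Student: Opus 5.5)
The plan is to reduce the statement to Lemma~\ref{lem:factorizable-r-minus}. For that I need three structural facts: $I=r_+-r_-$ is an isomorphism, $r-\sigma(r)$ is $(L,\operatorname{ad})$-invariant, and $\mathcal R$ is expressed through $r_-$. These come from translating the data. By construction, $r$ factorizable means $I$ is invertible, so $S_I(x,y)=\langle I^{-1}(x),y\rangle$ is nondegenerate.

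\textbf{Step 1: $S_I$ is skew-symmetric.} The earlier computation gives $\langle\xi,(r_+-r_-)(\eta)\rangle=\langle\eta\otimes\xi,r-\sigma(r)\rangle$. This expression is skew in $(\xi,\eta)$, so $I$ is skew with respect to the pairing. Hence $I^{-1}$ is skew, and $S_I$ is skew-symmetric.

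\textbf{Step 2: $(A,\cdot,S_I)$ is quadratic.} Write $x=I(\xi)$, $z=I(\zeta)$. The $(L,\operatorname{ad})$-invariance of $r-\sigma(r)$, paired against $\xi\otimes\zeta$, gives a relation between $L_y^*$ and $\operatorname{ad}_y^*$ transported through $I$. Concretely, $I\circ \operatorname{ad}^*_y=L_y\circ I$ or the analogous identity. Converting this via $S_I$ should yield exactly the invariance identity $S_I(y\cdot x,z)$-type relation defining a quadratic pre-Lie algebra in \cite{WangBaiLiuSheng2024}. I would fix sign conventions by testing on $\xi\otimes\zeta$ paired with $(L_y\otimes\operatorname{id}+\operatorname{id}\otimes\operatorname{ad}_y)(r-\sigma(r))=0$.

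\textbf{Step 3: compatibility condition.} Write $r_+=I+r_-$ and $\mathcal R=\lambda r_-I^{-1}$. Then
\[
S_I(x,\mathcal R y)+S_I(\mathcal R x,y)+\lambda S_I(x,y)
\]
becomes, with $x=I\xi$, $y=I\eta$,
\[
\lambda\bigl(-\langle\xi,r_-\eta\rangle+\langle\eta,r_-\xi\rangle+\langle\eta,I\xi\rangle\bigr).
\]
Here I use skewness from Step 1. Since $\langle\eta,r_-\xi\rangle=\langle\xi,r_+\eta\rangle$ holds directly from the definitions, this equals $\lambda\langle\xi,(r_+-r_-)\eta\rangle+\lambda\langle\eta,I\xi\rangle=0$, again by skewness of $I$.

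\textbf{Step 4: Rota--Baxter identity.} This is the main step. By Lemma~\ref{lem:factorizable-r-minus}, $r_-(\xi\diamond\eta)=r_-(\xi)\cdot r_-(\eta)$. The key is to show that, under $I$, the product $\diamond$ corresponds to the product on $A$ given by
\[
\tfrac{1}{\lambda}\bigl(\mathcal R(x)\cdot y+x\cdot\mathcal R(y)+\lambda x\cdot y\bigr),
\]
i.e.
\[
I(\xi\diamond\eta)=r_-(\xi)\cdot I(\eta)+I(\xi)\cdot r_-(\eta)+I(\xi)\cdot I(\eta),
\]
or a variant equivalent to $r_+(\xi)\cdot\ldots$. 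This is where the invariance from Step 2 is used: unwind $\langle\xi\diamond\eta,x\rangle=\langle\xi\otimes\eta,\alpha_r(x)\rangle$, replace $r$ by $\sigma(r)+(r-\sigma(r))$, and use invariance to express $L^*$ and $\operatorname{ad}^*$ actions through $I$. Applying $r_-$ and substituting $x=I\xi$, $y=I\eta$ then gives
\[
\lambda^{-2}\mathcal R(x)\cdot\mathcal R(y)=\lambda^{-2}\mathcal R\bigl(\mathcal R(x)\cdot y+x\cdot\mathcal R(y)+\lambda x\cdot y\bigr).
\]

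I expect the bookkeeping of signs and conventions in Step 4 to be the main obstacle. I would first verify it in the reverse direction, using the previously stated converse theorem's formula $r_-=\tfrac1\lambda\mathcal R I_S$ as a consistency check.
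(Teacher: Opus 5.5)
Your proposal is correct and follows the same route as the paper. The statement itself is only cited there, but the proof of Theorem~\ref{thm:u-factorizable-to-generalized-quadratic-RB} runs exactly this classical argument: skewness and invariance of $S_I$; the compatibility condition by pairing with $x=I(\xi)$, $y=I(\eta)$; and the Rota--Baxter identity from $I(\xi\diamond\eta)=r_-(\xi)\cdot I(\eta)+I(\xi)\cdot r_-(\eta)+I(\xi)\cdot I(\eta)$ together with Lemma~\ref{lem:factorizable-r-minus}.

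Two harmless details. In Step 3 your displayed expression is the negative of the correct one; with $S_I(x,y)=\langle I^{-1}(x),y\rangle$ it is $\lambda\bigl(\langle\xi,r_-(\eta)\rangle-\langle\eta,r_-(\xi)\rangle+\langle\xi,I(\eta)\rangle\bigr)$, and it still vanishes. The Step 4 identity follows directly from $\xi\diamond\eta=\operatorname{ad}^*_{r_+(\xi)}\eta-R^*_{r_-(\eta)}\xi$ together with the two forms $I\circ\operatorname{ad}^*_a=L_a\circ I$ and $I\circ L^*_a=\operatorname{ad}_a\circ I$ of the invariance.
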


We now generalize the above result to obtain factorizable solutions of the
nontrivial \(A\)-generalized Yang--Baxter equation. Let \(0\neq u\in\operatorname{Ann}(A)\) and \(r\in A\otimes A\). Define the
product \(\diamond_u\) on \(A^*\) by
\begin{equation*}
        \langle \xi\diamond_u\eta,x\rangle
        =
        \langle \xi\otimes\eta,\alpha_r(x)\rangle
        +
        \langle \xi,u\rangle\langle \eta,x\rangle
\end{equation*}
for all \(\xi,\eta\in A^*\) and \(x\in A\). A key tool for studying factorizable solutions of the \(A\)-generalized Yang--Baxter equation
is the following \(u\)-analogue of Lemma~\ref{lem:factorizable-r-minus}.

\begin{lem}\label{lem:u-factorizable-r-minus}
Let \((A,\cdot)\) be a pre-Lie algebra, let \(u\in\operatorname{Ann}(A)\), and let \(r\in A\otimes A\). Assume that \(r_+-r_-:A^*\to A\) is a linear isomorphism and that \(r-\sigma(r)\) is \((L,\operatorname{ad})\)-invariant. Then \(r\) is a factorizable solution of the \(u\)-generalized Yang--Baxter equation if and only if
\begin{equation*}
        r_-(\xi\diamond_u\eta)
        =
        r_-(\xi)\cdot r_-(\eta)
        +
        \langle \xi\otimes\eta,r\rangle u
\end{equation*}
for all \(\xi,\eta\in A^*\).
\end{lem}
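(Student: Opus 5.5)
The plan is to reduce the statement to Lemma~\ref{lem:factorizable-r-minus} by isolating the $u$-correction terms. First I will write $[[r,r]]_u=[[r,r]]+T_u$ with $T_u=\sum_i(u\otimes x_i\otimes y_i-x_i\otimes u\otimes y_i)$, as in the proof of Lemma~\ref{lem:u-Jacobi-alpha}. In parallel, $\xi\diamond_u\eta=\xi\diamond\eta+\langle\xi,u\rangle\eta$. The target identity then reads
\[
r_-(\xi\diamond\eta)-r_-(\xi)\cdot r_-(\eta)=\langle\xi\otimes\eta,r\rangle u-\langle\xi,u\rangle r_-(\eta).
\]
So I need the tensor $[[r,r]]$, paired appropriately, to give the left side, and $T_u$ to give the negative of the right side.

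The key step is to recall the mechanism behind Lemma~\ref{lem:factorizable-r-minus}, which I will re-derive rather than cite as a black box. Under the assumptions that $I=r_+-r_-$ is invertible and $r-\sigma(r)$ is $(L,\operatorname{ad})$-invariant, there is an identity of the form
\[
\langle \xi\otimes\eta\otimes\zeta,[[r,r]]\rangle=\pm\bigl\langle \zeta,\ r_-(\xi\diamond\eta)-r_-(\xi)\cdot r_-(\eta)\bigr\rangle,
\]
possibly after a permutation of tensor slots. This is the standard computation: one writes $r_+=r_-+I$ and uses invariance to kill the terms involving $I$. The paper cites this lemma, so I will take the exact slot convention from that identity. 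The main obstacle is fixing this convention consistently with the order of slots in $T_u$.

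Next I pair $T_u$ with the same slots. Using $\langle \zeta, r_-(\eta)\rangle=\langle\zeta\otimes\eta,r\rangle$, one gets, for instance,
\[
\langle \xi\otimes\eta\otimes\zeta,\ x_i\otimes u\otimes y_i\rangle=\langle\eta,u\rangle\langle\xi\otimes\zeta,r\rangle .
\]
The two terms of $T_u$ become $\langle\xi,u\rangle\langle\zeta,r_-(\eta)\rangle$ and $\langle\zeta,u\rangle\langle\xi\otimes\eta,r\rangle$, up to the slot permutation. I will check that, with the permutation dictated by the classical identity, these are exactly the pairings of $\langle\xi,u\rangle r_-(\eta)$ and $\langle\xi\otimes\eta,r\rangle u$ with $\zeta$, with the signs required above.

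Once this is verified, the identity $\langle\cdot,[[r,r]]_u\rangle=\pm\langle\zeta,\ r_-(\xi\diamond_u\eta)-r_-(\xi)\cdot r_-(\eta)-\langle\xi\otimes\eta,r\rangle u\rangle$ holds. Nondegeneracy of the pairing then gives the equivalence, since factorizability is assumed apart from the equation itself. The assumption $u\in\operatorname{Ann}(A)$ is only needed so that the $u$-terms do not interact with the products in $[[r,r]]$. If the slot bookkeeping proves awkward, I will instead verify the equivalence directly with $r=\sum_i x_i\otimes y_i$ and the dual basis.
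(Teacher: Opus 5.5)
Your proposal is correct and is essentially the paper's proof. You split $[[r,r]]_u=[[r,r]]+T_u$ and $\xi\diamond_u\eta=\xi\diamond\eta+\langle\xi,u\rangle\eta$, use the tensor-level identity underlying Lemma~\ref{lem:factorizable-r-minus}, and read off the contribution of $T_u$. The convention you leave open is $\langle\xi\otimes\operatorname{id}\otimes\eta,[[r,r]]\rangle=r_-(\xi\diamond\eta)-r_-(\xi)\cdot r_-(\eta)$ with a plus sign; this needs only the invariance of $r-\sigma(r)$, not $u\in\operatorname{Ann}(A)$. Under this contraction $T_u$ contributes exactly $\langle\xi,u\rangle r_-(\eta)-\langle\xi\otimes\eta,r\rangle u$, as you require.
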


\begin{proof}
Let \(\diamond\) be the product on \(A^*\) obtained from \(\diamond_u\) by
setting \(u=0\). The computation underlying Lemma~\ref{lem:factorizable-r-minus} gives
\begin{equation*}
        \langle \xi\otimes\operatorname{id}\otimes\eta,[[r,r]]\rangle
        =
        r_-(\xi\diamond\eta)-r_-(\xi)\cdot r_-(\eta)
\end{equation*}
for all \(\xi,\eta\in A^*\). Using
\begin{equation*}
        [[r,r]]_u
        =
        [[r,r]]
        +
        \sum_i
        \bigl(
        u\otimes x_i\otimes y_i
        -
        x_i\otimes u\otimes y_i
        \bigr),
\end{equation*}
we get
\begin{equation*}
\begin{aligned}
\langle \xi\otimes\operatorname{id}\otimes\eta,[[r,r]]_u\rangle
={}&
r_-(\xi\diamond\eta)-r_-(\xi)\cdot r_-(\eta) +
\langle \xi,u\rangle r_-(\eta)
-\langle \xi\otimes\eta,r\rangle u .
\end{aligned}
\end{equation*}
Since
\(\xi\diamond_u\eta=\xi\diamond\eta+\langle\xi,u\rangle\eta\), it follows that
\begin{equation*}
        \langle \xi\otimes\operatorname{id}\otimes\eta,[[r,r]]_u\rangle
        =
        r_-(\xi\diamond_u\eta)
        -
        r_-(\xi)\cdot r_-(\eta)
        -
        \langle \xi\otimes\eta,r\rangle u .
\end{equation*}
By the nondegeneracy of the natural pairing, \([[r,r]]_u=0\) is equivalent to
the stated identity. Together with the assumed invertibility of \(r_+-r_-\)
and the \((L,\operatorname{ad})\)-invariance of \(r-\sigma(r)\), this is precisely the factorizable solution condition.
\end{proof}

We now introduce \emph{\(u\)-generalized quadratic Rota--Baxter pre-Lie algebras of weight \(\lambda\)}.

\begin{defn}
Let \((A,\cdot)\) be a pre-Lie algebra, let \(0\neq u\in\operatorname{Ann}(A)\), \(\lambda\in F\), and let \(S\) be a bilinear form on \(A\). A linear map \(\mathcal R:A\to A\) is called a \emph{\(u\)-generalized Rota--Baxter operator of weight \(\lambda\)} if
\begin{equation*}
        \mathcal R(x)\cdot \mathcal R(y)
        =
        \mathcal R\bigl(
        \mathcal R(x)\cdot y
        +x\cdot \mathcal R(y)
        +\lambda x\cdot y
        +\lambda S(x,u)y
        \bigr)
        -
        \lambda S(x,\mathcal R(y))u
\end{equation*}
for all \(x,y\in A\).
\end{defn}

\begin{defn}
Let \((A,\cdot,S)\) be a quadratic pre-Lie algebra, let \(0\neq u\in\operatorname{Ann}(A)\), and \(\lambda\in F\). A quintuple \((A,\cdot,\mathcal R,S,u)\) is called a \emph{\(u\)-generalized quadratic Rota--Baxter pre-Lie algebra of weight \(\lambda\)} if \(\mathcal R:A\to A\) is a \emph{\(u\)-generalized Rota--Baxter operator of weight \(\lambda\)} and
\begin{equation*}
        S(x,\mathcal R(y))
        +
        S(\mathcal R(x),y)
        +
        \lambda S(x,y)
        =
        0
\end{equation*}
for all \(x,y\in A\).
\end{defn}

The following result constructs factorizable solutions of the \(u\)-generalized Yang--Baxter equation from \(u\)-generalized quadratic Rota--Baxter pre-Lie algebras of nonzero weight.

\begin{thm}\label{thm:generalized-quadratic-RB-to-u-factorizable}
Let \((A,\cdot,\mathcal R,S,u)\) be a \(u\)-generalized quadratic Rota--Baxter pre-Lie algebra of nonzero weight \(\lambda\). Let \(I_S:A^*\to A\) be the linear isomorphism determined by \(S(I_S(\xi),y)=\langle \xi,y\rangle\) for all \(\xi\in A^*\) and \(y\in A\). Define
\begin{equation*}
        r_-=\frac{1}{\lambda}\mathcal R\circ I_S:A^*\to A,
\end{equation*}
and let \(r\in A\otimes A\) be the unique tensor satisfying \(\langle \operatorname{id}\otimes \xi,r\rangle=r_-(\xi)\) for all \(\xi\in A^*\). Then \(r\) is a factorizable solution of the \(u\)-generalized Yang--Baxter equation.
\end{thm}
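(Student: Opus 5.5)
The argument runs parallel to the classical theorem of Wang--Bai--Liu--Sheng recalled above, with Lemma~\ref{lem:u-factorizable-r-minus} used in place of Lemma~\ref{lem:factorizable-r-minus}. Since $I_S$ is skew with respect to the pairing ($S$ is skew-symmetric), we first compute $r_+$. From $\langle \xi, r_-(\eta)\rangle = \langle \xi\otimes\eta, r\rangle$ and the compatibility $S(x,\mathcal R y)+S(\mathcal R x,y)+\lambda S(x,y)=0$, we get
\[
r_+=-\tfrac{1}{\lambda}(\mathcal R+\lambda\,\mathrm{id})\circ I_S .
\]
Hence
\[
I=r_+-r_-=-I_S\cdot\bigl(\text{up to the sign convention}\bigr),
\]
so $I$ is a linear isomorphism.

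Next we show that $r-\sigma(r)$ is $(L,\operatorname{ad})$-invariant. The tensor $r-\sigma(r)$ is, up to sign, the tensor of $I_S$, i.e.\ the inverse of $S$. For a quadratic pre-Lie algebra, the invariance of $S$,
\[
S(x\cdot y,z)=-S(y,[x,z]),
\]
translates exactly into $(L_x\otimes\mathrm{id}+\mathrm{id}\otimes\operatorname{ad}_x)(r-\sigma(r))=0$. This step is identical to the $u=0$ case and involves no $u$-terms.

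By Lemma~\ref{lem:u-factorizable-r-minus}, it then suffices to verify
\[
r_-(\xi\diamond_u\eta)=r_-(\xi)\cdot r_-(\eta)+\langle\xi\otimes\eta,r\rangle u .
\]
Put $x=I_S(\xi)$ and $y=I_S(\eta)$. The plan is to compute $I_S(\xi\diamond_u\eta)$ in terms of $x,y$. Pairing with $z\in A$ and using invariance of $S$ together with the expressions for $r_\pm$, the classical computation gives
\[
I_S(\xi\diamond\eta)=\tfrac1\lambda\bigl(\mathcal R(x)\cdot y+x\cdot\mathcal R(y)+\lambda\, x\cdot y\bigr).
\]
The $u$-term contributes $\langle\xi,u\rangle\eta$, whose image under $I_S$ is $S(x,u)\,y$. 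Applying $\frac1\lambda\mathcal R$, the left-hand side becomes
\[
\tfrac1{\lambda^2}\,\mathcal R\bigl(\mathcal R(x)\cdot y+x\cdot\mathcal R(y)+\lambda x\cdot y+\lambda S(x,u)y\bigr).
\]
The right-hand side is
\[
\tfrac1{\lambda^2}\,\mathcal R(x)\cdot\mathcal R(y)+\langle\xi, r_-(\eta)\rangle u,
\qquad\text{where}\quad
\langle\xi,r_-(\eta)\rangle=\tfrac1\lambda S\bigl(x,\mathcal R(y)\bigr).
\]
Multiplying by $\lambda^2$, the required identity is exactly the $u$-generalized Rota--Baxter identity of weight $\lambda$, including the term $-\lambda S(x,\mathcal R(y))u$.

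The main obstacle is bookkeeping: fixing the sign conventions for $I_S$, for $r_\pm$, and for which tensor factor $r_-$ reads. Only then do the coefficients $\lambda S(x,u)y$ and $\lambda S(x,\mathcal R(y))u$ match exactly, rather than up to sign or with $x$ and $y$ swapped. I would first redo the $u=0$ identity for $I_S(\xi\diamond\eta)$ carefully with these conventions, since the additional $u$-contributions are linear and read off directly.
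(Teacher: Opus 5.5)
Your proposal is correct and follows essentially the same route as the paper. You show that $r_+-r_-$ is $I_S$ and that $r-\sigma(r)$ is invariant, then compute $I_S(\xi\diamond_u\eta)=I_S(\xi\diamond\eta)+S(x,u)y$, apply $\frac{1}{\lambda}\mathcal R$, and match the result against $r_-(\xi)\cdot r_-(\eta)+\frac{1}{\lambda}S(x,\mathcal R(y))u$ using Lemma~\ref{lem:u-factorizable-r-minus}. The one slip is the sign of $r_+$: the compatibility condition gives $r_+=\frac{1}{\lambda}(\mathcal R+\lambda\,\mathrm{id})\circ I_S$, so $r_+-r_-=I_S$ exactly, as in the paper; your stated sign would instead give $-\frac{1}{\lambda}(2\mathcal R+\lambda\,\mathrm{id})\circ I_S$, which is not $\pm I_S$. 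The positive sign is also what your formula $I_S(\xi\diamond\eta)=x\cdot r_-(\eta)+r_+(\xi)\cdot y$ implicitly uses.
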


\begin{proof}
Let \(x=I_S(\xi)\) and \(y=I_S(\eta)\). The identity
\begin{equation*}
        S(x,\mathcal R(y))+S(\mathcal R(x),y)+\lambda S(x,y)=0
\end{equation*}
implies \(r_+-r_-=I_S\). Indeed,
\begin{equation*}
\begin{aligned}
\langle \eta,(r_+-r_-)(\xi)\rangle
={}&
\frac{1}{\lambda}
\bigl(S(x,\mathcal R(y))-S(y,\mathcal R(x))\bigr) = S(y,x) =\langle \eta,I_S(\xi)\rangle .
\end{aligned}
\end{equation*}
Hence \(r_+-r_-\) is a linear isomorphism. Moreover, the invariance of \(S\)
implies that \(r-\sigma(r)\) is \((L,\operatorname{ad})\)-invariant.

Let \(\diamond\) denote the classical product on \(A^*\) corresponding to
\(u=0\). By the classical relation between quadratic Rota--Baxter pre-Lie algebras and factorizable tensors, we have
\begin{equation*}
        I_S(\xi\diamond\eta)
        =
        \frac{1}{\lambda}
        \bigl(
        x\cdot\mathcal R(y)
        +\lambda x\cdot y
        +\mathcal R(x)\cdot y
        \bigr).
\end{equation*}
Since
\(\xi\diamond_u\eta=\xi\diamond\eta+\langle \xi,u\rangle\eta\), we get
\begin{equation*}
        I_S(\xi\diamond_u\eta)
        =
        \frac{1}{\lambda}
        \bigl(
        x\cdot\mathcal R(y)
        +\lambda x\cdot y
        +\mathcal R(x)\cdot y
        +\lambda S(x,u)y
        \bigr).
\end{equation*}
Thus
\begin{equation*}
        r_-(\xi\diamond_u\eta)
        =
        \frac{1}{\lambda^2}
        \mathcal R\bigl(
        x\cdot\mathcal R(y)
        +\lambda x\cdot y
        +\mathcal R(x)\cdot y
        +\lambda S(x,u)y
        \bigr).
\end{equation*}
On the other hand,
\begin{equation*}
\begin{aligned}
r_-(\xi)\cdot r_-(\eta)
+\langle \xi\otimes\eta,r\rangle u
={}&
\frac{1}{\lambda^2}\mathcal R(x)\cdot\mathcal R(y)
+
\frac{1}{\lambda}S(x,\mathcal R(y))u  \\
={}&
\frac{1}{\lambda^2}
\bigl(
\mathcal R(x)\cdot\mathcal R(y)
+\lambda S(x,\mathcal R(y))u
\bigr).
\end{aligned}
\end{equation*}
By the defining identity of the \(u\)-generalized Rota--Baxter operator, the last two displayed expressions are equal. Hence
\begin{equation*}
        r_-(\xi\diamond_u\eta)
        =
        r_-(\xi)\cdot r_-(\eta)
        +
        \langle \xi\otimes\eta,r\rangle u .
\end{equation*}
By Lemma~\ref{lem:u-factorizable-r-minus}, \(r\) is a factorizable solution
of the \(u\)-generalized Yang--Baxter equation.
\end{proof}

Conversely, factorizable solutions of the \(u\)-generalized Yang--Baxter equation induce \(u\)-generalized quadratic Rota--Baxter pre-Lie algebras of
nonzero weight.

\begin{thm}\label{thm:u-factorizable-to-generalized-quadratic-RB}
Let \((A,\cdot)\) be a pre-Lie algebra, let \(0\neq u\in\operatorname{Ann}(A)\), and \(r\in A\otimes A\) be a factorizable solution of the \(u\)-generalized Yang--Baxter equation.
Let \(0\neq\lambda\in F\). Set \(I=r_+-r_-:A^*\to A\), and define
\begin{equation*}
        \mathcal R=\lambda r_-\circ I^{-1}:A\to A,
        \qquad
        S_I(x,y)=\langle I^{-1}(x),y\rangle
\end{equation*}
for all \(x,y\in A\). Then \((A,\cdot,\mathcal R,S_I,u)\) is a \(u\)-generalized quadratic Rota--Baxter pre-Lie algebra of weight \(\lambda\).
\end{thm}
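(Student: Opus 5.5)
The plan is to reverse the argument of Theorem~\ref{thm:generalized-quadratic-RB-to-u-factorizable}, using Lemma~\ref{lem:u-factorizable-r-minus} as the bridge. We treat the classical converse of \cite{WangBaiLiuSheng2024} (the case \(u=0\)) as a template. Our job is then to track only the extra \(u\)-terms, exactly as the proof of Lemma~\ref{lem:u-factorizable-r-minus} does.

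First, the structural facts about \(S_I\) do not involve \(u\), so they come over from the classical case unchanged. Since \(I=r_+-r_-\) is induced by \(r-\sigma(r)\), we have \(\langle \eta,I(\xi)\rangle=-\langle \xi,I(\eta)\rangle\). Hence \(S_I\) is skew-symmetric, and it is nondegenerate because \(I\) is invertible. The \((L,\operatorname{ad})\)-invariance of \(r-\sigma(r)\) dualizes to the invariance condition making \((A,\cdot,S_I)\) a quadratic pre-Lie algebra, exactly as in \cite{WangBaiLiuSheng2024}.

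Next comes the compatibility condition. Write \(x=I(\xi)\), \(y=I(\eta)\), so that \(\mathcal R(x)=\lambda r_-(\xi)\). We then have
\begin{equation*}
S_I(x,\mathcal R(y))+S_I(\mathcal R(x),y)=\lambda\bigl(\langle \xi,r_-(\eta)\rangle-\langle \eta,r_-(\xi)\rangle\bigr).
\end{equation*}
Using \(\langle \eta,r_-(\xi)\rangle=\langle \xi,r_+(\eta)\rangle\), this becomes \(-\lambda\langle\xi,I(\eta)\rangle=-\lambda S_I(x,y)\), which is the required identity. Note also that \(I_{S_I}=I\) up to the sign convention; we check that \(S_I(I(\xi),y)=\langle\xi,y\rangle\) matches the convention used in Theorem~\ref{thm:generalized-quadratic-RB-to-u-factorizable}, and fix signs there.

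The main step is the \(u\)-generalized Rota--Baxter identity. By Lemma~\ref{lem:u-factorizable-r-minus}, we have \(r_-(\xi\diamond_u\eta)=r_-(\xi)\cdot r_-(\eta)+\langle\xi\otimes\eta,r\rangle u\). On the left, we invoke the classical computation
\begin{equation*}
I(\xi\diamond\eta)=\tfrac1\lambda\bigl(x\cdot\mathcal R(y)+\mathcal R(x)\cdot y+\lambda x\cdot y\bigr),
\end{equation*}
which again depends only on the invariance and the definition of \(\mathcal R\), and is shared with the proof of Theorem~\ref{thm:generalized-quadratic-RB-to-u-factorizable}. Since \(\xi\diamond_u\eta=\xi\diamond\eta+\langle\xi,u\rangle\eta\) and \(\langle \xi,u\rangle=S_I(x,u)\), the left side becomes
\begin{equation*}
\tfrac1{\lambda^2}\mathcal R\bigl(x\cdot\mathcal R(y)+\mathcal R(x)\cdot y+\lambda x\cdot y+\lambda S_I(x,u)y\bigr).
\end{equation*}
On the right, \(\langle\xi\otimes\eta,r\rangle=\langle\xi,r_-(\eta)\rangle=\tfrac1\lambda S_I(x,\mathcal R(y))\). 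Multiplying through by \(\lambda^2\) yields exactly the defining identity of a \(u\)-generalized Rota--Baxter operator of weight \(\lambda\). Since \(I\) is bijective, this holds for all \(x,y\in A\).

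We expect the main obstacle to be bookkeeping rather than ideas. The formula for \(I(\xi\diamond\eta)\) has to be checked carefully, since the classical result in \cite{WangBaiLiuSheng2024} is stated through \(r_-\) rather than \(I\). Likewise the sign and ordering conventions for \(\langle\xi\otimes\eta,r\rangle\) versus \(r_\pm\) must be consistent. If needed, we would rederive \(I(\xi\diamond\eta)\) directly: pair with \(z\), expand \(\alpha_r(z)\), and use \((L,\operatorname{ad})\)-invariance to move \(r_+\) terms to \(r_-\) terms.
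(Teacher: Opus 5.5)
Your proposal is correct and follows essentially the same route as the paper's proof. The paper also carries over the quadratic structure and the compatibility identity from the classical case, combines $I(\xi\diamond\eta)=r_-(\xi)\cdot I(\eta)+I(\xi)\cdot r_-(\eta)+I(\xi)\cdot I(\eta)$ with $\xi\diamond_u\eta=\xi\diamond\eta+\langle\xi,u\rangle\eta$ and Lemma~\ref{lem:u-factorizable-r-minus}, and then applies $\mathcal R$. Your sign worries are unnecessary: $S_I(I(\xi),y)=\langle\xi,y\rangle$ holds exactly, and the classical relation follows from $(L,\operatorname{ad})$-invariance alone, via $I(\operatorname{ad}^*_a\eta)=a\cdot I(\eta)$ and $I(R^*_b\xi)=-I(\xi)\cdot b$.
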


\begin{proof}
Since \(r\) is factorizable, \(I=r_+-r_-\) is a linear isomorphism and \(r-\sigma(r)\) is \((L,\operatorname{ad})\)-invariant. Hence \(S_I\) is nondegenerate and skew-symmetric, and the invariance of \(r-\sigma(r)\) is
equivalent to the invariance of \(S_I\). Thus \((A,\cdot,S_I)\) is a quadratic pre-Lie algebra.

The compatibility condition follows as in the classical case. Indeed, for
\(x=I(\xi)\) and \(y=I(\eta)\), we have
\begin{equation*}
\begin{aligned}
&S_I(x,\mathcal R(y))+S_I(\mathcal R(x),y)+\lambda S_I(x,y)={}
\lambda\bigl(
\langle \xi,r_-(\eta)\rangle
-\langle \eta,r_-(\xi)\rangle
+\langle \xi,I(\eta)\rangle
\bigr)=0 .
\end{aligned}
\end{equation*}

Therefore it remains to prove that \(\mathcal R\) is a \(u\)-generalized Rota--Baxter operator of weight \(\lambda\).
For \(x=I(\xi)\) and \(y=I(\eta)\), the classical relation gives
\begin{equation*}
        I(\xi\diamond\eta)
        =
        r_-(\xi)\cdot I(\eta)
        +
        I(\xi)\cdot r_-(\eta)
        +
        I(\xi)\cdot I(\eta).
\end{equation*}
Since \(\xi\diamond_u\eta=\xi\diamond\eta+\langle \xi,u\rangle\eta\), we get
\begin{equation*}
        \lambda I(\xi\diamond_u\eta)
        =
        \mathcal R(x)\cdot y
        +
        x\cdot \mathcal R(y)
        +
        \lambda x\cdot y
        +
        \lambda S_I(x,u)y .
\end{equation*}
Applying \(\mathcal R\) to both sides gives
\begin{equation*}
\begin{aligned}
&\mathcal R\bigl(
\mathcal R(x)\cdot y+x\cdot\mathcal R(y)
+\lambda x\cdot y+\lambda S_I(x,u)y
\bigr)
-\lambda S_I(x,\mathcal R(y))u  \\
={}&
\lambda^2\bigl(
r_-(\xi\diamond_u\eta)-\langle \xi\otimes\eta,r\rangle u
\bigr).
\end{aligned}
\end{equation*}
By Lemma~\ref{lem:u-factorizable-r-minus}, the right-hand side equals \(\lambda^2 r_-(\xi)\cdot r_-(\eta)=\mathcal R(x)\cdot\mathcal R(y)\). Hence
\begin{equation*}
        \mathcal R(x)\cdot\mathcal R(y)
        =
        \mathcal R\bigl(
        \mathcal R(x)\cdot y+x\cdot\mathcal R(y)
        +\lambda x\cdot y+\lambda S_I(x,u)y
        \bigr)
        -
        \lambda S_I(x,\mathcal R(y))u .
\end{equation*}
Thus \(\mathcal R\) is a \(u\)-generalized Rota--Baxter operator of weight \(\lambda\), and the proof is complete.
\end{proof}

To construct \(u\)-generalized quadratic Rota--Baxter left-symmetric algebras of nonzero weight, we relate them to \(u\)-generalized Rota--Baxter symplectic Lie algebras of nonzero weight.

\begin{defn}
Let \((\mathfrak g,[\,,\,])\) be a Lie algebra, let \(0\ne u\in C(\mathfrak g)\), where \(C(\mathfrak g)\) denotes the center of \(\mathfrak g\), and let \(0\ne\lambda\in F\). A quintuple \((\mathfrak g,[\,,\,],\mathcal R,S,u)\) is called a \emph{\(u\)-generalized Rota--Baxter symplectic Lie algebra of weight \(\lambda\)} if the following conditions hold:
\begin{enumerate}
\item \((\mathfrak g,S)\) is a symplectic Lie algebra, namely \(S\) is a nondegenerate skew-symmetric bilinear form satisfying
\begin{equation*}
        S([x,y],z)+S([y,z],x)+S([z,x],y)=0
\end{equation*}
for all \(x,y,z\in\mathfrak g\).

\item The linear map \(\mathcal R:\mathfrak g\to\mathfrak g\) satisfies
\begin{equation*}
\begin{aligned}
[\mathcal R(x),\mathcal R(y)]
={}&
\mathcal R\bigl(
[\mathcal R(x),y]+[x,\mathcal R(y)]+\lambda[x,y]
\bigr) \\
&+
\lambda S(x,u)\bigl(\mathcal R(y)+\lambda y\bigr)
+\lambda S(u,y)\bigl(\mathcal R(x)+\lambda x\bigr)
\end{aligned}
\end{equation*}
for all \(x,y\in\mathfrak g\).

\item The compatibility condition
\begin{equation*}
        S(\mathcal R(x),y)+S(x,\mathcal R(y))+\lambda S(x,y)=0
\end{equation*}
holds for all \(x,y\in\mathfrak g\).
\end{enumerate}
\end{defn}

\begin{prop}\label{thm:u-QRB-LSA-symplectic-Lie}
Let \((A,\cdot,S)\) be a quadratic pre-Lie algebra, let \(0\neq u\in\operatorname{Ann}(A)\), and \(0\neq \lambda\in F\). Then \((A,\cdot,\mathcal R,S,u)\) is a \(u\)-generalized quadratic Rota--Baxter pre-Lie algebra of weight \(\lambda\) if and only if \((A_{\mathrm c},[\,,\,],\mathcal R,S,u)\) is a \(u\)-generalized Rota--Baxter symplectic Lie algebra of weight \(\lambda\).
\end{prop}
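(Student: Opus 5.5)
The plan is to reduce everything to identities paired against an arbitrary vector \(z\) through the nondegenerate form \(S\), as in the classical argument of \cite{WangBaiLiuSheng2024}.

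\textbf{Setup.} Two features of the setting drive the argument.
\begin{itemize}
\item The compatibility condition \(S(\mathcal R(x),y)+S(x,\mathcal R(y))+\lambda S(x,y)=0\) is literally the same in both structures, so only the Rota--Baxter-type identities need to be compared.
\item Since \((A,\cdot,S)\) is quadratic, \(S\) is invariant: \(S(x\cdot y,z)=-S(y,[x,z])\). It is therefore a symplectic form on \(A_{\mathrm c}\), and the pre-Lie product is completely recovered from the bracket and \(S\). Consequently any identity in \(A\) can be tested by pairing with an arbitrary \(z\) via \(S\).
\end{itemize}
From \(u\in\operatorname{Ann}(A)\) and invariance I also record \(S(u,[x,z])=-S(x\cdot u,z)=0\). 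Compatibility applied with \(u\) in one slot then lets me express \(S(\mathcal R(x),u)\) through \(S(x,\mathcal R(u))\) and \(S(x,u)\).

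\textbf{Transport step.} I pair the \(u\)-generalized pre-Lie Rota--Baxter identity with \(z\) and transform each term.
\begin{itemize}
\item Every product \(a\cdot b\) is replaced by \(-S(b,[a,z])\) using invariance.
\item Every \(\mathcal R\) standing on the outside is moved across \(S\) by compatibility, \(S(\mathcal R(a),z)=-S(a,\mathcal R(z)+\lambda z)\).
\end{itemize}
This converts
\[
S\bigl(\mathcal R(x)\cdot\mathcal R(y)-\mathcal R(\mathcal R(x)\cdot y+x\cdot\mathcal R(y)+\lambda x\cdot y),\,z\bigr)
\]
into \(S\bigl(y',\ \Phi(x,z)\bigr)\) with \(y'\in\{\mathcal R(y),y\}\). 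Here \(\Phi(x,z)\) is the defect
\[
[\mathcal R(x),\mathcal R(z)]-\mathcal R\bigl([\mathcal R(x),z]+[x,\mathcal R(z)]+\lambda[x,z]\bigr)
\]
of the Lie-type identity, evaluated in the pair \((x,z)\). This is exactly the classical computation showing that \(\mathcal R\) is Rota--Baxter on \((A,\cdot)\) if and only if it is Rota--Baxter on \(A_{\mathrm c}\), given compatibility.

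The two new correction terms transform as follows.
\begin{itemize}
\item The term \(\lambda S(x,u)\,S(\mathcal R(y),z)\) becomes \(-\lambda S(x,u)\,S(y,\mathcal R(z)+\lambda z)\), which produces the term \(\lambda S(x,u)(\mathcal R(z)+\lambda z)\) of the Lie identity in the variables \((x,z)\).
\item The term \(-\lambda S(x,\mathcal R(y))\,S(u,z)\) becomes, after compatibility and using \(S(u,z)\) as a scalar, the term \(\lambda S(u,z)(\mathcal R(x)+\lambda x)\) paired against \(y\).
\end{itemize}
Collecting terms, the paired pre-Lie identity for \((x,y,z)\) should read \(S\bigl(y,\ \Psi(x,z)\bigr)=0\). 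Here \(\Psi\) is the difference of the two sides of the Lie identity in condition (2) of the definition of a \(u\)-generalized Rota--Baxter symplectic Lie algebra, with its arguments \((x,y)\) replaced by \((x,z)\).

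\textbf{Conclusion.} Since \(S\) is nondegenerate and \(y\) is arbitrary, the pre-Lie condition for all \(x,y\) is equivalent to \(\Psi(x,z)=0\) for all \(x,z\). This is the symplectic Lie condition, and it gives both directions at once. The symplectic (cyclic) condition on \(S\) is automatic from invariance together with skew-symmetry, and \(u\in C(A_{\mathrm c})\) follows from \(u\in\operatorname{Ann}(A)\).

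\textbf{Main obstacle.} I expect the difficulty to be the bookkeeping of the \(u\)-terms. The symmetric-looking terms \(\lambda S(x,u)(\dots)+\lambda S(u,y)(\dots)\) in the Lie identity must arise after the variables are relabelled \((y,z)\to(z,y)\), and they must match signs exactly. I would first carry out the transport with \(u\)-terms only, checking each sign against compatibility. If a stray term remains, I would try to absorb it using \(S(u,[\cdot,\cdot])=0\) and the relation obtained from compatibility with \(u\) in one argument. If the stray term persists even then, I would fall back on showing directly that the antisymmetrization in \((x,y)\) of the pre-Lie identity equals the Lie identity, modulo terms that vanish by these two relations.
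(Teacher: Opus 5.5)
Your proposal is correct and is essentially the paper's proof: you pair the pre-Lie identity with $z$ through $S$, then use invariance and the compatibility condition to turn the classical part into $-S\bigl(y,\Phi(x,z)\bigr)$ and the two $u$-terms into $\lambda S(x,u)(\mathcal R(z)+\lambda z)$ and $\lambda S(u,z)(\mathcal R(x)+\lambda x)$ paired against $y$, and you conclude by nondegeneracy of $S$ after renaming $z$ as $y$. The signs match exactly with no stray terms, so the fallbacks in your last paragraph are unnecessary, and the hedge ``$y'\in\{\mathcal R(y),y\}$'' can be replaced by $y'=y$, since compatibility moves every $\mathcal R$ off $y$.
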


\begin{proof}
Since \((A,\cdot,S)\) is a quadratic pre-Lie algebra, \(S\) is a symplectic form on \(A_{\mathrm c}\). Moreover, \(u\in\operatorname{Ann}(A)\) implies \(u\in C(A_{\mathrm c})\). The compatibility condition between \(S\) and \(\mathcal R\) is identical in the two definitions. It remains to compare the operator identities.

For \(x,y,z\in A\), the quadratic identity of \(S\) gives
\begin{equation*}
\begin{aligned}
&S\Big(
\mathcal R(x)\cdot\mathcal R(y)
-\mathcal R\bigl(
\mathcal R(x)\cdot y+x\cdot\mathcal R(y)+\lambda x\cdot y
\bigr),
z
\Big)                                      \\
={}&
S\Big(
y,\,
\mathcal R\bigl(
[\mathcal R(x),z]+[x,\mathcal R(z)]+\lambda[x,z]
\bigr)
-[\mathcal R(x),\mathcal R(z)]
\Big).
\end{aligned}
\end{equation*}
Furthermore, the compatibility condition gives
\begin{equation*}
\begin{aligned}
&S\Big(
-\lambda S(x,u)\mathcal R(y)
+\lambda S(x,\mathcal R(y))u,\ z
\Big) ={}
S\Big(
y,\,
\lambda S(x,u)\bigl(\mathcal R(z)+\lambda z\bigr)
+\lambda S(u,z)\bigl(\mathcal R(x)+\lambda x\bigr)
\Big).
\end{aligned}
\end{equation*}
Adding these two identities and using the nondegeneracy of \(S\), we obtain
that the \(u\)-generalized Rota--Baxter identity on \((A,\cdot)\) is equivalent to
\begin{equation*}
\begin{aligned}
[\mathcal R(x),\mathcal R(z)]
={}&
\mathcal R\bigl(
[\mathcal R(x),z]+[x,\mathcal R(z)]+\lambda[x,z]
\bigr) \\
&+
\lambda S(x,u)\bigl(\mathcal R(z)+\lambda z\bigr)
+\lambda S(u,z)\bigl(\mathcal R(x)+\lambda x\bigr).
\end{aligned}
\end{equation*}
Renaming \(z\) as \(y\), this is precisely the operator identity in the definition of a \(u\)-generalized Rota--Baxter symplectic Lie algebra. Hence the two structures are equivalent.
\end{proof}

By the correspondence above, the study of \(u\)-generalized quadratic Rota--Baxter pre-Lie algebras of nonzero weight reduces to the study of \(u\)-generalized Rota--Baxter symplectic Lie algebras of nonzero weight. The latter class is highly restricted, as shown by the following result.

\begin{thm}\label{two-dimensionalabelian}
Let \(0\neq\lambda\in F\), \(0\neq u\in C(\mathfrak g)\), and let \((\mathfrak g,[\,,\,],\mathcal R,S,u)\) be a \(u\)-generalized Rota--Baxter symplectic Lie algebra of weight \(\lambda\). Then \(\mathfrak g\) is a two-dimensional abelian Lie algebra.
\end{thm}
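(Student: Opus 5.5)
The plan is to reduce the statement to a dimension count. Once we know $\dim\mathfrak g\le 2$, the conclusion follows at once. Indeed, $S$ is nondegenerate and skew-symmetric, so $\dim\mathfrak g$ is even; since $u\neq0$ we get $\dim\mathfrak g=2$. Moreover, a two-dimensional Lie algebra with a nonzero central element $u$ is abelian, because the nonabelian two-dimensional Lie algebra has trivial center. So the real task is to prove that the hyperplane $u^{\perp_S}=\{x\in\mathfrak g\mid S(x,u)=0\}$ is isotropic for $S$. An isotropic subspace of codimension one in a symplectic space forces $\dim\mathfrak g\le 2$.

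First I will collect the elementary consequences of $u\in C(\mathfrak g)$. Put $\theta(x)=S(x,u)$. This is a nonzero functional by nondegeneracy, so $K:=\ker\theta=u^{\perp_S}$ is a hyperplane containing $u$. Taking $z=u$ in the symplectic cocycle identity kills the two terms containing $[\,\cdot,u]$, so $S([x,y],u)=0$, i.e.\ $[\mathfrak g,\mathfrak g]\subseteq K$. Next I will set $N=\mathcal R+\lambda\,\mathrm{id}$. Then the compatibility condition reads $S(Nx,y)=-S(x,\mathcal R y)$. Equivalently, $S(Nx,y)+S(x,Ny)=\lambda S(x,y)$, and the adjoint of $\mathcal R$ with respect to $S$ is $-N$. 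With this notation the operator identity becomes
\[
[\mathcal R x,\mathcal R y]-\mathcal R\bigl([\mathcal R x,y]+[x,\mathcal R y]+\lambda[x,y]\bigr)=\lambda\bigl(\theta(x)Ny-\theta(y)Nx\bigr).
\]

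The key step is to prove that $N(K)=0$. In the abelian case this is immediate: the left side vanishes, and choosing $x_0$ with $\theta(x_0)=1$ and $y\in K$ gives $Ny=0$. In general I will use $y\in K$ and $x=x_0$ and try to show that the Lie-bracket part vanishes or lies in a controllable subspace. My first attempt is to pair the displayed identity with $S(\cdot,w)$ for suitable $w$, using the adjoint relation $S(\mathcal R a,w)=-S(a,Nw)$ and the cocycle identity to move brackets across. I will take $w=u$ first. Here the left side pairs to zero because $[\mathfrak g,\mathfrak g]\subseteq K$, except for the term $S(\mathcal R(\cdots),u)=-S(\cdots,Nu)$. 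From this I expect to obtain first $Nu=0$, and then $S(NK,\cdot)$-type constraints. I will also exploit that $x\mapsto -N x$ satisfies the same type of identity, by the standard weight-$\lambda$ symmetry $\mathcal R\leftrightarrow-\lambda-\mathcal R$, to symmetrize the bracket terms. This step is where I expect the main obstacle: the Lie-bracket terms must be controlled without yet knowing that $\mathfrak g$ is abelian.

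Granting $N|_K=0$, the finish is short. For $x,y\in K$ the compatibility relation $S(Nx,y)+S(x,Ny)=\lambda S(x,y)$ gives $\lambda S(x,y)=0$, hence $S(x,y)=0$ since $\lambda\ne0$. Thus $K$ is an isotropic hyperplane. Since isotropic subspaces of a symplectic space have dimension at most $\tfrac12\dim\mathfrak g$, we get $\dim\mathfrak g-1\le\tfrac12\dim\mathfrak g$, so $\dim\mathfrak g\le2$. The parity and center arguments of the first paragraph then show that $\mathfrak g$ is two-dimensional abelian.
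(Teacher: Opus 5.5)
\textbf{Verdict: genuine gap.} Most of your reduction is correct: $[\mathfrak g,\mathfrak g]\subseteq K$, the reformulation with $N=\mathcal R+\lambda\,\mathrm{id}$, the step from $N(K)=0$ to isotropy of $K$, and the count $\dim\mathfrak g-1\le\frac12\dim\mathfrak g$ are all fine. The step that carries the whole theorem, $N(K)=0$ without knowing $\mathfrak g$ is abelian, is never proved, and the tools you propose do not deliver it.

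\emph{Pairing with $u$.} Pairing your displayed identity with $u$ gives only
\[
S\bigl([\mathcal R x,y]+[x,\mathcal R y]+\lambda[x,y],\,Nu\bigr)=\lambda\bigl(\theta(y)S(x,Nu)-\theta(x)S(y,Nu)\bigr).
\]
This still contains unknown brackets paired with $Nu$, and it does not yield $Nu=0$.

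\emph{The symmetry $\mathcal R\leftrightarrow-\lambda-\mathcal R$.} This symmetry fails in the $u$-generalized setting. Put $P=\mathcal R+\frac{\lambda}{2}\mathrm{id}$, so $N=P+\frac{\lambda}{2}\mathrm{id}$. The bracket side of the identity is even in $P$, but the correction term $\lambda(\theta(x)Ny-\theta(y)Nx)$ is not. Hence $-N=-\lambda-\mathcal R$ is a $u$-generalized Rota--Baxter operator only if $\theta(x)Py=\theta(y)Px$ for all $x,y$. This already fails in the two-dimensional abelian examples: there one finds $Nu=0$, so $Pu=-\frac{\lambda}{2}u\neq0$.

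As it stands, your argument only covers abelian $\mathfrak g$.

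\textbf{The missing idea.} Eliminate all bracket terms at once by a cyclic sum, instead of controlling them individually.
\begin{enumerate}
\item By compatibility, $P$ is $S$-skew. Your identity can be rewritten as $T(x,y)=\lambda(\theta(x)Ny-\theta(y)Nx)$, where $T(x,y)=[Px,Py]+\frac{\lambda^2}{4}[x,y]-P([Px,y]+[x,Py])$.
\item Move $P$ across $S$. The nine $P$-dependent terms of $\sum_{\mathrm{cyc}}S(T(x,y),z)$ regroup into the cocycle identities for $(Px,Py,z)$, $(Py,Pz,x)$ and $(Pz,Px,y)$. The remaining three terms are $\frac{\lambda^2}{4}$ times the cocycle identity for $(x,y,z)$. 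So the cyclic sum of the left side vanishes.
\item On the right side, $S(Na,b)-S(Nb,a)=\lambda S(a,b)$ turns the cyclic sum into
\[
\lambda^2\bigl(\theta(x)S(y,z)+\theta(y)S(z,x)+\theta(z)S(x,y)\bigr)=0 .
\]
\item Take $x,y\in K$ and $z=x_0$ with $\theta(x_0)=1$. This gives $S(x,y)=0$, which is exactly the isotropy of $K$ you were aiming for.
\end{enumerate}
Your dimension count, parity argument and center argument then finish the proof, and the stronger claim $N(K)=0$ is never needed.

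This is the paper's route. The paper records the last display as $\varphi\wedge S=0$ with $\varphi=S(\,\cdot\,,u)$. It then rules out $\dim\mathfrak g\ge4$ using the symplectic plane $\langle u,e\rangle$ and its orthogonal complement, which is equivalent to your isotropic-hyperplane count.
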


\begin{proof}
Set \(P=\mathcal R+\frac{\lambda}{2}\operatorname{id}_{\mathfrak g}\) and
\(Q=P+\frac{\lambda}{2}\operatorname{id}_{\mathfrak g}\). The compatibility
condition gives
\begin{equation*}
        S(Px,y)+S(x,Py)=0
\end{equation*}
for all \(x,y\in\mathfrak g\). Thus \(P\) is skew-symmetric with respect to
\(S\).

In terms of \(P\) and \(Q\), the \emph{\(u\)-generalized Rota--Baxter} identity is equivalent to
\begin{equation*}
\begin{aligned}
T(x,y)
:={}&
[Px,Py]+\frac{\lambda^2}{4}[x,y]
-P\bigl([Px,y]+[x,Py]\bigr)        \\
={}&
\lambda\bigl(S(x,u)Qy-S(y,u)Qx\bigr).
\end{aligned}
\end{equation*}
We claim that \(\sum_{\mathrm{cyc}}S(T(x,y),z)=0\). Since \(P\) is skew-symmetric with respect to \(S\), we have
\begin{equation*}
\begin{aligned}
S(T(x,y),z)
={}&
S([Px,Py],z)
+\frac{\lambda^2}{4}S([x,y],z) + S([Px,y],Pz)+S([x,Py],Pz).
\end{aligned}
\end{equation*}
The cyclic sum of the second term vanishes because \(S\) is a Lie algebra
\(2\)-cocycle. The remaining cyclic sum is the sum of the cocycle identities
for the triples \((Px,Py,z)\), \((Py,Pz,x)\), and \((Pz,Px,y)\). Hence the
claim follows.

Let \(\varphi(x)=S(x,u)\). Using the expression for \(T(x,y)\), the above
claim gives
\begin{equation*}
        \sum_{\mathrm{cyc}}
        \bigl(
        \varphi(x)S(Qy,z)-\varphi(y)S(Qx,z)
        \bigr)=0 .
\end{equation*}
Since \(Q=P+\frac{\lambda}{2}\operatorname{id}_{\mathfrak g}\) and \(P\) is
skew-symmetric with respect to \(S\), we have
\begin{equation*}
        S(Qa,b)-S(Qb,a)=\lambda S(a,b)
\end{equation*}
for all \(a,b\in\mathfrak g\). Therefore
\begin{equation*}
        \varphi(x)S(y,z)+\varphi(y)S(z,x)+\varphi(z)S(x,y)=0
\end{equation*}
for all \(x,y,z\in\mathfrak g\), equivalently, \(\varphi\wedge S=0\).

Since \(u\neq0\) and \(S\) is nondegenerate, \(\varphi=S(\,\cdot\,,u)\) is
nonzero. If \(\dim\mathfrak g\ge4\), choose \(e\in\mathfrak g\) with \(\varphi(e)\neq0\). Then \(\langle u,e\rangle\) is a nondegenerate symplectic plane, and its symplectic orthogonal has positive even dimension.
Hence there exist \(w,z\in\langle u,e\rangle^\perp\) such that
\(S(w,z)\neq0\). But then
\begin{equation*}
        (\varphi\wedge S)(e,w,z)=\varphi(e)S(w,z)\neq0,
\end{equation*}
a contradiction. Thus \(\dim\mathfrak g<4\). Since \(\mathfrak g\) is symplectic and \(u\neq0\), we get \(\dim\mathfrak g=2\).

Finally, a two-dimensional Lie algebra with a nonzero central element is
abelian. Therefore \(\mathfrak g\) is a two-dimensional abelian Lie algebra.
\end{proof}

\section{\(u\)-Generalized Hessian pre-Lie algebras and Their Low-Dimensional Classification}\label{44}

In Section~\ref{33}, we introduced \(u\)-generalized Hessian pre-Lie algebras and established their connection with symmetric solutions of the \(u\)-generalized \(S\)-equation. In this section, we further study their structure.

\subsection{A Structure Theorem for \(u\)-Generalized Hessian pre-Lie algebras}

A natural question is how to construct \(u\)-generalized Hessian pre-Lie algebras. In the nontrivial case \(0\neq u\), we give a complete construction from Hessian pre-Lie algebras. The construction splits naturally into two cases according to whether \(u\) is isotropic with respect to \(\gamma\). If \(\gamma(u,u)\neq0\), the algebra is obtained by a
one-dimensional annihilator extension. If \(\gamma(u,u)=0\), it is obtained by a double extension. These two constructions exhaust all \(u\)-generalized Hessian pre-Lie algebras.

\begin{thm}\label{thm:nonisotropic-u-construction}
Let \((V,*,f)\) be a Hessian pre-Lie algebra, and let \(0\neq c\in F\).
Put \(\bar A=V\oplus \langle u\rangle\). Define a bilinear product \(\cdot\)
on \(\bar A\) by
\begin{equation*}
\begin{aligned}
        x\cdot y &= x*y-f(x,y)u, \quad \forall x,y\in V,\\
        u\cdot z &= z\cdot u=0, \quad \forall z\in\bar A .
\end{aligned}
\end{equation*}
Define a symmetric bilinear form \(\gamma\) on \(\bar A\) by
\begin{equation*}
\begin{aligned}
        \gamma(x,y) &= f(x,y), \quad \forall x,y\in V,\\
        \gamma(x,u) &= \gamma(u,x)=0, \quad \forall x\in V,\\
        \gamma(u,u) &= c .
\end{aligned}
\end{equation*}
Then \((\bar A,\cdot,\gamma,u)\) is a \(u\)-generalized Hessian pre-Lie algebra.
\end{thm}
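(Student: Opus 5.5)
We have to check four things: \((\bar A,\cdot)\) is a pre-Lie algebra, \(u\in\operatorname{Ann}(\bar A)\), \(\gamma\) is symmetric and nondegenerate, and \(\gamma\) is a \(u\)-generalized \(2\)-cocycle. We will verify them in that order. Recall that a Hessian pre-Lie algebra \((V,*,f)\) means \(f\) is a nondegenerate symmetric form with
\[
f(x*y,z)-f(x,y*z)=f(y*x,z)-f(y,x*z)\qquad\text{for all }x,y,z\in V.
\]

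\emph{Annihilator and form.} That \(u\in\operatorname{Ann}(\bar A)\) holds by construction. The form \(\gamma\) is symmetric because \(f\) is symmetric. It is nondegenerate because \(\gamma\) is the orthogonal sum \(f\oplus c\) on \(V\oplus\langle u\rangle\), with \(f\) nondegenerate and \(c\neq0\).

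\emph{Pre-Lie identity.} Since \(u\) annihilates everything, the associator involving \(u\) in any slot is zero. For \(x,y,z\in V\) we have \((x\cdot y)\cdot z=(x*y)*z-f(x*y,z)u\), because the \(u\)-component of \(x\cdot y\) multiplies to zero. Similarly \(x\cdot(y\cdot z)=x*(y*z)-f(x,y*z)u\). Hence the associator is
\[
(x,y,z)_\cdot=(x,y,z)_*-\bigl(f(x*y,z)-f(x,y*z)\bigr)u .
\]
The first term is symmetric in \(x,y\) because \(*\) is pre-Lie. The coefficient of \(u\) is symmetric in \(x,y\) exactly by the Hessian cocycle identity for \(f\). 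So \(\cdot\) is pre-Lie. This is really the statement that \(\bar A\) is a central extension of \((V,*)\) by the \(2\)-cocycle \(-f\).

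\emph{Cocycle condition.} Since \(\gamma(x,u)=0\) for \(x\in V\) and \(\gamma(u,u)=c\), we have \(\gamma(\cdot,u)=c\,\pi_u\), where \(\pi_u\) is the \(u\)-coordinate. Because the identity is trilinear, we split into cases by which arguments are \(u\); let \(x,y,z\in V\) throughout.

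\begin{enumerate}
\item All three arguments in \(V\). We have \(\gamma(x\cdot y,z)=f(x*y,z)\) since \(\gamma(u,z)=0\), and \(\gamma(x,y\cdot z)=f(x,y*z)\). The correction terms vanish because \(\gamma(x,u)=0\). The identity therefore reduces to the Hessian identity for \(f\).
\item Argument \(x=u\). All products involving \(u\) vanish, so the product terms give \(0\). The correction terms give \(-\gamma(u,u)\gamma(y,z)+\gamma(y,u)\gamma(u,z)=-c\,f(y,z)\).
\item Argument \(z=u\). The first product term is \(\gamma(x\cdot y,u)=-c\,f(x,y)\), and symmetrically \(-\gamma(y\cdot x,u)=+c\,f(y,x)\). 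These cancel because \(f\) is symmetric. The products with \(u\) and the correction terms all vanish.
\end{enumerate}

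Case 2 shows the problem. With \(x=u\), \(y,z\in V\), the cocycle identity requires \(-c\,f(y,z)=0\), which fails since \(f\neq0\) whenever \(V\neq0\). So the case-by-case check is the main obstacle, and the theorem is not correct as literally stated.

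The repair I would try first is to rescale the form rather than the product: keep the product \(x\cdot y=x*y-f(x,y)u\) but set \(\gamma|_V=f\) and \(\gamma(u,u)=c\), looking for the right coupling. Case 2 in general gives \(-\gamma(u,u)\gamma(y,z)+\gamma(y,u)\gamma(u,z)=0\) for all \(y,z\in V\). With \(\gamma(V,u)=0\) this forces \(\gamma|_V=0\) or \(\gamma(u,u)=0\). So a nondegenerate \(\gamma\) with \(\gamma(u,u)\neq0\) seems impossible whenever \(\dim\bar A\ge2\), unless \(\gamma(V,u)\neq0\).

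Since this rescaling cannot work, in writing the proof I would redo the case \(x=u\) carefully. I would recheck the sign convention of the cocycle in Definition~\ref{def:generalized-Hessian-LSA} and the annihilator hypothesis, and I expect to conclude that the intended statement needs either \(V=0\) or a modified \(\gamma\). The remaining cases are routine.
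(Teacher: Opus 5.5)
Your conclusion that the statement is false rests on an arithmetic slip in Case~2. The rest of your argument is fine and is essentially the paper's proof: $u\in\operatorname{Ann}(\bar A)$, nondegeneracy of $f\oplus c$, the central-extension computation of the associator, and Cases~1 and~3.

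With $x=u$ and $y,z\in V$, the term $-\gamma(x,y\cdot z)=-\gamma(u,y\cdot z)$ is not a ``product involving $u$''. It is $\gamma$ evaluated on $u$ and on $y\cdot z=y*z-f(y,z)u$, whose $u$-component is $-f(y,z)$. Since $\gamma(u,V)=0$ and $\gamma(u,u)=c$,
\[
-\gamma(u,y\cdot z)=-\gamma\bigl(u,\,y*z-f(y,z)u\bigr)=c\,f(y,z).
\]
This cancels exactly the correction term $-\gamma(u,u)\gamma(y,z)=-c\,f(y,z)$. The remaining terms $\gamma(u\cdot y,z)$, $\gamma(y\cdot u,z)$, $\gamma(y,u\cdot z)$ and $\gamma(y,u)\gamma(u,z)$ do vanish, so the identity holds. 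You used precisely this $u$-component in Case~3, where you wrote $\gamma(x\cdot y,u)=-c\,f(x,y)$, but dropped it in Case~2. This cancellation is the reason for the coefficient $-f(x,y)$ in the product. It is the ``only nontrivial cancellation'' the paper records, and Theorem~\ref{thm:nonisotropic-u-characterization} runs it backwards to force $\tau=-f$.

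Your ``repair'' paragraph omits the same term, so its claim is also wrong. That claim was that a nondegenerate $\gamma$ with $\gamma(u,u)\neq0$ and $\gamma(V,u)=0$ cannot exist. For example, $A_{34}^{\kappa}$ in the classification is exactly this construction, with $V=\operatorname{span}\{e_1,e_2\}$, $*=0$, $f(e_1,e_1)=f(e_2,e_2)=-1$ and $f(e_1,e_2)=0$.

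To finish the proof, do three things:
\begin{enumerate}
\item Correct Case~2 as above.
\item Note that the left-hand side of the cocycle identity is antisymmetric in its first two arguments. Hence the case $y=u$ follows from $x=u$, and every case with $x=y=u$ is trivial.
\item Check the case $x=z=u$ with $y\in V$. Every term vanishes there because $\gamma(V,u)=0$ and $u$ annihilates $\bar A$. The case $y=z=u$ then follows by antisymmetry.
\end{enumerate}
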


\begin{proof}
The form \(\gamma\) is nondegenerate, and \(u\in\operatorname{Ann}(\bar A)\)
by construction. Since \(u\) is an annihilator, the pre-Lie identity only needs to be checked on \(V\). For \(x,y,z\in V\), the \(V\)-component is
the pre-Lie identity of \((V,*)\), while the \(u\)-component is
\begin{equation*}
        -f(x*y,z)+f(x,y*z)+f(y*x,z)-f(y,x*z),
\end{equation*}
which is zero because \(f\) is a Hessian \(2\)-cocycle.

It remains to check the \(u\)-generalized \(2\)-cocycle identity. For triples in \(V\), it reduces to the Hessian \(2\)-cocycle identity for \(f\).
If one of the first two variables is \(u\), the only nontrivial cancellation
is
\begin{equation*}
        -\gamma(u,x\cdot z)-\gamma(u,u)\gamma(x,z)
        =
        -\gamma(u,x*z-f(x,z)u)-cf(x,z)=0 .
\end{equation*}
for any \(x,z\in V\). The case where the second variable is \(u\) is similar. If the third variable is \(u\), the identity follows from the symmetry of \(f\). The remaining cases are immediate. Hence \(\gamma\) is a nondegenerate \(u\)-generalized \(2\)-cocycle.
\end{proof}

\begin{thm}\label{thm:nonisotropic-u-characterization}
Let \((A,\cdot,\gamma,u)\) be a \(u\)-generalized Hessian pre-Lie algebra such that \(\gamma(u,u)\neq 0\). Then \((A,\cdot,\gamma,u)\) is obtained from the construction in Theorem~\ref{thm:nonisotropic-u-construction}.
\end{thm}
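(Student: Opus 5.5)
Set $c=\gamma(u,u)\neq0$ and $V=u^{\perp}=\{x\in A\mid \gamma(x,u)=0\}$. Since $\gamma$ is nondegenerate and $c\neq0$, we have $A=V\oplus\langle u\rangle$. The restriction $f:=\gamma|_{V\times V}$ is nondegenerate and symmetric, and $\gamma(V,u)=0$. This already reproduces the form $\gamma$ of Theorem~\ref{thm:nonisotropic-u-construction}. It remains to show that the product has the required shape and that $(V,*,f)$ is Hessian.

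First, I will determine the $u$-component of $x\cdot y$ for $x,y\in V$. Apply the $u$-generalized $2$-cocycle identity to the triple $(u,x,y)$ with $x,y\in V$. Because $u\in\operatorname{Ann}(A)$, we have $u\cdot x=x\cdot u=0$, and $\gamma(x,u)=0$. The identity therefore reduces to
\[
\gamma(u,x\cdot y)-\gamma(u,u)\gamma(x,y)=0,
\]
possibly up to an overall sign, and the signs will be checked carefully. So $\gamma(x\cdot y,u)=c\,f(x,y)$. This matches the sign convention of the construction, whose check gives $-\gamma(u,x\cdot z)-c f(x,z)=0$, i.e.\ $\gamma(u,x\cdot z)=-cf(x,z)$. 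Thus we may write $x\cdot y=x*y-f(x,y)u$, where $x*y\in V$ is the $V$-component. The sign is forced by the identity, and I expect it to agree with the construction, as that identity was verified there.

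Next, I will show that $(V,*)$ is pre-Lie. Since $\langle u\rangle$ is an ideal ($u$ is an annihilator), $(V,*)$ is isomorphic to the quotient pre-Lie algebra $A/\langle u\rangle$ via the projection along $u$. Concretely, $(x\cdot y)\cdot z=(x*y)*z-f(x*y,z)u$ because $u$ annihilates, so the $V$-components of the associators are exactly the associators of $*$.

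Then I will verify that $f$ is a Hessian $2$-cocycle on $(V,*)$. Apply the $u$-generalized cocycle identity to $x,y,z\in V$. The correction terms $\gamma(x,u)\gamma(y,z)$ and $\gamma(y,u)\gamma(x,z)$ vanish. In each term, replacing $\cdot$ by $*$ changes the product only by a multiple of $u$, which is $\gamma$-orthogonal to $V$. Hence the identity becomes exactly the classical Hessian cocycle identity for $f$ with respect to $*$.

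Finally, the map $V\oplus\langle u\rangle\to A$ given by the identity realizes $(A,\cdot,\gamma,u)$ as the construction of Theorem~\ref{thm:nonisotropic-u-construction} applied to $(V,*,f)$ and $c$. The only delicate point is the sign bookkeeping in the first step. Everything else is formal once $A=u^{\perp}\oplus\langle u\rangle$ is in place.
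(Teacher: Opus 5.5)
Your proposal follows the paper's proof essentially step for step. You split \(A=u^\perp\oplus\langle u\rangle\), read off the \(u\)-component of \(x\cdot y\) from the identity at \((u,x,y)\), obtain \((V,*)\) as the quotient \(A/\langle u\rangle\), and get the Hessian identity for \(f\) by restricting to triples in \(V\), where the \(u\)-corrections drop out because \(u\perp V\).

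The one thing to repair is the sign in the first step. It is a relative sign between the two surviving terms, not an overall sign, so ``possibly up to an overall sign'' does not cover it. Using \(u\cdot x=x\cdot u=u\cdot y=0\) and \(\gamma(x,u)=0\), the identity at \((u,x,y)\) reduces to \(-\gamma(u,x\cdot y)-\gamma(u,u)\gamma(x,y)=0\). The two surviving terms come from \(-\gamma(x,y\cdot z)\) and \(-\gamma(x,u)\gamma(y,z)\). Hence \(\gamma(x\cdot y,u)=-c\,f(x,y)\).

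Your intermediate claim \(\gamma(x\cdot y,u)=c\,f(x,y)\) is therefore false. It also contradicts the correct formula \(x\cdot y=x*y-f(x,y)u\) that you then adopt. Derive that formula directly from this one-line computation rather than by appeal to the construction's verification.
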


\begin{proof}
Set \(V=u^\perp=\{x\in A\mid \gamma(x,u)=0\}\). Since \(\gamma(u,u)\neq 0\), we have \(A=V\oplus\langle u\rangle\), and \(f:=\gamma|_{V\times V}\) is nondegenerate. For \(x,y\in V\), write
\begin{equation*}
        x\cdot y=x*y+\tau(x,y)u
\end{equation*}
with \(x*y\in V\). Since \(\langle u\rangle\) is an annihilator ideal, the
quotient by \(\langle u\rangle\) shows that \((V,*)\) is a left-symmetric
algebra.

Taking the \(u\)-generalized \(2\)-cocycle identity on \((u,x,z)\),
where \(x,z\in V\), gives
\begin{equation*}
        -\gamma(u,x\cdot z)-\gamma(u,u)\gamma(x,z)=0 .
\end{equation*}
Since \(x\cdot z=x*z+\tau(x,z)u\), we get \(\tau(x,z)=-f(x,z)\). Hence
\begin{equation*}
        x\cdot y=x*y-f(x,y)u
\end{equation*}
for all \(x,y\in V\).

Finally, applying the \(u\)-generalized \(2\)-cocycle identity to triples in \(V\) gives the Hessian \(2\)-cocycle identity for \(f\) on \((V,*)\). Therefore \((V,*,f)\) is a Hessian pre-Lie algebra, and the given algebra is obtained from Theorem~\ref{thm:nonisotropic-u-construction} with \(c=\gamma(u,u)\).
\end{proof}

\begin{thm}\label{thm:isotropic-u-construction}
Let \((A,*,\gamma)\) be a Hessian pre-Lie algebra. Fix a bilinear form \(\beta:A\times A\to F\), endomorphisms \(D,E\in\operatorname{End}(A)\), linear functionals \(\partial,\rho\in A^*\), an element \(b_0\in A\), and a
scalar \(\mu\in F\). Put \(\bar A=\langle u\rangle\oplus A\oplus \langle v\rangle\). Define a bilinear product \(\cdot\) on \(\bar A\) by
\begin{equation*}
\begin{aligned}
        u\cdot a &= a\cdot u=0, \quad \forall a\in\bar A,\\
        x\cdot y &= x*y+\beta(x,y)u, \quad \forall x,y\in A,\\
        v\cdot x &= D(x)+\partial(x)u, \quad \forall x\in A,\\
        x\cdot v &= E(x)+\rho(x)u, \quad \forall x\in A,\\
        v\cdot v &= b_0+\mu u+v .
\end{aligned}
\end{equation*}
Define a symmetric bilinear form \(\bar\gamma\) on \(\bar A\) by
\begin{equation*}
\begin{aligned}
        \bar\gamma(x,y) &= \gamma(x,y), \quad \forall x,y\in A,\\
        \bar\gamma(u,v) &= \bar\gamma(v,u)=1,
\end{aligned}
\end{equation*}
and by requiring all other pairings involving \(u\) or \(v\) to be zero.
Assume that the following identities hold for all \(x,y,z\in A\):
\begin{align}
&\gamma(D(x),z)-\beta(x,z)-\gamma(E(x),z)+\gamma(x,D(z))
-\gamma(x,z)=0, \tag{G1}\label{cond:G1}\\
&\partial(x)-2\rho(x)+\gamma(x,b_0)=0, \tag{G2}\label{cond:G2}\\
&\beta(x,y)-\beta(y,x)-\gamma(x,E(y))+\gamma(y,E(x))=0, \tag{G3}\label{cond:G3}\\
&\beta(x*y,z)-\beta(x,y*z)-\beta(y*x,z)+\beta(y,x*z)=0, \tag{L1}\label{cond:L1}\\
&E(x*y)-x*E(y)-E(y*x)+y*E(x)=0, \tag{L2}\label{cond:L2}\\
&\rho(x*y)-\beta(x,E(y))-\rho(y*x)+\beta(y,E(x))=0, \tag{L3}\label{cond:L3}\\
&D(x)*z-D(x*z)-E(x)*z+x*D(z)=0, \tag{L4}\label{cond:L4}\\
&\beta(D(x),z)-\partial(x*z)-\beta(E(x),z)+\beta(x,D(z))=0, \tag{L5}\label{cond:L5}\\
&E(D(x))-D(E(x))-E^2(x)+x*b_0+E(x)=0, \tag{L6}\label{cond:L6}\\
&\rho(D(x))-\partial(E(x))-\rho(E(x))+\beta(x,b_0)+\rho(x)=0. \tag{L7}\label{cond:L7}
\end{align}
Then \((\bar A,\cdot,\bar\gamma,u)\) is a \(u\)-generalized Hessian pre-Lie algebra.
\end{thm}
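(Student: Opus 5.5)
The verification splits into three independent tasks, each done by direct computation on the decomposition $\bar A=\langle u\rangle\oplus A\oplus\langle v\rangle$. First, nondegeneracy and symmetry of $\bar\gamma$ and $u\in\operatorname{Ann}(\bar A)$ are immediate: $\bar\gamma$ is block diagonal with blocks $\gamma$ and the hyperbolic plane on $\langle u,v\rangle$, and $u\cdot a=a\cdot u=0$ by definition. Second, we check the pre-Lie identity on $\bar A$. Third, we check the $u$-generalized $2$-cocycle identity. Both identities are multilinear, so each only needs to be tested on basis-type triples drawn from $\{u\}\cup A\cup\{v\}$. Since $u$ annihilates everything, any triple containing $u$ in the pre-Lie identity is trivial.

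For the pre-Lie identity, write $\mathrm{as}(a,b,c)-\mathrm{as}(b,a,c)$ and expand it for the triple types $(x,y,z)$, $(x,y,v)$, $(v,x,z)$, $(x,v,z)$ (equivalent to the previous one by the antisymmetry in the first two slots), $(v,x,v)$, $(v,v,x)$ and $(v,v,v)$. The last one vanishes trivially, and $(x,v,v)$ is equivalent to $(v,x,v)$. In each case we separate the $A$-, $u$- and $v$-components.

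The type $(x,y,z)$ gives the pre-Lie identity of $*$ in the $A$-component and \eqref{cond:L1} in the $u$-component.

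The type $(x,y,v)$ gives \eqref{cond:L2} and \eqref{cond:L3}.

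The type $(v,x,z)$ gives \eqref{cond:L4} and \eqref{cond:L5}.

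The type $(v,x,v)$ gives \eqref{cond:L6} and \eqref{cond:L7}. Here the term $x\cdot(v\cdot v)=x*b_0+\beta(x,b_0)u+E(x)+\rho(x)u$ accounts for the $x*b_0$, $E(x)$, $\beta(x,b_0)$ and $\rho(x)$ terms.

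The type $(v,v,x)$ is symmetric in the first two slots and vanishes trivially.

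The $v$-components never appear except through $v\cdot v$, and they cancel in that case.

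For the cocycle identity, set $C(a,b,c)=\bar\gamma(a\cdot b,c)-\bar\gamma(a,b\cdot c)-\bar\gamma(b\cdot a,c)+\bar\gamma(b,a\cdot c)-\bar\gamma(a,u)\bar\gamma(b,c)+\bar\gamma(b,u)\bar\gamma(a,c)$. The only nonzero value of $\bar\gamma(\cdot,u)$ is $\bar\gamma(v,u)=1$. Note also that pairing with $v$ extracts the $u$-coefficient, and pairing with $u$ extracts the $v$-coefficient. $C$ is antisymmetric in $a,b$, so it suffices to treat the following cases.

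For $(x,y,z)$ we get the Hessian cocycle identity of $\gamma$.

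For $(x,y,v)$ we get $\beta(x,y)-\gamma(x,E(y))-\beta(y,x)+\gamma(y,E(x))$, which vanishes by \eqref{cond:G3}.

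For $(v,x,z)$, the correction term $-\bar\gamma(v,u)\gamma(x,z)$ enters, and the expression reduces to \eqref{cond:G1}.

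For $(v,x,v)$ we get $\partial(x)-\rho(x)-\rho(x)+\gamma(x,b_0)$, up to the $u$-terms (which vanish since $\bar\gamma(x,u)=0$); this vanishes by \eqref{cond:G2}.

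Triples containing $u$ reduce, using $u\in\operatorname{Ann}$, to terms like $-\bar\gamma(u,x\cdot c)$ and $\bar\gamma(u,\cdot)$-corrections. These involve only $v$-coefficients of products, which appear only in $v\cdot v$. For $(v,u,c)$ we get $\bar\gamma(u,v\cdot c)-\bar\gamma(v,u)\bar\gamma(u,c)$, which vanishes for $c\in A$. For $c=v$ it is $\bar\gamma(u,v\cdot v)-1=1-1=0$; this is exactly why the coefficient of $v$ in $v\cdot v$ is $1$. The $(u,x,c)$ cases vanish directly.

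The main obstacle is bookkeeping rather than any conceptual step. One must make sure that every mixed triple has been enumerated, and that the signs in conditions \eqref{cond:G1}--\eqref{cond:L7} match the expansion exactly. The most delicate cases are the $(v,x,v)$ pre-Lie check and the $(v,x,z)$ cocycle check, where several structure maps interact at once. We carry these out carefully and state the remaining cases briefly.
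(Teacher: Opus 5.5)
Your proposal is correct and is essentially the paper's own argument: a direct case-by-case check on $\bar A=\langle u\rangle\oplus A\oplus\langle v\rangle$. In the cocycle identity, the cases $(x,y,v)$, $(v,x,z)$, $(v,x,v)$ give $(\mathrm{G3})$, $(\mathrm{G1})$, $(\mathrm{G2})$, and the $u$-cases reduce to the $v$-coefficient of $v\cdot v$ being $1$; in the pre-Lie identity, the cases $(x,y,z)$, $(x,y,v)$, $(v,x,z)$, $(v,x,v)$ give $(\mathrm{L1})$, $(\mathrm{L2})$--$(\mathrm{L3})$, $(\mathrm{L4})$--$(\mathrm{L5})$, $(\mathrm{L6})$--$(\mathrm{L7})$, and your expansions (which are more explicit than the paper's) check out term by term.
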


\begin{proof}
Clearly, the form \(\bar\gamma\) is nondegenerate and \(u\in\operatorname{Ann}(\bar A)\). We first check the \(u\)-generalized \(2\)-cocycle identity. For triples in \(A\), it reduces to the Hessian \(2\)-cocycle identity for \(\gamma\). The cases involving \(u\) are automatic, except that the case \((u,v,v)\) uses precisely the fact that the \(v\)-component of \(v\cdot v\) is \(1\). The remaining nontrivial cases \((v,x,z)\), \((v,x,v)\), and \((x,y,v)\) are respectively equivalent to \((\mathrm{G1})\), \((\mathrm{G2})\), and \((\mathrm{G3})\).

It remains to verify the pre-Lie identity. Since \(u\) is an annihilator, it suffices to consider variables in \(A\cup\{v\}\). The cases \((x,y,z)\), \((x,y,v)\), \((v,x,z)\), and \((v,x,v)\) are respectively equivalent to \((\mathrm{L1})\), to \((\mathrm{L2})\) and \((\mathrm{L3})\), to \((\mathrm{L4})\) and \((\mathrm{L5})\), and to \((\mathrm{L6})\) and \((\mathrm{L7})\). The cases with the first two variables both equal to \(v\) are automatic. Hence \((\bar A,\cdot)\) is a pre-Lie algebra, and \(\bar\gamma\) is a nondegenerate \(u\)-generalized \(2\)-cocycle.
\end{proof}

The conditions \eqref{cond:G1}--\eqref{cond:L7} in Theorem~\ref{thm:isotropic-u-construction} are rather involved. In fact, under suitable assumptions, all these conditions are automatically
satisfied.

\begin{cor}\label{cor:isotropic-u-simple-extension}
Let \((A,*,\gamma)\) be a Hessian pre-Lie algebra, and let \(D\in\operatorname{Der}(A,*)\) satisfy $\gamma(D(x),y)+\gamma(x,D(y))=0$
for all \(x,y\in A\). Put \(\bar A=\langle u\rangle\oplus A\oplus \langle v\rangle\). Define a bilinear product \(\cdot\) on \(\bar A\) by
\begin{equation*}
\begin{aligned}
        u\cdot a &= a\cdot u=0, \quad \forall a\in\bar A,\\
        x\cdot y &= x*y-\gamma(x,y)u, \quad \forall x,y\in A,\\
        v\cdot x &= D(x), \quad \forall x\in A,\\
        x\cdot v &=0, \quad \forall x\in A,\\
        v\cdot v &=v .
\end{aligned}
\end{equation*}
Define a symmetric bilinear form \(\bar\gamma\) on \(\bar A\) by
\begin{equation*}
\begin{aligned}
        \bar\gamma(x,y) &= \gamma(x,y), \quad \forall x,y\in A,\\
        \bar\gamma(u,v) &= \bar\gamma(v,u)=1,
\end{aligned}
\end{equation*}
and by requiring all other pairings involving \(u\) or \(v\) to be zero. Then
\((\bar A,\cdot,\bar\gamma,u)\) is a \(u\)-generalized Hessian pre-Lie algebra.
\end{cor}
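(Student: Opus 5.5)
The plan is to obtain the statement as the special case of Theorem~\ref{thm:isotropic-u-construction} in which most of the extension data vanish. Concretely, I take
\begin{equation*}
\beta=-\gamma,\qquad E=0,\qquad \partial=0,\qquad \rho=0,\qquad b_0=0,\qquad \mu=0,
\end{equation*}
and keep the given $D$. With these choices the product and the form $\bar\gamma$ of Theorem~\ref{thm:isotropic-u-construction} coincide exactly with those in the corollary: $x\cdot y=x*y-\gamma(x,y)u$, $v\cdot x=D(x)$, $x\cdot v=0$, and $v\cdot v=v$. It therefore suffices to check that \eqref{cond:G1}--\eqref{cond:L7} hold.

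I will go through the conditions in order.
\begin{itemize}
\item[] \eqref{cond:G1} becomes $\gamma(D(x),z)+\gamma(x,z)+\gamma(x,D(z))-\gamma(x,z)$. This vanishes because $D$ is skew-symmetric with respect to $\gamma$.
\item[] \eqref{cond:G2} is trivially $0=0$.
\item[] \eqref{cond:G3} reduces to $-\gamma(x,y)+\gamma(y,x)=0$, which holds by symmetry of $\gamma$.
\item[] \eqref{cond:L1} is $-1$ times the Hessian $2$-cocycle identity
\begin{equation*}
\gamma(x*y,z)-\gamma(x,y*z)-\gamma(y*x,z)+\gamma(y,x*z)=0
\end{equation*}
of $(A,*,\gamma)$, so it holds. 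This is the case $u=0$ of the $u$-generalized $2$-cocycle identity, matching the convention used in Theorem~\ref{thm:nonisotropic-u-construction}.
\item[] \eqref{cond:L2}, \eqref{cond:L3}, \eqref{cond:L6} and \eqref{cond:L7} vanish term by term, since $E$, $\rho$, $\partial$ and $b_0$ are all zero; note in particular $\beta(x,0)=0$ and $x*0=0$.
\item[] \eqref{cond:L4} reads $D(x)*z-D(x*z)+x*D(z)=0$, which is exactly the statement $D\in\operatorname{Der}(A,*)$.
\item[] \eqref{cond:L5} reads $-\gamma(D(x),z)-\gamma(x,D(z))=0$, which is again the skew-symmetry of $D$.
\end{itemize}

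Having verified all ten identities, Theorem~\ref{thm:isotropic-u-construction} yields that $(\bar A,\cdot,\bar\gamma,u)$ is a $u$-generalized Hessian pre-Lie algebra. There is no genuine obstacle; the only point requiring care is matching the sign convention of the Hessian $2$-cocycle in \eqref{cond:L1} with the one used for $(A,*,\gamma)$, and confirming that the choice $\beta=-\gamma$, rather than $+\gamma$, is the one that makes \eqref{cond:G1} cancel.
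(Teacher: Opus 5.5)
Your proposal is correct and follows the paper's intended route: the paper gives the corollary no separate proof and presents it as the specialization of Theorem~\ref{thm:isotropic-u-construction} with $\beta=-\gamma$, $E=0$, $\partial=\rho=0$, $b_0=0$, $\mu=0$. Your term-by-term check of \eqref{cond:G1}--\eqref{cond:L7} is accurate: skew-symmetry of $D$ gives (G1) and (L5), the derivation property gives (L4), the Hessian cocycle gives (L1), symmetry of $\gamma$ gives (G3), and the remaining conditions vanish identically.
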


\begin{thm}\label{thm:isotropic-u-characterization}
Let \((\bar A,\cdot,\bar\gamma,u)\) be a \(u\)-generalized Hessian pre-Lie algebra with \(0\neq u\) and \(\bar\gamma(u,u)=0\). Then \((\bar A,\cdot,\bar\gamma,u)\) is obtained from the construction in Theorem~\ref{thm:isotropic-u-construction}.
\end{thm}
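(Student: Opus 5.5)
The plan is to reverse-engineer the double-extension data by choosing a suitable isotropic partner of \(u\) and reading off the structure constants. Since \(\bar\gamma\) is nondegenerate and \(\bar\gamma(u,u)=0\), there is \(v_0\in\bar A\) with \(\bar\gamma(u,v_0)=1\). We replace it by \(v'=v_0-\tfrac12\bar\gamma(v_0,v_0)u\), so that \(\bar\gamma(v',v')=0\) and \(\bar\gamma(u,v')=1\). The plane \(\langle u,v'\rangle\) is hyperbolic, hence nondegenerate. Put \(A:=\langle u,v'\rangle^\perp\). Then \(\bar A=\langle u\rangle\oplus A\oplus\langle v'\rangle\), \(\gamma:=\bar\gamma|_{A\times A}\) is nondegenerate, and \(\bar\gamma\) has exactly the shape required in Theorem~\ref{thm:isotropic-u-construction}.

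Next we pin down the normalization \(v\cdot v=b_0+\mu u+v\). First apply the \(u\)-generalized \(2\)-cocycle identity to the triple \((u,v',v')\). Since \(u\in\operatorname{Ann}(\bar A)\), all product terms with \(u\) in a slot vanish except \(\bar\gamma(u,v'\cdot v')\). The identity should therefore reduce to
\[
-\bar\gamma(u,v'\cdot v')-\bar\gamma(u,u)\bar\gamma(v',v')+\bar\gamma(v',u)\bar\gamma(u,v')=0 .
\]
This gives \(\bar\gamma(u,v'\cdot v')=1\), i.e.\ the \(v'\)-component of \(v'\cdot v'\) equals \(1\); this is exactly the remark in the proof of Theorem~\ref{thm:isotropic-u-construction} about the case \((u,v,v)\). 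Applying the identity to \((u,x,z)\) with \(x\in A\) gives \(\bar\gamma(u,x\cdot z)=0\) or a similar relation. From this I expect the \(v'\)-components of \(x\cdot y\), \(x\cdot v'\) and \(v'\cdot x\) to vanish for \(x,y\in A\). For instance, \((u,x,y)\) yields \(-\bar\gamma(u,x\cdot y)+\bar\gamma(x,u)\bar\gamma(u,y)=0\), hence \(\bar\gamma(u,x\cdot y)=0\). The triples \((u,x,v')\) and \((u,v',x)\) handle the products involving \(v'\). Set \(v:=v'\).

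With these facts in hand, we decompose every product in \(\bar A=\langle u\rangle\oplus A\oplus\langle v\rangle\) and define the data by the components:
\begin{itemize}
\item \(x\cdot y=x*y+\beta(x,y)u\) defines \(*\) and \(\beta\);
\item \(v\cdot x=D(x)+\partial(x)u\) defines \(D\) and \(\partial\);
\item \(x\cdot v=E(x)+\rho(x)u\) defines \(E\) and \(\rho\);
\item \(v\cdot v=b_0+\mu u+v\) defines \(b_0\) and \(\mu\).
\end{itemize}
The vanishing of the \(v\)-components established above is what makes this a legitimate decomposition. Then \((A,*)\) is identified with the quotient of the subalgebra \(\langle u\rangle\oplus A\) by the annihilator ideal \(\langle u\rangle\). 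We still need \(\langle u\rangle\oplus A\) to be closed under the product, which follows from the vanishing of the \(v\)-component of \(A\cdot A\). It follows that \(*\) is pre-Lie.

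Restricting the \(u\)-generalized cocycle identity to triples in \(A\) kills all \(\gamma(\cdot,u)\) terms, since \(A\perp u\), and gives the Hessian \(2\)-cocycle identity for \(\gamma\) on \((A,*)\). So \((A,*,\gamma)\) is a Hessian pre-Lie algebra. Finally, the conditions (G1)--(G3) and (L1)--(L7) are necessary: the proof of Theorem~\ref{thm:isotropic-u-construction} shows that each is equivalent to a component of either the pre-Lie identity or the cocycle identity on a specific triple, and these identities hold in \(\bar A\).

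The main obstacle is the careful bookkeeping of which triples force the vanishing of the \(v\)-components. The places to check carefully are \(v\cdot x\) and \(x\cdot v\), where the cocycle identity with a \(u\) in the first two slots involves \(\bar\gamma(x,u)=0\) and \(\bar\gamma(v,u)=1\). If the triple \((u,v,x)\) yields only a relation rather than a vanishing, I would combine it with the pre-Lie identity on \((v,x,u)\)-type triples or with \((x,v,u)\). The cocycle identity with \(z=u\) gives \(\bar\gamma([a,b],u)=0\), which should suffice to force the remaining component to zero.
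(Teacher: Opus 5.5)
Your proposal is correct and follows essentially the same route as the paper. You choose \(v\) with \(\bar\gamma(u,v)=1\) and \(\bar\gamma(v,v)=0\), set \(A=\langle u,v\rangle^\perp\), and use the cocycle identity with \(u\) in the first slot to kill the \(v\)-components and force the \(v\)-component of \(v\cdot v\) to equal \(1\); you then read off \((*,\beta,D,E,\partial,\rho,b_0,\mu)\) together with (G1)--(G3) and (L1)--(L7). The hedge in your last paragraph is unnecessary, because the triple \((u,v,x)\) already gives \(\bar\gamma(u,v\cdot x)=\bar\gamma(v,u)\bar\gamma(u,x)=0\) for \(x\in A\).
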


\begin{proof}
Since \(\bar\gamma\) is nondegenerate and \(\bar\gamma(u,u)=0\), choose \(v\in\bar A\) such that
\begin{equation*}
        \bar\gamma(u,v)=1,
        \qquad
        \bar\gamma(v,v)=0 .
\end{equation*}
Set $A=(\langle u\rangle\oplus \langle v\rangle)^\perp .$ Then $\bar A=\langle u\rangle\oplus A\oplus \langle v\rangle,$ and \(\gamma:=\bar\gamma|_{A\times A}\) is nondegenerate.

The \(u\)-generalized \(2\)-cocycle identity with one variable equal to
\(u\) implies that, for \(x,y\in A\), the products \(x\cdot y\), \(v\cdot x\), and \(x\cdot v\) have no \(v\)-component, while \(v\cdot v\) has
\(v\)-component equal to \(1\). Hence the multiplication has the form
\begin{equation*}
\begin{aligned}
        x\cdot y &= x*y+\beta(x,y)u, \quad \forall x,y\in A,\\
        v\cdot x &= D(x)+\partial(x)u, \quad \forall x\in A,\\
        x\cdot v &= E(x)+\rho(x)u, \quad \forall x\in A,\\
        v\cdot v &= b_0+\mu u+v .
\end{aligned}
\end{equation*}
Here \(\beta:A\times A\to F\), \(D,E\in\operatorname{End}(A)\),
\(\partial,\rho\in A^*\), \(b_0\in A\), and \(\mu\in F\).

Taking all three variables in \(A\), the pre-Lie identity gives that \((A,*)\) is a pre-Lie algebra, and the \(u\)-generalized \(2\)-cocycle identity gives the Hessian \(2\)-cocycle identity for \(\gamma\). Hence \((A,*,\gamma)\) is a Hessian pre-Lie algebra. Expanding the remaining
\(u\)-generalized \(2\)-cocycle identities and the left-symmetric identity gives exactly \((\mathrm{G1})\)--\((\mathrm{G3})\) and \((\mathrm{L1})\)--\((\mathrm{L7})\). Thus the given algebra is obtained from Theorem~\ref{thm:isotropic-u-construction}.
\end{proof}

\subsection{Three-Dimensional Classification}
\label{subsec:classification-3d-GHLSA}

In this subsection, we classify three-dimensional \(u\)-generalized Hessian
pre-Lie algebras over \(\mathbb C\) in the non-trivial case \(u\ne0\). We use Bai's classification of three-dimensional complex pre-Lie algebras \cite{Bai2009} as the list of underlying pre-Lie algebras.

\begin{defn}
\label{def:GHLSA-isomorphism}
Let \((A_1,\cdot_1,\gamma_1,u_1)\) and \((A_2,\cdot_2,\gamma_2,u_2)\) be \(u\)-generalized Hessian pre-Lie algebras. They are called isomorphic if there exists a pre-Lie algebra isomorphism \(\varphi:A_1\to A_2\) such that
\begin{equation*}
        \varphi(u_1)=u_2,
        \qquad
        \gamma_1(x,y)=\gamma_2(\varphi(x),\varphi(y)),
        \quad \forall x,y\in A_1 .
\end{equation*}
\end{defn}

In the following list, all omitted entries of \(\gamma\) are zero. In the
commutative associative branch, the multiplication is commutative and only one representative of each symmetric product is displayed. In the non-abelian
branch, all omitted products are zero. Unless an unordered exchange is explicitly stated, two algebras in the same numbered family are isomorphic if
and only if their displayed parameters are equal. Algebras belonging to
different numbered families are not isomorphic.

\begin{thm}
\label{thm:classification-3d-GHLSA}
Every three-dimensional generalized Hessian pre-Lie algebra \((A,\cdot,\gamma,u)\) over \(\mathbb C\) with \(u\ne0\) is isomorphic to
exactly one of the following algebras.

\medskip
\noindent\textbf{I. Commutative associative underlying algebras.}

\begin{enumerate}
\item[\textnormal{(1)}] \(A_{33}^{\kappa}\):
\begin{equation*}
        e_1^2=e_2,\qquad e_1e_2=e_3,
\end{equation*}
\begin{equation*}
        u=e_3,\qquad
        \gamma_{12}=-1,\qquad
        \gamma_{33}=\kappa,\qquad
        \kappa\in\mathbb C^* .
\end{equation*}
Moreover, \(A_{33}^{\kappa}\cong A_{33}^{\kappa'}\) if and only if \(\kappa=\kappa'\).

\item[\textnormal{(2)}] \(A_{34}^{\kappa}\):
\begin{equation*}
        e_1^2=e_3,\qquad e_2^2=e_3,
\end{equation*}
\begin{equation*}
        u=e_3,\qquad
        \gamma_{11}=\gamma_{22}=-1,\qquad
        \gamma_{33}=\kappa,\qquad
        \kappa\in\mathbb C^* .
\end{equation*}
Moreover, \(A_{34}^{\kappa}\cong A_{34}^{\kappa'}\) if and only if \(\kappa=\kappa'\).

\item[\textnormal{(3)}]
\((\mathbb C[\varepsilon]/\varepsilon^2)\oplus0\), type I:
\begin{equation*}
        e_1^2=e_1,\qquad e_1e_2=e_2,
\end{equation*}
\begin{equation*}
        u=e_3,\qquad
        \gamma_{11}=\alpha,\qquad
        \gamma_{13}=1,\qquad
        \gamma_{22}=1,\qquad
        \alpha\in\mathbb C .
\end{equation*}
Moreover, \(\alpha\) is a complete invariant.

\item[\textnormal{(4)}]
\((\mathbb C[\varepsilon]/\varepsilon^2)\oplus0\), type II:
\begin{equation*}
        e_1^2=e_1,\qquad e_1e_2=e_2,
\end{equation*}
\begin{equation*}
        u=e_3,
\end{equation*}
\begin{equation*}
        \gamma_{11}=\frac{c(c-1)}{f},\quad
        \gamma_{12}=\frac{c-1}{f},\quad
        \gamma_{13}=c,\quad
        \gamma_{22}=\frac1f,\quad
        \gamma_{23}=1,\quad
        \gamma_{33}=f,
\end{equation*}
where \(c\in\mathbb C\) and \(f\in\mathbb C^*\). Moreover, \((c,f)\) is a
complete invariant.

\item[\textnormal{(5)}]
\(\mathbb C\oplus\mathbb C\oplus0\), \(\gamma(u,u)=0\) type:
\begin{equation*}
        e_1^2=e_1,\qquad e_2^2=e_2,
\end{equation*}
\begin{equation*}
        u=e_3,\qquad
        \gamma_{11}=\alpha,\qquad
        \gamma_{12}=\beta,\qquad
        \gamma_{13}=1,\qquad
        \gamma_{22}=-\beta,
\end{equation*}
where \(\alpha\in\mathbb C\) and \(\beta\in\mathbb C^*\). Moreover, \((\alpha,\beta)\) is a complete invariant.

\item[\textnormal{(6)}]
\(\mathbb C\oplus\mathbb C\oplus0\), \(\gamma(u,u)\ne0\) type:
\begin{equation*}
        e_1^2=e_1,\qquad e_2^2=e_2,
\end{equation*}
\begin{equation*}
        u=e_3,
\end{equation*}
\begin{equation*}
        \gamma_{11}=\frac{c(c-1)}{f},\quad
        \gamma_{12}=\frac{cd}{f},\quad
        \gamma_{13}=c,\quad
        \gamma_{22}=\frac{d(d-1)}{f},\quad
        \gamma_{23}=d,\quad
        \gamma_{33}=f,
\end{equation*}
where \(c,d,f\in\mathbb C^*\). Moreover, two such algebras are isomorphic if
and only if
\begin{equation*}
        (c,d,f)=(c',d',f')
        \quad\text{or}\quad
        (c,d,f)=(d',c',f').
\end{equation*}

\item[\textnormal{(7)}] \(\mathbb C\oplus A_{22}\), type I:
\begin{equation*}
        e_1^2=e_1,\qquad e_2^2=e_3,
\end{equation*}
\begin{equation*}
        u=e_3,\qquad
        \gamma_{11}=\alpha,\qquad
        \gamma_{13}=c,\qquad
        \gamma_{22}=-1,\qquad
        \gamma_{33}=\frac{c(c-1)}{\alpha},
\end{equation*}
where \(\alpha,c\in\mathbb C^*\). Moreover, \((\alpha,c)\) is a complete invariant.

\item[\textnormal{(8)}] \(\mathbb C\oplus A_{22}\), type II:
\begin{equation*}
        e_1^2=e_1,\qquad e_2^2=e_3,
\end{equation*}
\begin{equation*}
        u=e_3,\qquad
        \gamma_{13}=1,\qquad
        \gamma_{22}=-1,\qquad
        \gamma_{33}=f,\qquad
        f\in\mathbb C .
\end{equation*}
Moreover, \(f\) is a complete invariant.
\end{enumerate}

\medskip
\noindent\textbf{II. Non-abelian underlying algebras.}

\begin{enumerate}
\item[\textnormal{(9)}] \(N1_{2,\kappa}\):
\begin{equation*}
        e_3e_2=e_2,\qquad e_3^2=2e_3,
\end{equation*}
\begin{equation*}
        u=2e_1,\qquad
        \gamma_{13}=1,\qquad
        \gamma_{22}=1,\qquad
        \gamma_{33}=\kappa,\qquad
        \kappa\in\mathbb C .
\end{equation*}
Moreover, \(\kappa\) is a complete invariant.

\item[\textnormal{(10)}] \(N20_{\alpha,d}\):
\begin{equation*}
        e_2^2=e_3,\qquad e_3e_2=e_2,\qquad e_3^2=2e_3,
\end{equation*}
\begin{equation*}
        u=-(\alpha d)^{-1}e_1,
\end{equation*}
\begin{equation*}
        \gamma_{11}=\alpha,\qquad
        \gamma_{13}=1,\qquad
        \gamma_{22}=d,\qquad
        \gamma_{33}=\alpha^{-1}+2d,
\end{equation*}
where \(\alpha,d\in\mathbb C^*\). Moreover, \((\alpha,d)\) is a complete
invariant.

\item[\textnormal{(11)}] \(N21_{\alpha,\beta}\):
\begin{equation*}
        e_2e_3=e_1,\qquad
        e_3e_2=e_2+e_1,\qquad
        e_3^2=-e_3,
\end{equation*}
\begin{equation*}
        u=-e_1,
\end{equation*}
\begin{equation*}
        \gamma_{11}=\alpha,\qquad
        \gamma_{13}=\beta,\qquad
        \gamma_{23}=1,\qquad
        \gamma_{33}=\frac{\beta(\beta-1)}{\alpha},
\end{equation*}
where \(\alpha\in\mathbb C^*\) and \(\beta\in\mathbb C\). Moreover, \((\alpha,\beta)\) is a complete invariant.

\item[\textnormal{(12)}] \(N44_{\alpha,\beta}\):
\begin{equation*}
        e_2^2=e_1,\qquad
        e_2e_3=-2e_2,\qquad
        e_3e_2=-e_2,\qquad
        e_3^2=-2e_3,
\end{equation*}
\begin{equation*}
        u=-e_1,
\end{equation*}
\begin{equation*}
        \gamma_{11}=\alpha,\qquad
        \gamma_{13}=\beta,\qquad
        \gamma_{22}=1,\qquad
        \gamma_{33}=\frac{\beta(\beta-2)}{\alpha},
\end{equation*}
where \(\alpha,\beta\in\mathbb C^*\). Moreover, \((\alpha,\beta)\) is a complete invariant.

\item[\textnormal{(13)}] \(N44_{0,\kappa}\):
\begin{equation*}
        e_2^2=e_1,\qquad
        e_2e_3=-2e_2,\qquad
        e_3e_2=-e_2,\qquad
        e_3^2=-2e_3,
\end{equation*}
\begin{equation*}
        u=-e_1,\qquad
        \gamma_{13}=2,\qquad
        \gamma_{22}=1,\qquad
        \gamma_{33}=\kappa,\qquad
        \kappa\in\mathbb C .
\end{equation*}
Moreover, \(\kappa\) is a complete invariant.
\end{enumerate}
\end{thm}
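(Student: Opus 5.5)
The plan is to split into the two structural cases of Section~4.1 rather than running through Bai's list blindly. By Theorem~\ref{thm:nonisotropic-u-characterization} and Theorem~\ref{thm:isotropic-u-characterization}, every three-dimensional \(u\)-generalized Hessian pre-Lie algebra with \(u\neq0\) comes from one of two constructions. If \(\gamma(u,u)\neq0\), it is a one-dimensional annihilator extension of a two-dimensional Hessian pre-Lie algebra \((V,*,f)\) with \(c=\gamma(u,u)\) (Theorem~\ref{thm:nonisotropic-u-construction}). If \(\gamma(u,u)=0\), it is a double extension \(\langle u\rangle\oplus A\oplus\langle v\rangle\) of a one-dimensional Hessian pre-Lie algebra \((A,*,\gamma)\) (Theorem~\ref{thm:isotropic-u-construction}). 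The one-dimensional case is completely explicit: \(A=\langle e\rangle\) with \(e*e=te\), \(t\in\{0,1\}\) up to scaling, and \(\gamma(e,e)\neq0\). For the two-dimensional case I would use the known classification of two-dimensional complex Hessian pre-Lie algebras, or else directly solve the Hessian cocycle condition on each two-dimensional pre-Lie algebra from Bai's list.

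In the non-isotropic case, I would list all \((V,*,f)\) up to isomorphism and scaling of \(f\), then form \(x\cdot y=x*y-f(x,y)u\) with \(\gamma(u,u)=c\). The next step is to identify the resulting three-dimensional pre-Lie algebra in Bai's classification \cite{Bai2009} by an explicit change of basis. This is where the commutative families \(A_{33},A_{34}\), the \(\gamma_{33}=f\neq0\) families (4), (6), (7), and the non-abelian ones with \(\gamma_{11}\neq0\) in the \(u=e_1\) normalisations should appear. In the isotropic case, \(\dim A=1\) makes conditions (G1)--(L7) a small polynomial system in the scalars \(\beta,D,E,\partial,\rho,b_0,\mu\). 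I would solve this system and then use the remaining freedom, namely \(v\mapsto v+a e+bu\) subject to keeping \(\bar\gamma(v,v)=0\) and \(\bar\gamma(v,u)=1\), together with rescaling of \(e\), to normalise. The expected outcome is the \(\gamma(u,u)=0\) families (3), (5), (8), (9), (13). Throughout, the family labels and normal forms are read off from Bai's list by matching structure constants.

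The second half is the isomorphism problem within and between families, using Definition~\ref{def:GHLSA-isomorphism}. The underlying pre-Lie algebra, together with whether \(\gamma(u,u)=0\), separates families, since \(\gamma(u,u)\) is an invariant. For fixed underlying algebra \(B\), I would compute the stabiliser of the line through \(u\) in \(\operatorname{Aut}(B)\). Isomorphisms must fix \(u\) exactly, not just its line. I would then act on the space of admissible nondegenerate \(\gamma\), computed once as a linear-plus-quadratic condition, by \(\gamma\mapsto\gamma\circ(\varphi\times\varphi)\). The normal forms should then be orbit representatives. Completeness of the invariants should follow because \(\operatorname{Aut}(B)\cap\operatorname{Stab}(u)\) is small: typically unipotent or finite, for example the swap \(e_1\leftrightarrow e_2\) in \(\mathbb C\oplus\mathbb C\oplus0\), which explains the unordered exchange \((c,d)\leftrightarrow(d,c)\) in (6). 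In the idempotent cases, the quantities \(\gamma(e_i,u)\) and \(\gamma(u,u)\) for canonical idempotents \(e_i\) then serve as direct invariants.

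I expect the main obstacle to be the bookkeeping of matching each extension to the correct algebra in Bai's list, and of proving that no two displayed normal forms are isomorphic. The non-abelian families (10)--(12) carry nontrivial parameter constraints, such as \(\gamma_{33}=\beta(\beta-1)/\alpha\), which encode the cocycle condition with \(u\). My first attempt there would be to fix \(u\) and compute \(\operatorname{Aut}(B)\) explicitly as a matrix group, then check that its action cannot change \(\gamma_{11}=\gamma(u,u)\)-type scalars or \(\gamma(u,\cdot)\) evaluated on canonical elements. Those scalars should give the claimed invariants directly.
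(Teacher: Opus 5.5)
Your route differs from the paper's. The paper works entirely in dimension three. It runs through Bai's list of three-dimensional pre-Lie algebras with $\operatorname{Ann}(A)\neq0$, normalises $u$ under $\operatorname{Aut}(A)$, and solves the generalized $2$-cocycle system with $\det(\gamma_{ij})\neq0$ for each pair $(A,u)$. It then quotients by $\operatorname{Aut}(A,u)$. You instead start from Theorems~\ref{thm:nonisotropic-u-characterization} and~\ref{thm:isotropic-u-characterization}. Completeness then comes from the structure theory, and Bai's list is needed only to name the algebras you produce. In the isotropic case with $\dim A=1$, conditions (G1)--(L7) reduce to a few scalar equations, which is much less work than the paper's case-by-case search. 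The reduction is sound, but two steps as you wrote them would fail.

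First, you must not classify $(V,*,f)$ up to scaling of $f$. An isomorphism fixes $u$ and preserves $\gamma$, so it maps $u^\perp$ onto $u^\perp$. Comparing the $V$- and $u$-components of $\varphi(x\cdot y)=\varphi(x)\cdot\varphi(y)$ shows that it restricts to an isomorphism of the Hessian algebras $(V,*,f)$. Conversely, every Hessian isomorphism extends by $u\mapsto u$. So the non-isotropic classes are in exact bijection with pairs consisting of a two-dimensional Hessian pre-Lie algebra up to isomorphism and a scalar $\gamma(u,u)\in\mathbb C^*$. Replacing $f$ by $\lambda f$ gives the same underlying algebra but with distinguished element $\lambda u$ and a different $\gamma$, so it is not a symmetry. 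Quotienting by it loses a parameter. For example, on $V=\mathbb C\oplus\mathbb C$ the Hessian forms are $\operatorname{diag}(p,q)$ modulo the swap. Family~(6) with displayed parameters $(c,d,f)$ is the extension with $(p,q)=(-c/f,-d/f)$ and $\gamma(u,u)=f$, which needs all three numbers. The bijection also gives you the isomorphism classification in this branch without computing any $\operatorname{Aut}(B,u)$.

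Second, the underlying algebra together with whether $\gamma(u,u)=0$ does not separate the families, and your predicted split is wrong.
\begin{itemize}
\item The non-isotropic branch gives (1), (2), (4), (6), (10)--(12), together with (7) for $c\neq1$ and (8) for $f\neq0$. The last of these comes from $V=\mathbb C\oplus0$ with form $\operatorname{diag}(p,-1)$ and $1+p\,\gamma(u,u)=0$.
\item The isotropic branch gives (3), (5), (9), (13), together with (7) for $c=1$ and (8) for $f=0$.
\end{itemize}
Thus (7) and (8) sit on the same algebra $\mathbb C\oplus A_{22}$ with $u=e_3$, and each meets both branches. Your proposed invariants $\gamma(e_1,u)$ and $\gamma(u,u)$ take the same values $1$ and $0$ on (7) with $c=1$ and on (8) with $f=0$. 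What separates them is $\gamma(e_1,e_1)$ for the unique nonzero idempotent $e_1$. This element is fixed by $\operatorname{Aut}(B,u)=\{e_1\mapsto e_1,\ e_2\mapsto\pm e_2+re_3,\ e_3\mapsto e_3\}$, and $\gamma(e_1,e_1)$ equals $\alpha\neq0$ in (7) and $0$ in (8). With these two corrections your plan is a workable, and more economical, strategy.
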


\begin{proof}
We start from Bai's classification \cite{Bai2009} of three-dimensional
complex pre-Lie algebras. Since \(u\ne0\) and \(u\in\operatorname{Ann}(A)\), only those algebras with nonzero annihilator need to be considered. For each such algebra, we choose representatives of the nonzero elements in \(\operatorname{Ann}(A)\) under the action of \(\operatorname{Aut}(A)\).

Fix a representative pair \((A,u)\). Write a general symmetric bilinear form
as \(\gamma(e_i,e_j)=\gamma_{ij}\). The generalized Hessian condition is equivalent to the finite system obtained by substituting all triples \((e_i,e_j,e_k)\) into the generalized \(2\)-cocycle identity, together with the condition \(\det(\gamma_{ij})\ne0\). Solving these systems gives exactly the forms displayed in the theorem. Direct substitution verifies that every displayed form satisfies the generalized \(2\)-cocycle identity, and the listed restrictions on the parameters are exactly those ensuring nondegeneracy.

It remains to determine isomorphisms. By Definition~\ref{def:GHLSA-isomorphism}, an isomorphism must be a pre-Lie algebra isomorphism preserving \(u\) and pulling back one bilinear form to the other. Thus, for each fixed pair \((A,u)\), the relevant group is
\begin{equation*}
        \operatorname{Aut}(A,u)
        =
        \{\varphi\in\operatorname{Aut}(A)\mid \varphi(u)=u\}.
\end{equation*}
The action of this group on the symmetric matrix \((\gamma_{ij})\) gives the
parameter identifications stated in the list. The only nontrivial permutation
appearing in the nonzero \(u\) classification is the exchange of the two idempotent factors in \(\mathbb C\oplus\mathbb C\oplus0\), which gives the
unordered equivalence in \((6)\).

All remaining three-dimensional pre-Lie algebras in Bai's list either have \(\operatorname{Ann}(A)=0\), or the finite system above has no nondegenerate solution with \(u\ne0\), or the resulting structure is isomorphic to one of the displayed families. Hence the list is complete and irredundant.
\end{proof}

Finally, we apply the above classification to symmetric solutions of the
\(u\)-generalized \(S\)-equation on the pre-Lie algebra \(N1_2\) in
\cite{Bai2009}.

\begin{ex}
\label{ex:N1-type1-type2-S-equation}
Let \(A=\operatorname{span}\{u,e_2,e_3\}\) be the \(3\)-dimensional pre-Lie algebra with nonzero products
\begin{equation*}
        e_3\cdot e_2=e_2,\qquad e_3\cdot e_3=2e_3 .
\end{equation*}
Then \(A\) is non-abelian and \(\operatorname{Ann}(A)=\mathbb C u\). This is
the \(N1_2\)-type algebra in Bai's classification of three-dimensional
complex pre-Lie algebras. In the notation of Theorem~\ref{thm:classification-3d-GHLSA}, it corresponds to the family \(N1_{2,\kappa}\) after identifying \(e_1=\frac12u\), equivalently, the distinguished element in that family is \(u=2e_1\).

We first determine the symmetric solutions of the \(u\)-generalized \(S\)-equation on \(A\). Let
\begin{equation*}
\begin{aligned}
r={}&
a_{11}u\otimes u
+a_{12}(u\otimes e_2+e_2\otimes u)
+a_{13}(u\otimes e_3+e_3\otimes u)\\
&+a_{22}e_2\otimes e_2
+a_{23}(e_2\otimes e_3+e_3\otimes e_2)
+a_{33}e_3\otimes e_3
\in S^2(A).
\end{aligned}
\end{equation*}
Substitution into the \(u\)-generalized \(S\)-equation gives
\begin{equation*}
        a_{12}=0,\qquad
        a_{23}=0,\qquad
        a_{13}(2a_{13}-1)=0,
\end{equation*}
and
\begin{equation*}
        (2a_{13}-1)a_{33}=0,\qquad
        (1-2a_{13})a_{22}=0,\qquad
        a_{22}a_{33}=0 .
\end{equation*}
Hence every nonzero symmetric solution is of one of the following forms:
\begin{equation*}
        r_\rho=\rho\,u\otimes u,\qquad \rho\in\mathbb C^*,
\end{equation*}
or
\begin{equation}
\label{eq:N1-main-solution}
        r(A,B,C)
        =
        A\,u\otimes u
        +B\,e_2\otimes e_2
        +C\,e_3\otimes e_3
        +\frac12(u\otimes e_3+e_3\otimes u),
        \qquad BC=0 .
\end{equation}

The automorphisms of \(A\) fixing \(u\) are precisely
\begin{equation*}
        \psi_\lambda(u)=u,\qquad
        \psi_\lambda(e_2)=\lambda e_2,\qquad
        \psi_\lambda(e_3)=e_3,
        \qquad \lambda\in\mathbb C^* .
\end{equation*}
Indeed, if \(\psi(u)=u\), \(\psi(e_2)=a u+b e_2+c e_3\), and \(\psi(e_3)=d u+e e_2+f e_3\), then comparing the identities
\begin{equation*}
        e_2\cdot e_2=0,\qquad e_2\cdot e_3=0,\qquad
        e_3\cdot e_2=e_2,\qquad e_3\cdot e_3=2e_3
\end{equation*}
after applying \(\psi\) gives \(a=c=d=e=0\), \(f=1\), and \(b\in\mathbb C^*\). Therefore, up to automorphisms fixing \(u\), the family \eqref{eq:N1-main-solution} is represented by the following two normal forms:

If \(B\ne0\), then \(C=0\), and after choosing \(\lambda\in\mathbb C^*\) such that \(\lambda^2B=1\), we get
\begin{equation*}
        r_\mu^{\mathrm{nd}}
        =
        \mu\,u\otimes u
        +e_2\otimes e_2
        +\frac12(u\otimes e_3+e_3\otimes u),
        \qquad \mu\in\mathbb C .
\end{equation*}
If \(B=0\), then we get
\begin{equation*}
        r_{\mu,\nu}^{0}
        =
        \mu\,u\otimes u
        +\nu\,e_3\otimes e_3
        +\frac12(u\otimes e_3+e_3\otimes u),
        \qquad \mu,\nu\in\mathbb C .
\end{equation*}
Together with the tensors \(r_\rho\), these give all nonzero symmetric solutions up to automorphisms fixing \(u\).

\smallskip
\noindent\textbf{Type \(1\) solutions.}
The rank-one solutions
\begin{equation*}
        r_\rho=\rho\,u\otimes u,\qquad \rho\in\mathbb C^*,
\end{equation*}
are of type \(1\), since
\begin{equation*}
        \operatorname{Im}(r_\rho^\sharp)=\langle u\rangle .
\end{equation*}

The solutions \(r_\mu^{\mathrm{nd}}\) are also of type \(1\). Indeed, the
matrix of \((r_\mu^{\mathrm{nd}})^\sharp\) with respect to the basis \(\{u,e_2,e_3\}\) is
\begin{equation*}
        \begin{pmatrix}
        \mu & 0 & \frac12\\
        0 & 1 & 0\\
        \frac12 & 0 & 0
        \end{pmatrix},
\end{equation*}
whose determinant is \(-\frac14\). Hence \(r_\mu^{\mathrm{nd}}\) is nondegenerate and
\begin{equation*}
        \operatorname{Im}\bigl((r_\mu^{\mathrm{nd}})^\sharp\bigr)=A .
\end{equation*}

By Proposition~\ref{prop:generalized-Hessian-O-operator} and Theorem~\ref{thm:generalized-O-operator-S-equation}, the generalized Hessian
forms in Theorem~\ref{thm:classification-3d-GHLSA} give nondegenerate symmetric solutions of the \(u\)-generalized \(S\)-equation. Conversely, by
Proposition~\ref{thm:nondegenerate-S-solution-generalized-Hessian}, every
nondegenerate symmetric solution arises from a generalized Hessian form.
For the present algebra, the relevant generalized Hessian forms are
\begin{equation*}
        \gamma_\kappa(u,e_3)=2,\qquad
        \gamma_\kappa(e_2,e_2)=1,\qquad
        \gamma_\kappa(e_3,e_3)=\kappa,
        \qquad \kappa\in\mathbb C ,
\end{equation*}
with all other values zero. The inverse tensor of \(\gamma_\kappa\) is
\begin{equation*}
        -\frac{\kappa}{4}u\otimes u
        +e_2\otimes e_2
        +\frac12(u\otimes e_3+e_3\otimes u),
\end{equation*}
which is \(r_\mu^{\mathrm{nd}}\) with \(\mu=-\kappa/4\). Thus the nondegenerate type \(1\) solutions are exactly those obtained from the
generalized Hessian classification.

We now consider the remaining degenerate family \(r_{\mu,\nu}^{0}\) among
type \(1\) solutions. Its image contains \(u\) if and only if
\begin{equation*}
        \mu\nu-\frac14\neq0 .
\end{equation*}
Thus \(r_{\mu,\nu}^{0}\) is of type \(1\) if and only if \(\mu\nu\neq\frac14\). In this case
\begin{equation*}
        \operatorname{Im}\bigl((r_{\mu,\nu}^{0})^\sharp\bigr)
        =
        \operatorname{span}\{u,e_3\},
\end{equation*}
which is a pre-Lie subalgebra of \(A\).

\smallskip
\noindent\textbf{Type \(2\) solutions.}
The remaining solutions in the family \(r_{\mu,\nu}^{0}\) are precisely those
satisfying
\begin{equation*}
        \mu\nu=\frac14 .
\end{equation*}
In particular, \(\mu,\nu\in\mathbb C^*\). For such a solution, we have
\begin{equation*}
        \operatorname{Im}\bigl((r_{\mu,\nu}^{0})^\sharp\bigr)
        =
        \mathbb C\left(\mu u+\frac12 e_3\right),
        \qquad
        u\notin \operatorname{Im}\bigl((r_{\mu,\nu}^{0})^\sharp\bigr).
\end{equation*}
Therefore these are type \(2\) solutions in the sense of the preceding
definition.

Set \(v_\mu=\mu u+\frac12e_3\). Then
\begin{equation*}
        v_\mu\cdot v_\mu
        =
        \frac14(e_3\cdot e_3)
        =
        \frac12e_3 .
\end{equation*}
Since \(\mu\neq0\), we have \(\frac12e_3\notin\mathbb C v_\mu\). Thus, $\operatorname{Im}\bigl((r_{\mu,\nu}^{0})^\sharp\bigr)$ is not a pre-Lie subalgebra of \(A\). However,
\begin{equation*}
        \operatorname{Im}\bigl((r_{\mu,\nu}^{0})^\sharp\bigr)+\langle u\rangle
        =
        \operatorname{span}\{u,e_3\},
\end{equation*}
and \(\operatorname{span}\{u,e_3\}\) is a pre-Lie subalgebra, since
\(u\in\operatorname{Ann}(A)\) and \(e_3\cdot e_3=2e_3\). This is exactly the
phenomenon described in Proposition~\ref{prop:type2-LSA}.

Thus this example connects the preceding structural results with the explicit
classification on a fixed non-abelian three-dimensional left-symmetric
algebra: Theorem~\ref{thm:classification-3d-GHLSA}, together with Proposition~\ref{prop:generalized-Hessian-O-operator},
Theorem~\ref{thm:generalized-O-operator-S-equation}, and
Proposition~\ref{thm:nondegenerate-S-solution-generalized-Hessian}, gives the
nondegenerate type \(1\) solutions, while Proposition~\ref{prop:type2-LSA}
identifies the type \(2\) solutions through the behavior of
\(\operatorname{Im}(r^\sharp)\) and
\(\operatorname{Im}(r^\sharp)+\langle u\rangle\).
\end{ex}

\end{document}